\newtheorem{theorem}{Theorem}[section]
\newtheorem{lemma}[theorem]{Lemma}
\newtheorem{remark}[theorem]{Remark}
\newtheorem{definition}[theorem]{Definition}
\newtheorem*{rep@theorem}{\rep@title}
\newcommand{\newreptheorem}[2]{%
\newenvironment{rep#1}[1]{%
 \def\rep@title{#2 \ref{##1}}%
 \begin{rep@theorem}}%
 {\end{rep@theorem}}}
\renewcommand{\P}[1]{\mathbb{P}\left(#1 \right)} % probability
\newcommand{\E}[1]{\mathbb{E}\left[ #1 \right]} % expectation
\newcommand{\N}{\mathbb{N}} % natural numbers
\renewcommand{\S}{\mathbb{S}} % shattering coefficient
\definecolor{warningcol}{rgb}{.99,.1,.5}
\definecolor{todocol}{rgb}{.4,.4,.8}
\definecolor{sketchcol}{rgb}{.4,.4,.8}
\definecolor{outlinecol}{rgb}{.8,.4,.3}
\newcommand{\eps}{\varepsilon} % For Epsilon-Delta Proofs
\newcommand{\aln}[1]{\begin{align}  #1 \end{align}}
\newcommand{\alns}[1]{\begin{align*}  #1 \end{align*}}
\newcommand{\alnd}[1]{\begin{aligned}  #1 \end{aligned}}
\newcommand{\cas}[1]{\begin{cases}  #1 \end{cases}}
\newcommand{\pf}[1]{\begin{proof}  #1 \end{proof}}
\newcommand{\pbr}[1]{\left( #1 \right)} 
\newcommand{\sbr}[1]{\left[ #1\right]}
\newcommand{\cbr}[1]{\left\{ #1\right\}}
\newcommand{\height}[2]{\mathsf{height}_{#1}\pbr{#2}}
\newcommand{\delay}[2]{\mathsf{delay}\pbr{ b_{#1} \to b_{#2}}}
\newcommand{\MB}{\mathsf{MB}} 
\newcommand{\MBt}{\mathsf{MB}(t)} 
\newcommand{\FRSH}{\mathsf{FS}} 
\newcommand{\BRSH}{\mathsf{BS}}
\newcommand{\FRU}{\mathsf{FU}} 
\newcommand{\BRU}{\mathsf{BU}}
\newcommand{\A}[2]{a(#1,#2)}
\newcommand{\XDAG}{\mathsf{TG}} 
\newcommand{\sep}{\gamma}
\newcommand{\Ev}{\mathsf{E}}
\newcommand{\EA}{\mathsf{E}_{\mathsf{A}}} 
\newcommand{\EB}{\mathsf{E}_{\mathsf{B}}}
\renewcommand{\c}{\mathsf{c}}
\newcommand{\D}{\mathsf{D}} 
\newcommand{\Dc}{\mathsf{D}^{\c}}
\newcommand{\G}{\mathsf{G}} 
\newcommand{\Gc}{\mathsf{G}^{\c}} 
\newcommand{\sfH}{\mathsf{H}} 
\newcommand{\Hc}{\mathsf{H}^{\c}} 
\newcommand{\T}{\mathsf{T}} 
\newcommand{\W}{\mathsf{W}} 
\newcommand{\X}{\mathsf{X}} 
\newcommand{\Y}{\mathsf{Y}} 
\newcommand{\sfZ}{\mathsf{Z}}
\newcommand{\FS}[2]{ \FRSH_{#1}\pbr{ b_{#2}} }
\newcommand{\BS}[2]{ \BRSH_{#1}\pbr{ b_{#2}} }
\newcommand{\FU}[2]{ \FRU_{#1}\pbr{b_{#2}}}
\newcommand{\BU}[2]{ \BRU_{#1}\pbr{b_{#2}}}
\newcommand{\Njss}{ \mathsf{N}_{j}^{\pbr{\ss}} }
\newcommand{\Un}[3]{\mathsf{Unheard}_{#1}\pbr{b_{#2}^{#3}}}
\mathchardef\mhyphen="2D
\newcommand{\dash}{\mhyphen}
\DeclareRobustCommand{\cev}[1]{%
  \mathpalette\do@cev{#1}%
}
\newcommand{\do@cev}[2]{%
  \fix@cev{#1}{+}%
  \reflectbox{$\m@th#1\vec{\reflectbox{$\fix@cev{#1}{-}\m@th#1#2\fix@cev{#1}{+}$}}$}%
  \fix@cev{#1}{-}%
}
\newcommand{\fix@cev}[2]{%
  \ifx#1\displaystyle
    \mkern#23mu
  \else
    \ifx#1\textstyle
      \mkern#23mu
    \else
      \ifx#1\scriptstyle
        \mkern#22mu
      \else
        \mkern#22mu
      \fi
    \fi
  \fi
}
\newcommand{\Bf}[3]{\overrightarrow{\mathsf{B}}_{#1,#2}^{\pbr{#3}}} 
\newcommand{\Bb}[3]{\overleftarrow{\mathsf{B}}_{#1,#2}^{\pbr{#3}}}
\newcommand{\Bfc}[3]{ \sbr{\overrightarrow{\mathsf{B}}_{#1,#2} ^{\pbr{#3}}} ^{\mathsf{c}}} 
\newcommand{\Bbc}[3]{ \sbr{\overleftarrow{\mathsf{B}}_{#1,#2} ^{\pbr{#3}}} ^{\mathsf{c}}} 
\newcommand{\Mt}{M(t)} 
\newcommand{\ceil}[1]{\left\lceil #1 \right\rceil} 
\newcommand{\Chain}[2]{\mathcal{C}_{#1}(#2)}
\renewcommand{\ss}{\eta} 
\newcommand{\Nb}[2]{\overleftarrow{\mathsf{E}}_{#1}^{\pbr{#2}} } 
\newcommand{\Nf}[2]{\overrightarrow{\mathsf{E}}_{#1}^{\pbr{#2}} }
\begin{document}

\title{Blockchain Security when Messages are Lost}
\author{Taha Ameen, Suryanarayana Sankagiri, Bruce Hajek}
\affil{ \vspace{-0.1 in} University of Illinois at Urbana-Champaign}
\date{}
%\city{Champaign-Urbana}
%\state{IL}
%\country{USA}

% Headers
\markboth{Blockchain Security when Messages are Lost}{}

%  Make Title Area
\maketitle
%\IEEEpeerreviewmaketitle
%\thispagestyle{fancy}

\begin{abstract}
    Security analyses for consensus protocols in blockchain research have primarily focused on the synchronous model, where point-to-point communication delays are upper bounded by a known finite constant. These models are unrealistic in noisy settings, where messages may be lost (i.e. incur infinite delay). In this work, we study the impact of message losses on the security of the proof-of-work longest-chain protocol. We introduce a new communication model to capture the impact of message loss called the \( 0\dash\infty\) model, and derive a region of tolerable adversarial power under which the consensus protocol is secure. The guarantees are derived as a simple bound for the probability that a transaction violates desired security properties. Specifically, we show that this violation probability decays almost exponentially in the security parameter. Our approach involves constructing combinatorial objects from blocktrees, and identifying random variables associated with them that are amenable to analysis. This approach improves existing bounds and extends the known regime for tolerable adversarial threshold in settings where messages may be lost.    
\end{abstract}

\section{Introduction}
    Blockchain is the data structure used by peers (miners) in a peer-to-peer network to maintain a common ledger in a decentralized manner. The consistency of this ledger is ensured through consensus protocols such as the longest-chain protocol. Following this protocol, an honest miner groups transactions into a block and appends its block to the longest chain in its view, before broadcasting the new blockchain to all other peers. Further, the system may have adversarial users that deviate from the protocol arbitrarily. Despite adversarial users attempting to disrupt the system and peer-to-peer communication incurring message delays, it is desirable that the parties following the protocol agree on a consistent ledger. 

Blockchain security has been studied under various consensus protocols (see~\cite{Bano_2019, Garay_2020} for a survey). Of these, the longest-chain protocol is of great interest, due its heavy use in modern blockchain implementations. The longest-chain protocol has been modeled under various assumptions: for example, discrete time is used in~\cite{Gazi_2020, Blum_2020}, and continuous time dynamics is used in~\cite{Li_2021, Ren_2019, DKT_2020}. Further, the protocol has also been studied for a variety of leader election mechanisms in the consensus protocol. For instance,~\cite{Pass_2017B, Ren_2019, Garay_2020B} assume the proof-of-work mechanism, whereas~\cite{Pass_2017, Kiayias_2017, Fan_2017} assume a proof-of-stake mechanism. All these works establish security of the longest-chain protocol for the synchronous communication model, where communication delays are upper bounded by a known finite constant. A common theme among these results is that in the synchronous delay model, the longest-chain protocol is `secure' under sufficient honest representation, with high probability. 

In this work, we analyze the impact of message losses on the security of the longest-chain protocol following proof-of-work leader election, by introducing and analyzing an appropriate communication network model. We motivate this by reviewing some existing communication models in the literature and the known security guarantees associated with them.

\subsection{Related Work}

The underlying communication network can delay the successful delivery of peer-to-peer message broadcasts. Popular blockchains such as Bitcoin use the Internet as their communication network. Since this communication is subject to delay, it is natural to model the delays incurred by each block, and study the impact of delay on the security of the longest-chain protocol.

Let \(0 \leq i < j\). Let \(b_i\) represent the \(i\)-th mined honest block. Let \( \delay{i}{j}\) denote the time taken for block \(b_i\) to reach the miner of block \(b_j\), and let \(\beta\) represent the fraction of adversarial computational power in the system. Finally, let \(\lambda\) be the rate at which blocks are mined in the system. Various descriptions of \(\delay{i}{j}\) lead to different communication network models:

\paragraph{Instantaneous Model} The original white-paper by Satoshi Nakamoto \cite{Nakamoto_2008} assumes an ideal communication channel, i.e. \( \delay{i}{j} = 0 \). In this model, the longest-chain protocol is provably secure when the honest computational power in the system exceeds the adversarial computational power, i.e. when \( \beta < 1 - \beta\), or equivalently, when \( \beta < 1/2\).

\paragraph{Synchronous Model} The model assumes a deterministic delay for each block that is upper bounded by a known constant \(\Delta\), i.e., \( \delay{i}{j} \leq \Delta < \infty \). This delay effectively reduces the growth rate of the chain held by an honest user. Even so, it has been proved~\cite{DKT_2020, Gazi_2020} that the synchronous model is secure with high probability if and only if
\alns{ 
\beta < \frac{1}{1+\pbr{1-\beta}\lambda \Delta} \pbr{1 - \beta},
}
where \(\lambda\) is the total mining rate of the honest users.

\paragraph{Partially Synchronous Model}
The partially synchronous model assumes the existence of some unknown and adversarially chosen `Global Stabilization Time (\(\mathsf{GST}\))' such that the delays are unbounded before \(\mathsf{GST}\), but bounded after it~\cite{Dwork_1988}. Therefore, at any time \(t\), the delay satisfies \(\delay{i}{j} \leq \Delta + \max\pbr{0, \mathsf{GST}-t}\). If certain conditions are met, the partially synchronous model is known to be secure with high probability after the Global Stabilization Time~\cite{Neu_2021}.

\paragraph{Sleepy Model}
The sleepy model considers the setting where miners may either be online or offline, and their participation status may change during the execution of the protocol~\cite{Pass_2017}. Let \(h_i\) denote the miner of block \(b_i\). The incurred delay is thus
\alns{ 
\delay{i}{j} = \cas{ 0 & \text{\(h_j\) is awake when \(b_i\) is mined} \\ \infty & \text{\(h_j\) is asleep when \(b_i\) is mined}}.
}
Pass and Shi~\cite{Pass_2017} showed that consensus can be achieved in the sleepy model with high probability, if a majority of the awake miners at any point in time are honest.

\paragraph{Random Delay Model}
The random delay model assumes that the point-to-point delays are independent and identically distributed, i.e. \(\delay{i}{j} \sim \mathsf{X}\), where \( \mathsf{X}\) is some known distribution.  The longest-chain protocol is shown to be secure with high probability in the random delay model, if the delay distribution satisfies certain conditions and the adversarial representation in the system is below a certain threshold~\cite{Sankagiri_2020}. 

Except for the random delay model, none of the above models account for the possibility that point-to-point communication may incur infinite delay, i.e. messages may be lost at random. For instance, the sleepy model allows infinite delay for users that are offline, but does not account for noise in the communication process. In contrast, we introduce and analyze a new communication model to study the impact of lost messages on blockchain security.

\subsection{Contributions}

\paragraph{\(0 \dash \infty\) Model} We introduce the \(0\dash\infty\) model, where the delays are independent and identically distributed over the set \(\{0,\infty\}\). Specifically, for any \(i\), \(j \geq 0\) such that \(i < j\):
\alns{ 
\delay{i}{j} = \cas{ 0 & \text{with probability } 1-d \\ \infty & \text{with probability } d}.
}
This simple model postulates that a message sent point-to-point is either immediately received or permanently lost. This delay is independent for each user, and for each block. The modeling choice aligns with our objective of studying the effect of message losses.

We remark that the \(0 \dash \infty\) model is a special case of the i.i.d. random delay model introduced in~\cite{Sankagiri_2020}, which identifies a region of tolerable adversarial power as a function of the delay distribution. Specifically, if \(d\) is the probability of message loss and \(\beta\) is the fraction of computational power in the system that is adversarial, it is shown that the \(0\dash\infty\) model is secure with high probability when \( \beta < \frac{1-2d}{2\pbr{1-d}}\). However, this characterization is not tight for the \(0 \dash \infty\) model, and the analysis in~\cite{Sankagiri_2020} breaks down in the high-noise regime. For example, security of the model cannot be established when \(d > 1/2\), i.e. more than half the messages are lost on average.

It is reasonable to wonder if adversarial computational power can at all be tolerated in the high-noise regime, for instance, when almost all messages are lost. Our work answers this question in the affirmative, by expanding the known security threshold for the \(0 \dash \infty\) model. In particular, our sufficient condition for security is \( \frac{\beta}{1-\beta} < 1-d\). Figure~\ref{fig: Region} shows this improvement.

\begin{figure}[t]
    \centering
    \includegraphics[width = 0.6 \textwidth]{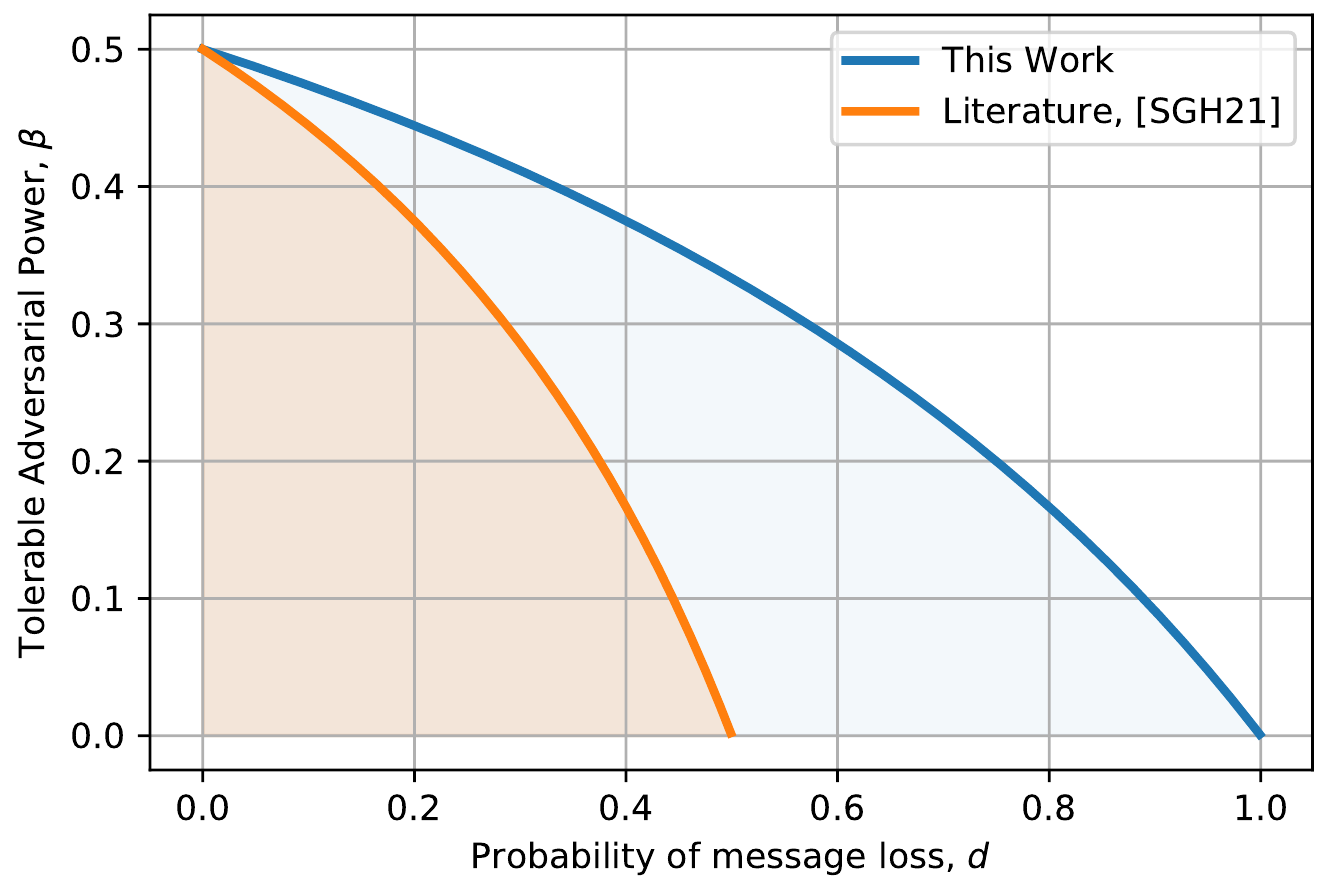}
    \caption{Characterizing the region of tolerable adversarial power}
    \label{fig: Region}
\end{figure}

Our method of analysis is significantly different from that in~\cite{Sankagiri_2020}: we introduce a transmission-graph that captures the history of communication delays between blocks, and identify special paths in the graph that are linked to random variables which are amenable to analysis. Specifically, we identify special objects such as forward-special and backward-special blocks, and associate with them random variables such as forward-unheard and backward-unheard. Our technique also presents a new approach to infer the inclusion of special blocks in the chain held by an honest user through the concept of user-unheard-criterion. The method of analysis is inspired from~\cite{DKT_2020}, where security of the synchronous model is established by considering races between honest and adversarial chains. However, our approach does not rely on message delays being finite, and we hope that the tools we introduce are of utility in the study of consensus mechanisms in more general settings, as well as of independent mathematical interest.
%We hope that these ideas are of independent interest to the reader, because they introduce tools that may be useful in the study of consensus mechanisms in more general settings.
Our contributions are summarized as:
\begin{itemize}
    \item We introduce the \(0\dash\infty\) model as a playground for studying the impact of message losses. This model provides a starting step for more complex models involving message losses.
    \item We introduce combinatorial objects of independent interest such as the transmission-graph. We also identify random variables (forward-unheard, backward-unheard) associated with this graph that are amenable to analysis, and introduce the user-unheard-criterion. These concepts may be utilized in security analysis of blockchain protocols in more general settings.  
    \item We prove that the longest-chain protocol is secure in the \(0\dash\infty\) model if certain conditions are met. These conditions are fairly general, and considerably extend the known threshold of tolerable adversarial power. In this regime, we show that the probability of security violation decays almost exponentially in the security parameter.
\end{itemize}

\section{System Model}
    In this section, we describe our system model. We consider the setting where infinitely many miners participate in the longest-chain protocol for an infinite duration, and use proof-of-work as the leader election mechanism. 

\paragraph{Ledgers, Transactions, Miners, and Blocks} Blockchain is the data structure at the heart of the decentralized mechanism to maintain and update a ledger. The ledger is simply an ordered list of transactions. Transactions are assumed to be available to all the miners as soon as they are made. Miners verify the validity of transactions, and update the ledger by grouping the transactions into blocks and linking blocks to form a blockchain. A block is an abstract data structure that contains a hash pointer to a parent block, a cryptographic signature of the block's miner, transactions, and other metadata. The first block in the system is called the genesis block.

\paragraph{Longest-Chain Protocol and Proof-of-Work} Miners follow the proof-of-work longest-chain protocol for consensus. Following this protocol, a miner groups any and all transactions that are not included in this longest chain into a block, and attempts to append the block to the longest chain in its view. To do so, it must solve a hash puzzle and include the solution as proof-of-work. If the miner is successful, it broadcasts its chain as a message to other miners over a peer-to-peer network, subject to a communication delay. Upon receiving this message, an honest miner adopts the new chain if the received chain is longer than the chain in the miner's memory. Ties are broken using any deterministic rule, for example, by choosing the chain that terminates in the block that hashes to a lower value. The process continues indefinitely. We assume there are infinitely many miners, and at any finite time, a miner who successfully solves the hash puzzle is doing so for the first time almost surely.

\paragraph{Parties in the Protocol} We refer to parties in the protocol as users. Users that contribute to modifying the ledger through appending blocks to the blockchain are called miners. A miner is either honest or corrupt. Honest miners follow the longest-chain protocol, whereas corrupt miners may deviate from the protocol. For simplicity, all corrupt users are grouped into a single entity called the adversary. This allows corrupt miners to communicate instantaneously, and captures the strong setting of perfect coordination between corrupt miners. The adversary can mine on any previously mined block, but its block must contain the proof-of-work to be valid. It can reveal its chain to any subset of honest users, and can delay its message by arbitrary amounts of time. It can also not include all the transactions it knows about that were not in ancestor blocks. We use \(\beta\) to represent the fraction of computational power in the system that is adversarial. 
%Without loss of generality, we refer to participants who never mine as \textit{observers}. Observers also adopt the longest chain in their view.

\paragraph{Mining Process} The mining process is abstracted as follows. Let \(\lambda\) denote the total mining rate of the system. We consider a continuous time model where blocks are mined as a Poisson process with rate \(\lambda\). Since \(\beta\) denotes the fraction of power that is adversarial and since successive mining instances are independent, adversarial block arrivals follow a Poisson process with rate \(\beta \lambda\).

\paragraph{Blockchains and Blocktrees} From any block, a unique sequence of blocks leading back to the genesis block can be identified via the hash pointers. We call this sequence a blockchain, or simply a chain. The convention is that the genesis block is the first block of a chain, and the terminating block is called the tip. At any given slot, honest users store a single chain in their memory. 

\paragraph{Communication Delays} We consider the setting where messages are either instantaneously delivered or permanently lost in an independent and identically distributed manner. Let \(\delay{i}{j}\) denote the delay incurred by block \(b_i\) to reach the miner of block \(b_j\). It is assumed that:
\alns{ 
\delay{i}{j} = \cas{ 0, &\text{ with probability } 1-d \\ \infty, & \text{ with probability } d}.
}
Here, \(d\) is the probability of message loss in an instance of point-to-point communication. 

\section{Main Result}
    In this section, we outline the desired security properties and present our main result. We define security on the level of transactions. It is desirable that a transaction eventually makes it to the ledger, and stays permanently at the same position in the ledger. This notion is formalized in Definition~\ref{def: Security}. 

\begin{definition}[Security] \label{def: Security}
Let \(\tau > 0\). Let \(\mathcal{H}\) be any set of honest users. For any \(h \in \mathcal{H}\), let \(\Chain{h}{t}\) denote the chain held by user \(h\) at time \(t\). We say that a transaction \(\mathsf{tx}\) made at some time \(s\) is \(\pbr{\tau,\mathcal{H}}\)-secure if for any \(h_1,h_2 \in \mathcal{H}\) and any \(s_1,s_2 > s+\tau\), it holds that \(\mathsf{tx}\) is included in a block \(b\) that is at the same position in \(\Chain{h_1}{s_1}\) and \(\Chain{h_2}{s_2}\).
\end{definition}

In the literature, security of a transaction is often defined as the confluence of \emph{persistence} and \emph{liveness}. A transaction satisfies liveness if it is eventually added to the ledger, and it satisfies persistence if it remains in the same position in the ledger for all future time, after a confirmation time. We remark that our definition of security implies these notions of persistence and liveness, and is consistent with existing definitions of security, such as in~\cite{Garay_2015, DKT_2020}. Specifically, if a transaction \(\mathsf{tx}\) satisfies \(\pbr{\tau,\mathcal{H}}\)-security, then it is part of the chain held by all users in \(\mathcal{H}\) before a confirmation time \(\tau\) time has elapsed. Furthermore, once this confirmation time elapses, the transaction remains at the same position in the ledger for all future time. Our main result shows that if certain conditions are satisfied, then any transaction \(\mathsf{tx}\) satisfies \(\pbr{\tau,\mathcal{H}}\) security except with a probability that decays almost exponentially in the confirmation time and scales linearly in the size of \(\mathcal{H}\). It is stated as Theorem~\ref{thm: Main}.

\begin{theorem}[Main Result] \label{thm: Main}
    Let \(\beta\) be the fraction of computational power in the system that is adversarial, and \(d\) be the probability of message loss. If 
\(\frac{\beta}{1-\beta} < \pbr{1-d}\), then for every \(\eps > 0\), there exist positive constants \(a\) and \(b\) such that for all \(\tau \geq 0\) and for any honest transaction \(\mathsf{tx}\) and any finite set of honest users \( \mathcal{H}\):
\alns{ 
\P{\mathsf{tx} \text{ violates \( \pbr{\tau, \mathcal{H}}\)-security}} \leq  
\exp\pbr{-a \tau^{1-\eps}} + |\mathcal{H}| \exp \pbr{- b \tau}.
}
\end{theorem}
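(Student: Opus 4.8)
The plan is to reduce $(\tau,\mathcal H)$-security to a statement about chains held by individual honest users at individual times, and then prove that statement by a race argument on the transmission-graph $\XDAG$. First I would observe that a transaction $\mathsf{tx}$ made at time $s$ is not $(\tau,\mathcal H)$-secure only if one of two things happens: either (i) some honest user $h\in\mathcal H$ at some time $s'>s+\tau$ holds a chain $\Chain{h}{s'}$ whose block at the relevant position is \emph{not} the block in which $\mathsf{tx}$ first gets buried in the ``stable'' honest chain, or (ii) the honest chains simply fail to grow enough past the position of $\mathsf{tx}$ by time $s+\tau$. Both are consequences of a single event: that the honest users never reach agreement on a block deep enough in the chain below the tip during the window $(s,s+\tau)$. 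So I would first formalize a ``common-prefix-type'' event $A_k$ = ``there exist two honest users $h_1,h_2\in\mathcal H$ and times $s_1,s_2>s+\tau$ such that $\Chain{h_1}{s_1}$ and $\Chain{h_2}{s_2}$ disagree at a position that is $\ge k$ below both tips,'' plus a ``chain-growth'' event $G$ = ``the honest chain fails to grow by $k$ blocks in time $\tau$,'' and show that the violation event is contained in $A_k \cup G$ for an appropriate $k = k(\tau)$ (taking $k$ to be a small power of $\tau$, to match the $\tau^{1-\eps}$ in the bound). The $|\mathcal H|\exp(-b\tau)$ term will come from $G$ by a standard Poisson/Chernoff estimate (each honest user's chain grows at a positive linear rate whenever the honest mining rate exceeds the loss-adjusted requirement, i.e. exactly in the regime $\beta/(1-\beta)<1-d$), union-bounded over $\mathcal H$.

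The heart of the argument is bounding $\P{A_k}$ by $\exp(-a\tau^{1-\eps})$, and this is where the transmission-graph machinery enters. If two honest chains disagree $k$ blocks below their tips, then there is a ``fork point'' block $b_i$ below which the chains agree, and above which there are two competing sub-chains each of length $\ge k$, at least one of which is not the one all honest users eventually build on. Following the $\mathsf{DKT}$-style race argument, I would identify within this window a collection of honest blocks that are ``loners'' in the appropriate sense — using the forward-special / backward-special classification and the user-unheard-criterion to certify that a special honest block, once mined, actually enters the chain of every honest user (this is the place where message loss is handled: a special block is heard by everyone along the relevant path with high probability, and the $\mathsf{Unheard}$ random variables quantify the failure). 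The key combinatorial claim is that on the event $A_k$ the number of such certified honest blocks in the window is at least linear in $k$, while the adversary, mining at rate $\beta\lambda$, can ``cover'' them only if its block count in the window exceeds that number; hence $A_k$ forces an imbalance between a Poisson($\beta\lambda$)-type count and a (sum of) count(s) dominating Poisson of strictly larger rate, and the condition $\beta/(1-\beta)<1-d$ is exactly what makes the honest effective rate beat the adversarial rate.

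The main obstacle — and the step I would budget the most care for — is the passage from ``honest block is special'' to ``special block is in every honest user's chain,'' i.e. controlling the $\mathsf{Unheard}$ random variables uniformly over the window and showing their contribution is negligible. Because a message is lost independently with probability $d$ per (block, recipient) pair, a single honest block is \emph{not} heard by a given user with constant probability $d$; the point is that the special-block construction chains together enough independent honest relays that the \emph{composite} probability of a special block failing to propagate is small, and moreover that the relevant ``unheard'' counts, summed over the $\Theta(k)$ candidate blocks, concentrate. I expect this to require a delicate definition of the special set (so that the relevant randomness is independent of, or at least conditionally controllable given, the mining process) together with a Borel–Cantelli / stochastic-domination step; it is presumably why the exponent in the bound is $\tau^{1-\eps}$ rather than $\tau$. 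The remaining ingredients — chain growth, the Chernoff bounds for the Poisson counts, and the union bound over $\mathcal H$ — are routine once the special-block propagation estimate is in hand. Finally I would choose $k = \lceil \tau^{1-\eps/2}\rceil$ (say) so that both the chain-growth slack and the race imbalance give the stated $\exp(-a\tau^{1-\eps})$ rate, and assemble the two terms.
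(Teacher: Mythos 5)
Your high-level plan (reduce security to a common-prefix event $A_k$ plus a chain-growth event $G$, then run a race argument on the transmission-graph) is a genuinely different route from the paper, which instead constructs $\eta$-Nakamoto blocks, shows they sit permanently in every longest chain of the main-blocktree and (via the user-unheard-criterion) in every user's chain, and then proves such a block appears soon after $\mathsf{tx}$ with probability $1-\exp(-a t^{1-\eps})$. But as written your reduction has a real gap: the containment ``violation $\subseteq A_k\cup G$'' is false without a chain-quality ingredient. In this model the adversary may mine blocks that exclude $\mathsf{tx}$, so all users could agree perfectly on a chain that grows by $k$ blocks, none of which (nor their ancestors) contain $\mathsf{tx}$ --- a liveness violation that triggers neither disagreement at depth $k$ nor a growth failure. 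The paper avoids this by routing everything through an honest ($\eta$-Nakamoto) block $b_J$ mined after $\mathsf{tx}$, which necessarily carries $\mathsf{tx}$ in itself or an ancestor; your decomposition needs an analogous honest-block-in-the-stable-prefix statement, and in the lossy setting that is not a routine add-on.

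The second, deeper gap is that the central race claim --- ``on $A_k$ there are $\Omega(k)$ certified honest blocks and the adversary must match them'' --- is asserted rather than argued, and it is exactly the hard part here. With loss probability $d$ per (block, recipient) pair, honest blocks alone sustain forks with constant probability per block, so a fork of depth $k$ does \emph{not} by itself force $\Omega(k)$ adversarial blocks; one must discount the honest side by the unheard corrections, which is precisely what the paper's $\mathsf{FS}/\mathsf{BS}$ sequences, Lemma~\ref{lem: MB-height-diff}, and the catch-up events $\Bf{j}{k}{\ss}$, $\Bb{i}{j}{\ss}$ are built to do, and why the threshold is $\ss(1-d)$ rather than $1$. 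Relatedly, you misplace two quantitative features: the exponent $\tau^{1-\eps}$ in the paper does not come from concentrating unheard counts but from the recursive interval-partitioning/bootstrap argument (Theorems~\ref{thm: Exp-sqrt-decay} and~\ref{thm: Bootstrap}) needed to show a Nakamoto-type block occurs in a window; and the $|\mathcal{H}|$ factor comes from the per-user propagation estimate (Theorem~\ref{thm: Probability-Observer-Criterion}), not from chain growth --- in your scheme $A_k$ quantifies over users in $\mathcal{H}$, so its bound would itself acquire $|\mathcal{H}|$-dependence, which your claimed first term $\exp(-a\tau^{1-\eps})$ does not accommodate.
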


The result states that under a certain threshold of tolerable adversarial power, the probability of security violation for any transaction and any finite set of users decays (almost) exponentially in the confirmation time. Hence, this violation probability can be made arbitrarily small by appropriately selecting the  confirmation time. The sufficient condition \( \frac{\beta}{1-\beta} < 1-d\) significantly improves the known threshold of tolerable adversarial power for the \(0\dash\infty\) model (Figure~\ref{fig: Region}). We also remark that our bound for the probability violation comprises of two terms, the latter of which scales linearly in \(|\mathcal{H}|\). This linear scaling is expected, because no single message is successfully transmitted to all users in the model. Therefore, requiring a larger set of users to permanently adopt a transaction in their ledger requires a larger waiting time.  

%\paragraph{Interpretation} 

\section{Definitions and Preliminaries}
    This section introduces key quantities that are used extensively in the analysis. In Section~\ref{SS: Graphs}, we introduce the combinatorial objects on which the analysis is performed, such as the main-blocktree and the transmission-graph. Properties of these objects are presented alongside to motivate their purpose. In Section~\ref{SS: Sequences}, key random variables that are amenable to analysis, and associated with special paths in the transmission-graph are identified. These quantities are illustrated through an example in Section~\ref{SS: Example}. In Section~\ref{SS: Nakamoto}, these random variables and are used to define `catch-up events', and the notion of `\(\ss\)-Nakamoto blocks'. Finally, Section~\ref{SS: Observer} introduces the `user-unheard-criterion', which is used to infer useful information about the blockchain held by a user.

\subsection{Graphs and Trees} \label{SS: Graphs}

Three combinatorial objects at the core of our analysis are the main-blocktree and the transmission-graph.

\paragraph{Main-blocktree}
Any block can be uniquely traced back to the first block in the system (called genesis block). The set of all blocks generated (mined) up till time \(t\) forms a directed tree, which we refer to as the main-blocktree and denote it by \(\MB(t)\). Here, \(\MB(t) = (V_t,E_t)\), where the vertex set \(V_t\) is the set of all blocks mined up till time \(t\) and the set of directed edges \(E_t\) comprises all parent-to-child block pairs. \(\MB(t)\) represents the global information about the system, and both honest and adversarial blocks are included in it. Figure~\ref{fig: MB} shows an example of the vicinity of the \(j\)-th honest block, \(b_j\) in some \(\MB(t)\). 

\begin{definition}[Heights]
Let \(b_j\) be the \(j\)-th honest block. The height of \(b_j\) in a blocktree is the length of the directed path (counting edges) from the genesis block to \(b_j\). We denote the height of \(b_j\) in the main-blocktree by \(\height{\MB}{b_j}\).
\end{definition}

\paragraph{Transmission-graph}

At any time \(t\), we associate with the main-blocktree \(\MBt\), a graph consisting of only honest blocks that we call the transmission-graph. The transmission-graph at time \(t\), denoted \(\XDAG(t)\) is a directed acyclic graph that represents the history of network delays among the honest miners. Here, \( \XDAG(t) = \pbr{V_t,E_t}\), where the vertex set \(V_t\) is the set of honestly mined blocks up till time \(t\). An edge between \(b_i\) and \(b_j\) is present if \(\delay{i}{j} = 0\). A useful observation about \(\XDAG(T)\) is presented in Lemma~\ref{lem: XDAG-path-lower-bound}.

\begin{figure}[t]
    \centering
    \subfigure[Main-blocktree]{
    \includegraphics[width = 0.9 \textwidth]{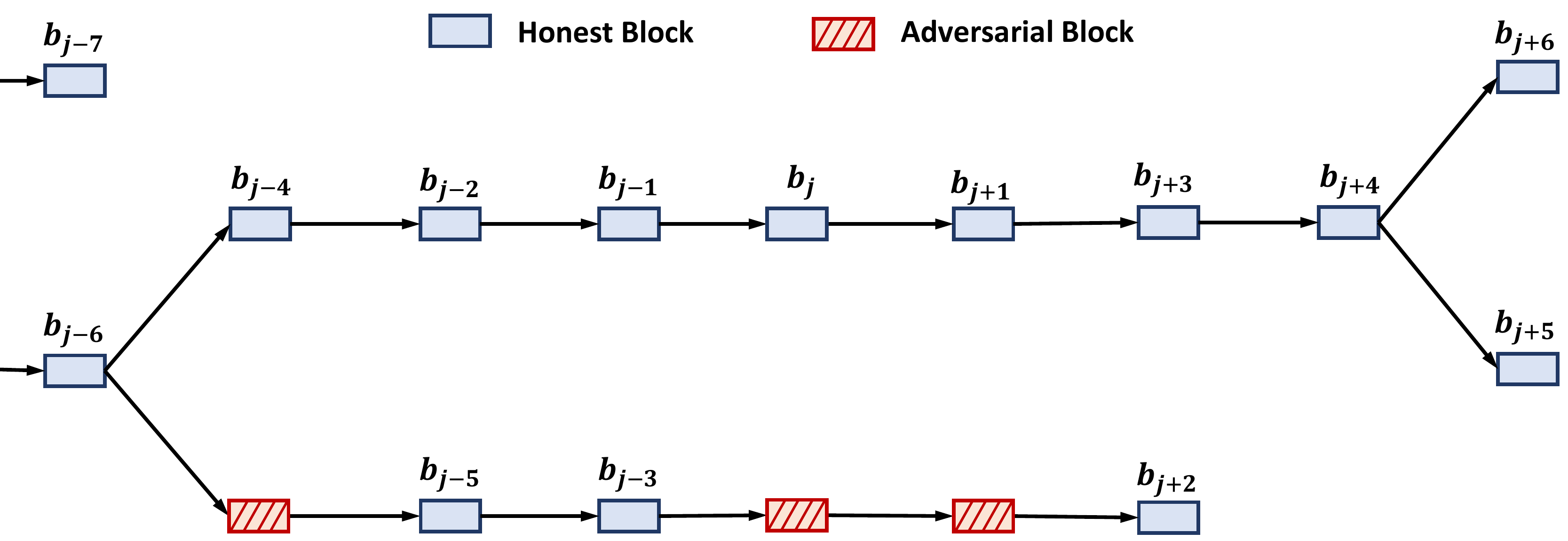}
    \label{fig: MB}
    }
    \subfigure[Transmission-graph]{
    \includegraphics[width = 0.9 \textwidth]{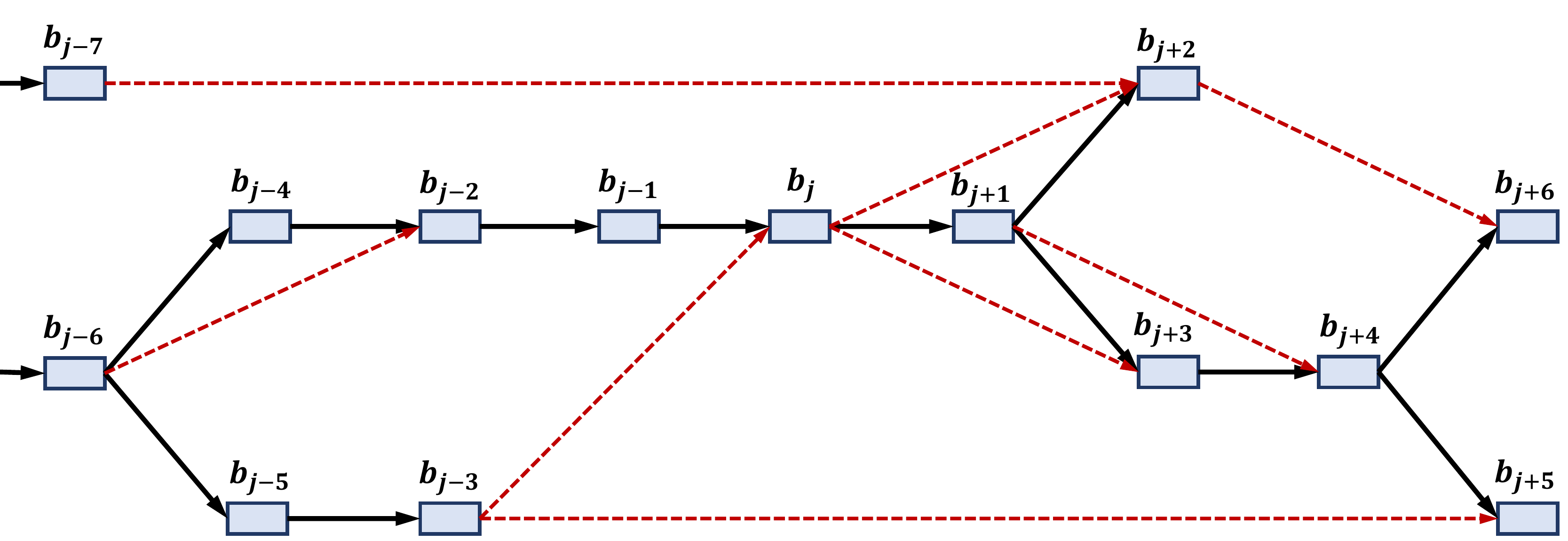}
    \label{fig: XDAG}
    }    
    \caption{Graphs and trees}
\end{figure}

\begin{lemma} \label{lem: XDAG-path-lower-bound}
Let \(b_i\) and \(b_k\) be the \(i\)-th and \(k\)-th honest blocks such that \(i < k\). At any time \(t\), suppose there exists a path \(\mathcal{A}_{i,k}\) of length \(n\) from \(b_i\) to \(b_k\) in \(\XDAG(t)\), i.e.
\[\mathcal{A}_{i,k}\colon b_i = v_0 - v_1 - \cdots - v_n = b_k.\]
Then,
\alns{ 
\height{\MB}{b_k} - \height{\MB}{b_i} &\geq n 
}
\end{lemma}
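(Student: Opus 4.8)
The plan is to argue by induction on the path length $n$, using the key fact that along an edge of the transmission-graph, the child block was mined by a miner who had heard the parent block, and hence mined at a height strictly greater than that of the parent in the main-blocktree. First I would establish the base case $n=1$: if $b_i - b_j$ is an edge of $\XDAG(t)$, then $\delay{i}{j} = 0$, so the miner of $b_j$ received $b_i$ instantaneously; since $b_i$ was mined before $b_j$ (because $i < j$), the miner of $b_j$ had $b_i$ in its view when it mined $b_j$. An honest miner appends to the longest chain in its view, which has length at least $\height{\MB}{b_i}$, so $b_j$ is placed at height at least $\height{\MB}{b_i} + 1$, giving $\height{\MB}{b_j} - \height{\MB}{b_i} \geq 1$.

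For the inductive step, suppose the claim holds for all transmission-graph paths of length $n-1$. Given the path $\mathcal{A}_{i,k}\colon b_i = v_0 - v_1 - \cdots - v_n = b_k$, consider the sub-path $v_0 - \cdots - v_{n-1}$ of length $n-1$ and the final edge $v_{n-1} - v_n$. There is a subtlety: the indices along the path need not be monotonic, so the edge $v_{n-1} - v_n$ might have $v_n$ mined before $v_{n-1}$. To handle this I would first argue that the heights are monotone along the path regardless of orientation. The cleanest way is to prove the per-edge statement $|\height{\MB}{v_{m}} - \height{\MB}{v_{m-1}}| \geq 1$ is not what we want; rather, I want to show that whichever of $v_{m-1}, v_m$ is mined later has strictly larger height than the one mined earlier, and then observe that along the directed path $b_i \to b_k$ the mining order is consistent with the path order — or, if it is not, one restricts attention to the fact that $b_k$ is reachable and uses that the relevant comparisons still telescope. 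Concretely: for each edge, the later-mined endpoint heard the earlier-mined endpoint (delay $0$), so its height exceeds the earlier one's by at least $1$; summing these strict increments along the chain of blocks in increasing mining-time order that connects $b_i$ to $b_k$ yields $\height{\MB}{b_k} - \height{\MB}{b_i} \geq n$, since there are $n$ edges each contributing at least $1$ and the heights form a monotone increasing sequence.

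The main obstacle I anticipate is precisely this bookkeeping around the orientation of edges and the (non-)monotonicity of mining times along the path, together with making rigorous the statement that an honest miner who has heard a block mines at strictly greater height. The latter requires invoking the longest-chain rule: the chain in the later miner's view containing $b_i$ (namely, the chain ending at $b_i$ together with its ancestors) has length $\height{\MB}{b_i}$ in the then-current blocktree, the miner adopts a chain at least this long, and appends one block, so the new block's height is at least $\height{\MB}{b_i} + 1$; heights of already-mined blocks never change as the blocktree grows, so this inequality persists in $\MB(t)$. Once the single-edge increment is nailed down, the induction is routine telescoping. I would also note that the lemma as stated only asserts the bound for $i < k$ with a path from $b_i$ to $b_k$, so I may assume the path is oriented consistently with the definition of $\XDAG(t)$'s edges and need not fuss excessively over pathological orderings.
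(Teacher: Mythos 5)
Your proof is correct and takes essentially the same approach as the paper's: both rest on the single-edge observation that a miner who has heard a block appends its own block at strictly greater height, and then telescope this increment along the path. The orientation concern you raise is moot, since every edge of \(\XDAG(t)\) is by definition directed from an earlier-mined block to a later-mined one, so mining times are automatically monotone along any directed path.
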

\pf{ 
If \(v_{i-1}\) and \(v_{i}\) are two blocks in \(\XDAG(t)\) such that there is an edge from \(v_{i-1}\) to \(v_i\), then the miner of block \(v_i\) has heard of block \(v_{i-1}\). Therefore, it adds its block at a height greater than that of \(v_{i-1}\), so we have 
\alns{ 
\height{\MB}{v_{i}} &\geq \height{\MB}{v_{i-1}} + 1
}
Repeatedly applying this inequality over the path from \(b_i\) to \(b_k\) yields the desired result.
}

\subsection{Special Sequences of Honest Blocks} \label{SS: Sequences}

Relative to the \(j\)-th honest block \(b_j\), we define sequences of special blocks that correspond to forward and backward paths in \(\XDAG(t)\). We also define notions of `forward unheard' and `backward unheard'.

\subsubsection{Forward Special Blocks} 
Relative to the \(j\)-th honest block \(b_j\), we define a sequence of `forward special (\(\FRSH\))' blocks as follows.

\begin{definition}[\(j \dash \FRSH\) Sequence]
Let \(j\geq 0\) and let \(b_j\) be the \(j\)-th honest block. The \(j\dash\FRSH\) sequence is a sequence of blocks \((b_{j}^{0},b_{j}^{1},b_{j}^{2},\cdots)\) such that \(b_{j}^{0} = b_j\), and for all \(k \geq 1\), \(b_{j}^{k}\) is the first block to hear of \(b_{j}^{k-1}\).
\end{definition}

\begin{figure}[t]
    \centering
    \includegraphics[width = 0.9 \textwidth]{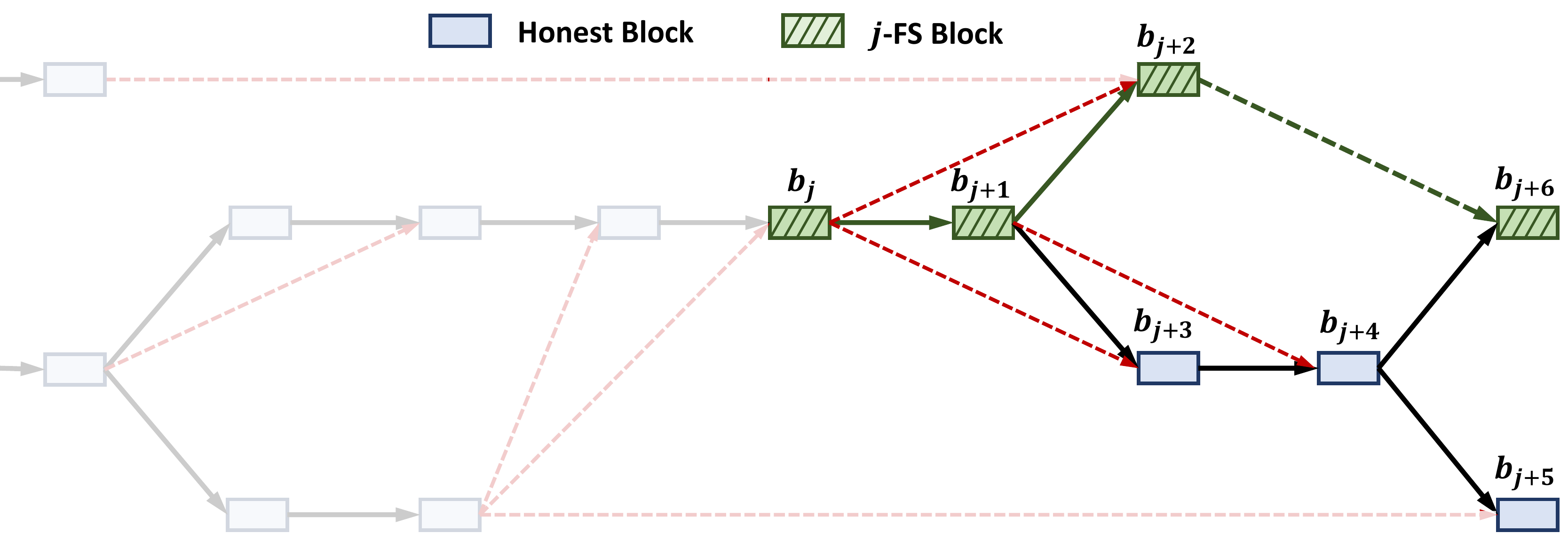}
    \caption{\(j\dash\FRSH\) Sequence}
    \label{fig: XDAG-FRSH}
\end{figure}

We refer to blocks in the \(j\dash\FRSH\) sequence as \(j\dash\FRSH \) blocks. For \(0\leq j < k\), denote by \( \FS{j}{k}\) the number of \( j\dash\FRSH\) blocks mined between \(b_j\) and \(b_{k}\) (inclusive). Note that \( \FS{j}{k-1} \geq 1\), because \(b_j\) is always a \( j\dash\FRSH\) block. The \(j\dash\FRSH\) sequence associated with the transmission-graph in Figure~\ref{fig: XDAG} is shown in Figure~\ref{fig: XDAG-FRSH}. Notice that \( \FS{j}{j+5} = 3\) and \(\FS{j}{j+6} = 4\). This example is explored in more detail in Section~\ref{SS: Example}.

\begin{remark} \label{rem: FS-Dist}
Let \(0 \leq j \leq k\). Let \(d\) be the probability of message loss. The random variable \( \FS{j}{k}\) has the same distribution as \(1 + \sum_{i=1}^{k-j} \mathsf{Be}_i\pbr{1-d}\), where \(\mathsf{Be}_i\pbr{1-d}\) are i.i.d. Bernoulli random variables with success probability equal to \(1-d\). This is because \(b_j\) is a \(j\dash\FRSH\) block, and every subsequent block is independently \(j\dash\FRSH\) with probability \(1-d\).
\end{remark}

\begin{lemma} \label{lem: FRSH-forms-path}
Let \(j \geq 0\). The \(j\dash\FRSH\) sequence \( ( b_{j}^{0} , b_{j}^{1} , b_{j}^{2} , \cdots) \) is a forward directed path in the transmission-graph. Further, if \(k > i \geq 0\), then the heights of the \(j\dash\FRSH\) blocks \(b_{j}^{i}\) and \(b_{j}^{k}\) satisfy:
\alns{ 
\height{\MB}{b_{j}^{k}} - \height{\MB}{b_{j}^{i}} \geq k-i
}
\end{lemma}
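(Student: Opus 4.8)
The plan is to establish the two claims of Lemma~\ref{lem: FRSH-forms-path} by leveraging the structural results already in hand. First I would show that the \(j\dash\FRSH\) sequence is a forward directed path in \(\XDAG(t)\). By definition, \(b_{j}^{k}\) is the \emph{first} block to hear of \(b_{j}^{k-1}\); in particular the miner of \(b_{j}^{k}\) hears of \(b_{j}^{k-1}\), which means \(\delay{j^{k-1}}{j^{k}} = 0\) (using the obvious index shorthand), so by the definition of the transmission-graph there is an edge from \(b_{j}^{k-1}\) to \(b_{j}^{k}\). It is also necessary to check that this edge is forward-directed, i.e. that the mining index of \(b_{j}^{k}\) exceeds that of \(b_{j}^{k-1}\); this holds because ``the first block to hear of \(b_{j}^{k-1}\)'' can only be a block mined at or after \(b_{j}^{k-1}\) — a block mined strictly before \(b_{j}^{k-1}\) cannot have heard of it. Concatenating these edges for \(k = 1, 2, \ldots\) exhibits the sequence as a forward directed path, proving the first assertion.

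For the height inequality, I would simply invoke Lemma~\ref{lem: XDAG-path-lower-bound}. Given \(k > i \geq 0\), the segment \(b_{j}^{i} - b_{j}^{i+1} - \cdots - b_{j}^{k}\) is a path of length \(k - i\) from \(b_{j}^{i}\) to \(b_{j}^{k}\) inside \(\XDAG(t)\) (a subpath of the forward directed path just constructed). Applying Lemma~\ref{lem: XDAG-path-lower-bound} with this path of length \(n = k - i\) yields
\[
\height{\MB}{b_{j}^{k}} - \height{\MB}{b_{j}^{i}} \geq k - i,
\]
which is exactly the second claim.

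The argument is short because the heavy lifting is done by Lemma~\ref{lem: XDAG-path-lower-bound}; the only genuine point requiring care — and the part I would treat most carefully — is the claim that consecutive \(\FRSH\) blocks are connected by a \emph{forward} edge, i.e. that $b_j^k$ is mined after $b_j^{k-1}$. One must make sure the ``first block to hear of $b_j^{k-1}$'' is well-defined and is indeed a later block: since $b_j^{k-1}$ itself hears of $b_j^{k-1}$ the moment it is mined, and no earlier block can hear of it, the first such block exists and has a strictly larger mining index (or is $b_j^{k-1}$ itself only in a degenerate reading, which the definition rules out by asking for the first block strictly after, or by the a.s. distinctness of mining times). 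A secondary subtlety is that Lemma~\ref{lem: XDAG-path-lower-bound} is stated for a path ``at any time \(t\)''; one should ensure that for the time \(t\) under consideration all of \(b_{j}^{i}, \ldots, b_{j}^{k}\) have already been mined, which is automatic once we restrict attention to times \(t\) large enough for \(b_{j}^{k}\) to exist.
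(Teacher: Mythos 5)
Your proposal is correct and follows essentially the same route as the paper: consecutive \(j\dash\FRSH\) blocks are joined by an edge of the transmission-graph (since each hears of its predecessor), and the height bound then follows by applying Lemma~\ref{lem: XDAG-path-lower-bound} to the resulting path; your additional remarks on forward-directedness and well-definedness simply spell out details the paper leaves implicit.
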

\pf{ 
Let \(k \geq 0\). Since the miner of \(b_{j}^{k+1}\) has heard of \(b_{j}^{k}\), there is an edge from \(b_{j}^{k}\) to \(b_{j}^{k+1}\). The conclusion follows from Lemma~\ref{lem: XDAG-path-lower-bound}.
}

\begin{definition}[Forward Unheard]
Let \( k \geq j \geq 0\). The Forward Unheard for block \(b_k\) with respect to block \(b_j\) is denoted \(\FU{j}{k}\) and defined as the number of consecutive \(j\dash\FRSH\) blocks that the miner of \(b_k\) has not heard of, going backwards along the \(j\dash\FRSH\) sequence from the last such block mined before \(b_k\). If the miner of \(b_k\) has not heard of any \(j\dash\FRSH\) block, then we toss independent biased coins (with failure probability equal to the probability of message loss) and continue to increment the count until a success is encountered.
\end{definition}

\begin{remark}\label{rem: FU-Dist}
Let \(0 \leq j \leq k\). Let \(d\) be the probability of message loss. The random variable \( \FU{j}{k} \) has the same distribution as \(\mathsf{Geom}\pbr{1-d} - 1\), where \(\mathsf{Geom}(1-d)\) is a geometric random variable, with minimum value 1. Further, if \(k' \geq 0\) such that \(k' \neq k\), then \( \FU{j}{k} \) and \(\FU{j}{k'}\) are independent. 
\end{remark}

The intuition for defining \(\FU{j}{k}\) as above is illustrated through an example in Section~\ref{SS: Example}.  

\subsubsection{Backward Relative Special Honest} 
Relative to the \(j\)-th honest block \(b_j\), we define a sequence of `backward special (\(\BRSH\)) blocks as follows.

\begin{definition}[\(j\dash\BRSH\) Sequence]
Let \(j \geq 0\), and let \(b_j\) be the \(j\)-th honest block. The \(j\dash\BRSH\) sequence is a sequence of blocks \((b_{j}^{0},b_{j}^{-1},b_{j}^{-2},\cdots)\) such that \(b_j^0 = b_j\) and \(b_{j}^{-i}\) is the most recently mined block heard by \(b_{j}^{-(i-1)}\) for all \(i \geq 1\).
\end{definition}

\begin{figure}[t]
    \centering
    \includegraphics[width = 0.9 \textwidth]{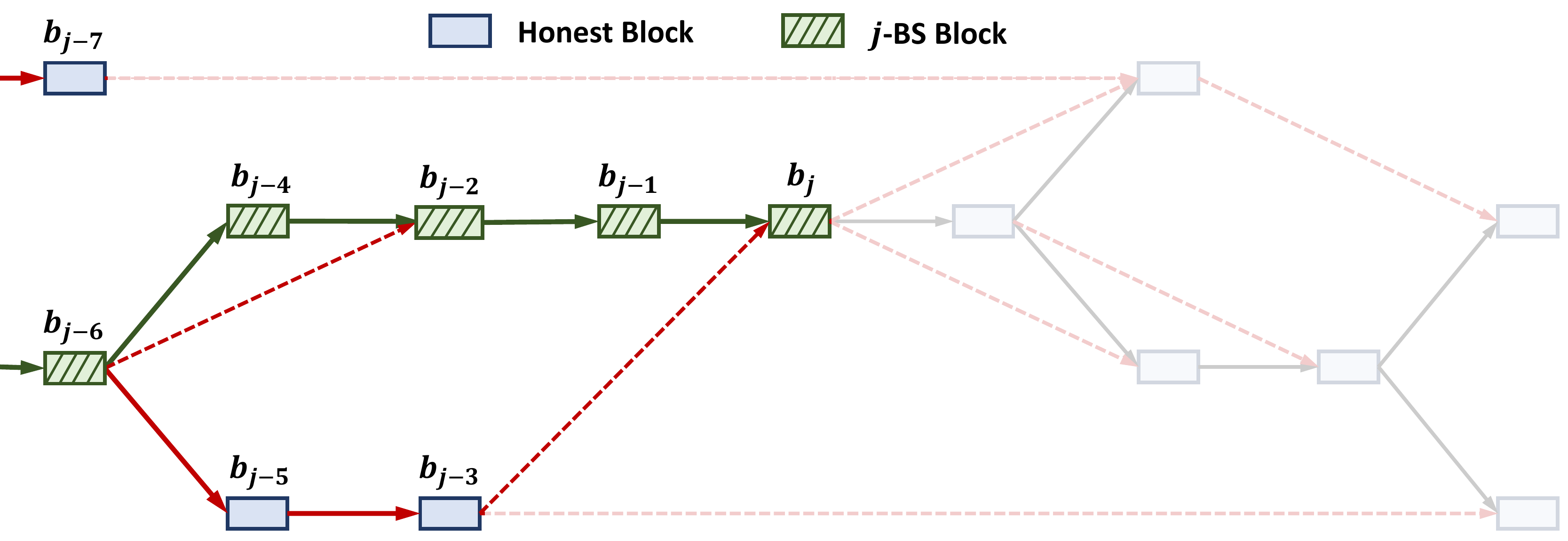}
    \caption{\(j\dash\BRSH\) Sequence}
    \label{fig: XDAG-BRSH}
\end{figure}

For \( 0 \leq i \leq j\), denote by \( \BS{j}{i} \) the number of \( j\dash\BRSH\) blocks mined between \(b_i\) and \(b_j\) (inclusive). Note that \( \BS{j}{i} \geq 1\), since \(b_j\) is always a \( j\dash\BRSH\) block. The \( j\dash\BRSH\) sequence associated with the transmission-graph in Figure~\ref{fig: XDAG} is shown in Figure~\ref{fig: XDAG-BRSH}. Notice that \( \BS{j}{j-5} = 4\) and \( \BS{j}{j-6} = 5\). This example is explored in more detail in Section~\ref{SS: Example}.

\begin{remark} \label{rem: BS-Dist}
Let \(0 \leq i \leq j\). Let \(d\) be the probability of message loss. The random variable \( \BS{i}{j}\) has the same distribution as \(1 + \sum_{k=1}^{j-i} \mathsf{Be}_k\pbr{1-d}\), where \(\mathsf{Be}_k\pbr{1-d}\) are i.i.d. Bernoulli random variables with success probability equal to \(1-d\). This is because \(b_j\) is a \(j\dash\BRSH\) block, and every previous block is independently \(j\dash\BRSH\) with probability \(1-d\).
\end{remark}

\begin{lemma} \label{lem: BRSH-forms-path}
Let \(j\geq 0\). The \(j\dash\BRSH\) sequence \((b_{j}^{0} , b_{j}^{-1} , b_{j}^{-2} , \cdots )\) is a backward directed path in transmission-graph. Further, if \(k > i \geq 0\), then the heights of the \(j\dash\BRSH\) blocks \( b_j^{-i}\) and \(b_{j}^{-k}\) satisfy:
\alns{ 
\height{\MB}{b_{j}^{-k}} - \height{\MB}{b_{j}^{-i}} \geq k-i
}
\end{lemma}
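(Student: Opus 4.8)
The plan is to mirror, essentially verbatim, the short proof of Lemma~\ref{lem: FRSH-forms-path}, with the one change that the relevant transmission-graph edges now point \emph{towards} \(b_j\) rather than away from it. First I would unfold the definition of the \(j\dash\BRSH\) sequence: for every \(i\geq 1\), the block \(b_j^{-i}\) is, by construction, a block that the miner of \(b_j^{-(i-1)}\) has heard of. In the \(0\dash\infty\) model a miner can, at the moment it mines, only have heard of blocks mined strictly earlier whose message was not lost to it; hence \(b_j^{-i}\) was mined before \(b_j^{-(i-1)}\) and incurred zero delay in reaching the latter's miner. By the definition of the transmission-graph this is exactly the statement that \(\XDAG(t)\) contains a directed edge from \(b_j^{-i}\) to \(b_j^{-(i-1)}\).

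Next I would concatenate these edges. Fixing \(k>i\geq 0\), the edges \(b_j^{-k}\to b_j^{-(k-1)}\to\cdots\to b_j^{-i}\) form a directed path in \(\XDAG(t)\) of length \(k-i\) between the honest blocks \(b_j^{-k}\) and \(b_j^{-i}\); this is precisely the assertion that the \(j\dash\BRSH\) sequence is a backward directed path in the transmission-graph. Applying Lemma~\ref{lem: XDAG-path-lower-bound} to this path then forces the main-blocktree heights of its two endpoints to differ by at least its length, namely \(k-i\), which is the claimed height bound. Taking \(i=0\) in addition records a strict-monotonicity statement for the whole \(j\dash\BRSH\) sequence.

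I do not expect a genuine obstacle: the argument is the time-reversal of Lemma~\ref{lem: FRSH-forms-path}, and all the probabilistic content of the \(j\dash\BRSH\) construction is carried by Remark~\ref{rem: BS-Dist} rather than by this lemma. The only things needing a little care are purely bookkeeping: (i) verifying that each backward step indeed lands on a genuinely earlier honest block — in particular that the sequence is well defined all the way down to the genesis block, which every honest miner has heard — so that \(-k,-(k-1),\dots,-i\) really index a path of length \(k-i\); and (ii) keeping the orientation of the edges, and hence of the height inequality, consistent with the convention fixed in Lemma~\ref{lem: XDAG-path-lower-bound}, since here the path is traversed from the older block \(b_j^{-k}\) to the newer block \(b_j^{-i}\), the reverse of the \(\FRSH\) situation.
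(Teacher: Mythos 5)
Your proof is correct and mirrors the paper's argument exactly: identify the edge from $b_j^{-i}$ to $b_j^{-(i-1)}$ at each step, concatenate these into a directed path in $\XDAG(t)$, and apply Lemma~\ref{lem: XDAG-path-lower-bound}. One thing worth nailing down from your orientation remark rather than leaving implicit: since the path runs from the older block $b_j^{-k}$ to the newer block $b_j^{-i}$, the inequality Lemma~\ref{lem: XDAG-path-lower-bound} actually produces is $\height{\MB}{b_{j}^{-i}} - \height{\MB}{b_{j}^{-k}} \geq k-i$, which is the version used downstream in Lemma~\ref{lem: MB-height-diff}(ii), so the difference as printed in the lemma statement has its two height terms transposed.
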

\pf{ 
Let \(i \geq 0\). Since the miner of \(b_{j}^{-(i-1)}\) has heard of \( b_{j}^{-i} \), there is an edge from \(b_{j}^{-i}\) to \(b_{j}^{-(i-1)}\). The conclusion follows from Lemma~\ref{lem: XDAG-path-lower-bound}.
}

\begin{definition}[Backward Unheard]
Let \(j \geq i \geq 0\). The Backward Unheard for block \(b_i\) with respect to block \(b_j\) is denoted \(\BU{j}{i}\) and defined as the number of consecutive \(j\dash\BRSH\) blocks whose miners have not heard of \(b_i\), going backwards along the \(j\dash\BRSH\) sequence from the first such block mined after \(b_i\). If none of the miners of \(j\dash\BRSH\) blocks have heard of \(b_i\), then we toss independent biased cones (with failure probability equal to the probability of message loss) and continue to increment the count until a success is encountered.
\end{definition}

\begin{remark}\label{rem: BU-Dist}
Let \(0 \leq i \leq j\). Let \(d\) be the probability of message loss. The random variable \( \BU{j}{i} \) has the same distribution as \(\mathsf{Geom}\pbr{1-d} - 1\), where \(\mathsf{Geom}(1-d)\) is a geometric random variable, with minimum value 1. Further, if \(i' \geq 0\) such that \(i' \neq i\), then \( \BU{j}{i} \) and \(\BU{j}{i'}\) are independent. 
\end{remark}

The intuition for defining \( \BU{j}{i}\) as above is illustrated through an example in Section~\ref{SS: Example}. The usefulness of these quantities is evident from Lemma~\ref{lem: MB-height-diff}.

\begin{lemma}\label{lem: MB-height-diff}
Let \(i < j < k\). 
\begin{enumerate}[label = (\roman*)]
    \item If \(\FS{j}{k-1} > \FU{j}{k} \), then \(   \height{\MB}{b_k} - \height{\MB}{b_j} \geq \FS{j}{k-1} - \FU{j}{k} \).
    \item If \(\BS{j}{i+1} > \BU{j}{i} \), then \(  \height{\MB}{b_j} - \height{\MB}{b_i} \geq \BS{j}{i+1} - \BU{j}{i} \).    
\end{enumerate}
\end{lemma}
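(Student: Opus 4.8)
Both parts follow the same two–ingredient recipe: walk along the relevant special path in the transmission–graph, picking up the height gain already recorded in Lemma~\ref{lem: FRSH-forms-path} (resp. Lemma~\ref{lem: BRSH-forms-path}), and then add one extra unit of height coming from a single ``heard–of'' edge incident to \(b_k\) (resp. \(b_i\)). The only real content is to unpack the definitions of \(\FU{j}{k}\) and \(\BU{j}{i}\) to locate that edge, and to observe that the hypothesis is exactly what keeps the endpoint of that edge a legitimate block of the special sequence.

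For (i), set \(m := \FS{j}{k-1}\) and \(r := \FU{j}{k}\) and write the \(j\dash\FRSH\) sequence as \(b_j = b_j^0, b_j^1, b_j^2, \dots\). By definition of \(\FS{j}{k-1}\), the blocks \(b_j^0, \dots, b_j^{m-1}\) are precisely the \(j\dash\FRSH\) blocks mined no later than \(b_{k-1}\), so \(b_j^{m-1}\) is the last \(j\dash\FRSH\) block mined before \(b_k\). Now unwind the definition of \(\FU{j}{k}\): since \(r < m\), the ``toss coins'' clause is never invoked, so the miner of \(b_k\) has not heard of \(b_j^{m-1}, \dots, b_j^{m-r}\) but has heard of \(b_j^{m-1-r}\), with \(m-1-r \ge 0\) — this last inequality is exactly \(\FS{j}{k-1} > \FU{j}{k}\), which is what guarantees \(b_j^{m-1-r}\) is a genuine member of the \(\FRSH\) sequence (and it was mined before \(b_k\)). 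Hence \(\XDAG(t)\) contains the edge \(b_j^{m-1-r} \to b_k\), so, reasoning exactly as in the proof of Lemma~\ref{lem: XDAG-path-lower-bound} applied to a length-one path, \(\height{\MB}{b_k} \ge \height{\MB}{b_j^{m-1-r}} + 1\). Combining with Lemma~\ref{lem: FRSH-forms-path} applied to \(b_j^0\) and \(b_j^{m-1-r}\),
\[
\height{\MB}{b_k} - \height{\MB}{b_j} \;\ge\; \bigl(\height{\MB}{b_j^{m-1-r}} + 1\bigr) - \height{\MB}{b_j^0} \;\ge\; (m-1-r) + 1 \;=\; m - r \;=\; \FS{j}{k-1} - \FU{j}{k}.
\]

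Part (ii) is the mirror image. Put \(m' := \BS{j}{i+1}\), \(r' := \BU{j}{i}\) and write the \(j\dash\BRSH\) sequence as \(b_j = b_j^0, b_j^{-1}, b_j^{-2}, \dots\). The blocks \(b_j^0, \dots, b_j^{-(m'-1)}\) are exactly the \(j\dash\BRSH\) blocks mined no earlier than \(b_{i+1}\), so \(b_j^{-(m'-1)}\) is the first \(j\dash\BRSH\) block mined after \(b_i\). Unwinding \(\BU{j}{i}\) and using \(r' < m'\) (again the coin-toss clause is not reached), the miners of \(b_j^{-(m'-1)}, \dots, b_j^{-(m'-r')}\) have not heard of \(b_i\) while the miner of \(b_j^{-(m'-1-r')}\) has, with \(m'-1-r' \ge 0\). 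Thus \(\XDAG(t)\) contains the edge \(b_i \to b_j^{-(m'-1-r')}\), giving \(\height{\MB}{b_j^{-(m'-1-r')}} \ge \height{\MB}{b_i} + 1\), while Lemma~\ref{lem: BRSH-forms-path} gives \(\height{\MB}{b_j} - \height{\MB}{b_j^{-(m'-1-r')}} \ge m'-1-r'\) (the height strictly increases along the backward special path as one moves from \(b_j^{-t}\) toward \(b_j^0\)). Adding the two inequalities yields \(\height{\MB}{b_j} - \height{\MB}{b_i} \ge m' - r' = \BS{j}{i+1} - \BU{j}{i}\).

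I do not expect a genuine obstacle here; the step requiring the most care is the bookkeeping in the middle of each paragraph — correctly reading ``going backwards along the sequence'' in the definitions of \(\FU\) and \(\BU\) (it means toward \(b_j^0\), i.e.\ toward later-mined special blocks, which is also what makes the geometric descriptions in Remarks~\ref{rem: FU-Dist} and~\ref{rem: BU-Dist} correct), reconciling the off-by-one between the counters \(\FS{j}{k-1}, \BS{j}{i+1}\) and the superscripts \(m-1, m'-1\) of the special blocks, and checking that the strict inequality in the hypothesis is precisely the condition \(m-1-r \ge 0\) (resp. \(m'-1-r' \ge 0\)) that keeps the ``heard-of'' endpoint inside the special sequence. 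Everything else is immediate from Lemmas~\ref{lem: XDAG-path-lower-bound}, \ref{lem: FRSH-forms-path}, and \ref{lem: BRSH-forms-path}.
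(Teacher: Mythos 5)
Your proof is correct and follows essentially the same route as the paper's: unwind the definitions of \(\FRSH\)/\(\FRU\) (resp.\ \(\BRSH\)/\(\BRU\)) to locate the last special block heard by the miner of \(b_k\) (resp.\ the first special block after \(b_i\) whose miner heard \(b_i\)), then combine that single transmission-graph edge with the height growth along the special path via Lemmas~\ref{lem: XDAG-path-lower-bound}, \ref{lem: FRSH-forms-path}, and~\ref{lem: BRSH-forms-path}. If anything, your bookkeeping is slightly more careful than the paper's own write-up, which asserts that \(b_k\) has heard of \(b_j^{m}\) with \(m=\FS{j}{k-1}-\FU{j}{k}\) (under the paper's indexing, consistent with its worked example, the heard block is \(b_j^{m-1}\)), and you also use Lemma~\ref{lem: BRSH-forms-path} in its correctly oriented form, with heights increasing toward \(b_j^{0}\).
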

    
\pf{ 
We prove statement (i). Let \( m = \FS{j}{k-1}  - \FU{j}{k} > 0\). Consider the \(j\dash\FRSH\) sequence \( (b_j^0, b_j^1,\cdots,b_j^m,\cdots )\). Note that \(\FS{j}{k-1} \) is the number of \(j\dash\FRSH\) blocks mined before \(b_k\), and \(\FU{j}{k}\) is the number of consecutive blocks from this sequence that were not heard by \(b_k\) going backward. Therefore, \(b_k\) has heard of \(b_j^m\), and will mine at a greater height. Since the \( j\dash\FRSH\) blocks are all mined at different heights, we have
\alns{
\height{\MB}{b_k} - \height{\MB}{b_j} >  \height{\MB}{b_{j}^{m}} - \height{\MB}{b_j^0} \geq m = \FS{j}{k-1} - \FU{j}{k},
}
as desired. The proof of statement (ii) is essentially the same, because the forward sequences \(\FRSH_j\) and \( \FRU_j\) map to the backward sequences \(\BRSH_j\) and \(\BRU_j\) under reversing the directions of the edges in the transmission-graph.
}

\subsection{An Example} \label{SS: Example}
Some of the concepts introduced above are best understood through an example. Consider the main-blocktree in Figure~\ref{fig: MB} and its associated transmission-graph in Figure~\ref{fig: XDAG}. The same transmission-graph is shown again in Figure~\ref{fig: Example}, where the \(j\dash\FRSH\) and \(j\dash\BRSH\) blocks are highlighted. Recall that in the transmission-graph, a directed edge from \(b_j\) to \(b_k\) indicates that the miner of \(b_k\) has heard of the block \(b_j\).

\paragraph{\(j\dash\FRSH\) Sequence} By definition, \(b_j\) is a \(j\dash\FRSH\) block. The miner of the next block \(b_{j+1}\) has heard of the most recent \(j\dash\FRSH\) block \(b_{j}\), so \(b_j^1 = b_{j+1}\) is a \(j\dash\FRSH\) block. The miner of \(b_{j+2}\) has heard of \(b_j^1\), so we have that \(b_{j+2} = b_{j}^{2}\) is also a \(j\dash\FRSH\) block. However, the miners of \(b_{j+3}\), \(b_{j+4}\) and \(b_{j+5}\) have not heard of \(b_j^2\), so these blocks are not \(j\dash\FRSH\). Finally, the miner of \(b_{j+6}\) has heard of \(b_j^2\), so we have that \(b_j^3= b_{j+6}\). Notice that no two \(j\dash\FRSH\) blocks can share the same height.

\paragraph{Forward Unheard} Let us consider \( \FU{j}{k}\) with \(k = j+4\). Starting from \(b_k\), we count the number of consecutive \(j\dash\FRSH\) blocks that the miner of \(b_k\) has not heard of, going backwards in the \(j\dash\FRSH\) sequence. The \(j\dash\FRSH\) sequence at the mining time of \(b_k\) is \(\pbr{b_j, b_{j+1}, b_{j+2}}\). Going backwards in this sequence, we see that the miner of \(b_k\) has not heard of \(b_{j+2}\), but has heard of \(b_{j+1}\). Therefore, we stop the count and have \(\FU{j}{j+4} = 1 \). 

\paragraph{\(j\dash\BRSH\) Sequence} By definition, \(b_j\) is a \(j\dash\BRSH\) block. Since the miner of the most recent \(j\dash\BRSH\) block (\(b_j\)) has heard of the previous block \(b_{j+1}\), we have that \(b_{j-1} = b_j^{-1}\) is a \(j\dash\BRSH\) block. Similarly, the miner of \(b_{j}^{-1}\) has heard of \(b_{j-2}\), so we have that \(b_{j-2} = b_{j}^{-2} \) is a \(j\dash\BRSH\) block. However, the miner of \(b_{j}^{-2}\) has not heard of \(b_{j-3}\), and so \(b_{j-3}\) is not a \(j\dash\BRSH\) block. Notice that no two \(j\dash\BRSH\) blocks can share the same height.

\paragraph{Backward Unheard} Let us consider \( \BU{j}{i}\) with \(i = j-3\). Starting from the first \(j\dash\BRSH\) block mined after \(b_i\), we count the number of consecutive \(j\dash\BRSH\) blocks  that have not heard of \(b_i\), going backwards in the \(j\dash\BRSH\) sequence. The \(j\dash\BRSH\) sequence at the mining time of \(b_i\) is \(b_j, b_{j-1}, b_{j-2}\). Going backwards in this sequence, we see that the miners of \(b_{j-2}\) and \(b_{j-1}\) have not heard of \(b_{i}\), but the miner of \(b_{j}\) has. Therefore, we stop the count and have \(\BU{j}{j-3} = 2\).

\begin{figure}
    \centering
    \includegraphics[width = 0.9 \textwidth]{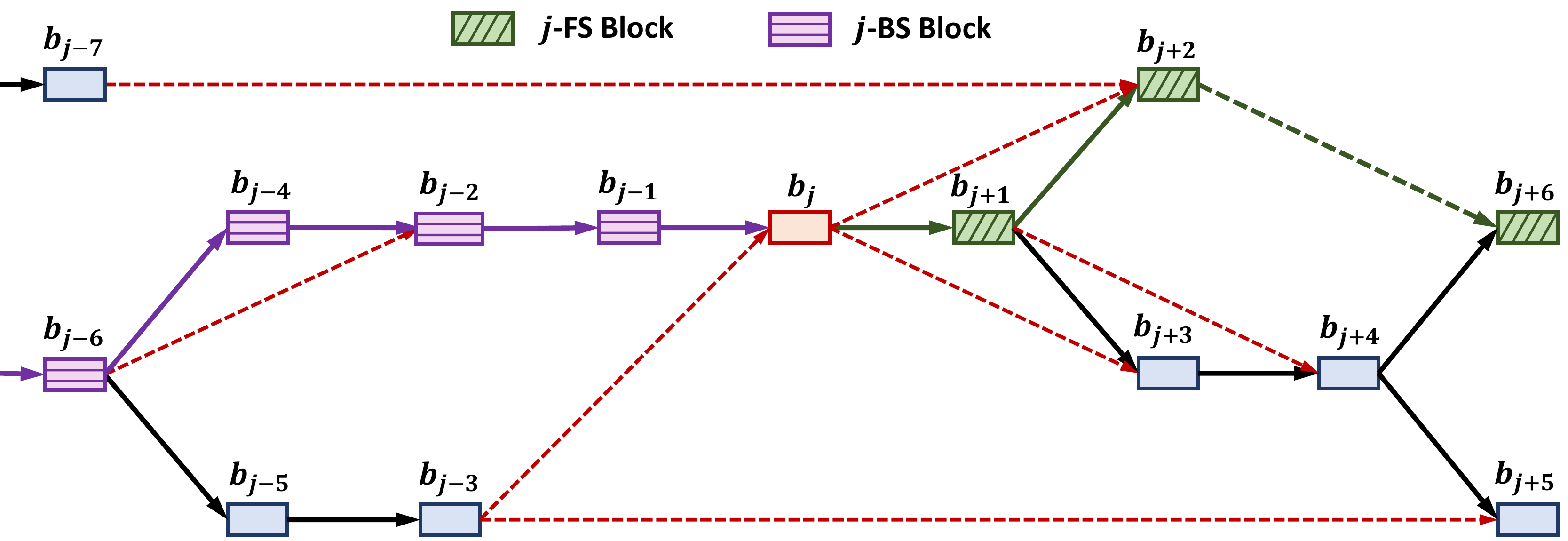}
    \caption{Example}
    \label{fig: Example}
\end{figure}

\subsection{Catch-up Events and \(\ss\)-Nakamoto Blocks} \label{SS: Nakamoto}
In this section, we define catch-up events and \(\ss\)-Nakamoto blocks. 

\paragraph{Adversarial arrivals} Let \(\A{b_i}{b_j}\) denote the number of adversarial blocks mined between the mining times of the \(i\)-th and \(j\)-th honest blocks. Similarly, let \(\A{b_j^i}{b_j^k}\) denote the number of adversarial blocks mined between the \(i\)-th and \(k\)-th \(j\dash\FRSH\) blocks. 

\begin{definition}[Catch-up Events]
Let \(0 \leq i < j < k \), and let \( 0 < \ss \leq 1 \). The forward and backward catch-up events are respectively defined as:
\aln{ 
\Bf{j}{k}{\ss} &\colon \A{b_j}{b_k} \geq \ss \cdot \FS{j}{k-1}  - \FU{j}{k} \label{eq: Bf}\\
\Bb{i}{j}{\ss} &\colon \A{b_i}{b_j} \geq \ss \cdot \BS{j}{i+1} - \BU{j}{i} \label{eq: Bb}
}
\end{definition}
These events are catch-up events in the following sense. If the event \(\Bf{j}{k}{\ss}\) occurs for some \(j < k\) and \( 0 < \ss \leq 1\), then more adversarial blocks have been mined in \( [\tau_j,\tau_k]\) than \emph{effective} \( j \dash \FRSH\) blocks. For instance, when \(\ss = 1\), more adversarial blocks have been mined in \(\sbr{\tau_j,\tau_k} \) than the number of blocks in the \(j\dash\FRSH\) sequence from \(b_j\) up to the last \(j\dash\FRSH\) block heard by the miner of \(b_k\). If these adversarial blocks were to form a side chain rooted at \(b_j\), then there is a possibility that \(b_k\) mines on this side chain. However, such an attack would fail if the catch-up events did not occur, because there would necessarily be a chain longer than the adversarial side chain that the miner of \(b_k\) is aware of. Here, \( \ss \) is a robustness measure: If the catch-up event does not occur for a small value of \(\ss\), then the honest blocks have a considerable lead over adversarial side chains.

\paragraph{Longest Chains of the main-blocktree} A longest chain of the main-blocktree at time \(t\) is a path in \(\MBt\) whose length is no shorter than any path in \(\MBt\). Notice that \(\MBt\) can have multiple longest chains.

\paragraph{\(\ss\)-Nakamoto Blocks}
A desirable property for honest block \(b_j\) is the existence of some \(0 < \eta \leq 1\) such that none of the events \( \Bf{j}{k}{\ss}\) occur for any \(i < j\) and none of the events \( \Bb{i}{j}{\ss}\) occur for any \( k > j\). Intuitively, this ensures that adversarial chains rooted at any block \(b_i\) with \(i < j\) are never long enough for any block \(b_k\) with \(k > j\) to extend them. In turn, this suggests that block \(b_j\) would be in every longest chain of the main-blocktree right from being mined. We formalize this property using the notion of \(\ss\)-Nakamoto blocks below, and make this intuition rigorous in Section \ref{S: 05-Results}.

\noindent 
\begin{definition}[\(\ss\)-Nakamoto Block]
Let \(j \geq 0\) and \( 0 < \ss \leq 1\). The \(j\)-th honest block \(b_j\) is said to be an \(\ss\)-Nakamoto block if the event
\aln{ 
\Njss \colon \sbr{ \bigcap_{i\colon i < j} \Bbc{i}{j}{\ss}}  \bigcap \sbr{ \bigcap_{k\colon k > j} \Bfc{j}{k}{\ss}} \label{eq: def-Nakamoto}
}
occurs. 
\end{definition}
We remark that an \( \ss\)-Nakamoto block is also an \( \ss' \)-Nakamoto block, for any \(\ss' \in [\ss,1]\).

\subsection{User-unheard-criterion} \label{SS: Observer}
\(\ss\)-Nakamoto blocks are useful because they belong to every longest chain of the main-blocktree. However, this is not equivalent to \(\ss\)-Nakamoto blocks belonging to any \textit{user's} chain for all future time. This is because at any given time, a given user has not necessarily heard of all the blocks in \(\MBt\). In this section, we introduce the tools that are relevant to analyzing the state of a user's chain with respect to the main-blocktree. We begin by introducing `Unheard' with respect to a user.   

\begin{definition}[User-unheard, \( \Un{h}{j}{k}\)]
Let \(j\geq 0\) and let \(b_j\) be the \(j\)-th honest block. Let \((b_j^0,b_j^1,\cdots) \) denote the \(j\dash\FRSH\) sequence. Let \(h\) be an honest user. For any \(k \geq 0\), we define \( \Un{h}{j}{k}\) as the number of consecutive \(j\dash\FRSH\) blocks that the user \(h\) has not heard of, going backwards along the \(j\dash\FRSH\) sequence from \(b_j^k\). If \(h\) has not heard of any \(j\dash\FRSH\) block, then we toss independent biased coins (with failure probability equal to the probability of message loss) and continue to increment the count until a success is encountered.
\end{definition}

\begin{remark}\label{rem: Unheard-Dist}
Let \(j,k \geq 0\). Let \(h\) be any honest user and \(d\) be the probability of message loss. Let \(\mathsf{Geom}(1-d)\) be a geometric random variable, with minimum value 1. The distribution of the random variable \( \Un{h}{j}{k} \) depends on \(h\):
\begin{itemize}
    \item If \(h\) has not mined any block after the mining time of \(b_j\), then \(\Un{h}{j}{k} \) has the same distribution as \(\mathsf{Geom}\pbr{1-d} - 1\).
    \item If \(h\) is a miner of a \(j\dash\FRSH\) block, then \(h\) has heard of its own block as well as the \(j\dash\FRSH\) block that immediately preceded it. However, all other delays from the miners of \(j\dash\FRSH\) blocks to \(h\) are still independent and identically distributed. In this case, \( \Un{h}{j}{k}\) is stochastically dominated by \(\mathsf{Geom}(1-d)\).
    \item If \(h\) is a miner of a block \(b_{\ell}\) with \(\ell > j\), such that \(b_{\ell}\) is not a \(j\dash\FRSH\) block, then the delay from the most recent \(j\dash\FRSH\) block before \(b_{\ell}\) to \(h\) is infinity. However, all other delays from the miners of \(j\dash\FRSH\) blocks to \(h\) are still independent and identically distributed. In this case, \(\Un{h}{j}{k}\) is stochastically dominated by \(\mathsf{Geom}(1-d)\).
\end{itemize}
In all cases, we have \( \Un{h}{j}{k} \leq \mathsf{Geom}\pbr{1-d}\), where the \(\leq\) sign indicates stochastic domination.
\end{remark}

We compare `forward-unheard' (denoted \(\FU{j}{k}\)) and `user-unheard' (denoted \( \Un{h}{j}{k}\)). Although similar in spirit, the quantity \(\FU{j}{k}\) counts the number of consecutive \(j\dash\FRSH\) blocks not heard by the \textit{miner} of block \(b_k\), going backwards in the \(j\dash\FRSH\) sequence from the last such block before \(b_k\), whereas the quantity \( \Un{h}{j}{k}\) fixes an honest user \(h\) and similarly counts the number of consecutive such blocks not heard by \(h\). Since \(b_k\) and \(b_{\ell}\) are mined by two different miners, then \( \FU{j}{k}\) and \(\FU{j}{\ell}\) are independent random variables. However, \( \Un{h}{j}{k}\) and \(\Un{h}{j}{\ell} \) need not be independent. 

Next, we introduce the user-unheard-criterion, which will later allow us to infer useful information about the state of a user's chain from the main-blocktree. 

\begin{definition}[User-unheard-criterion]
Let \(h\) be an honest user. Let \(j \geq 1\), \(k_0 \geq 1\), and \(0 < \ss < 1\). We say that the \( \pbr{h,j,\ss,k_0}\)-user-unheard-criterion is satisfied if
\aln{
\Un{h}{j}{k} < \pbr{\frac{1-\ss}{2}}  k,  \quad \forall k \geq k_0, \label{eq: Observer_Criterion}
}
\end{definition}

\section{Proof Outline} \label{S: 05-Results}
    In this section, we present an outline of the proof of Theorem~\ref{thm: Main}. First, deterministic and probabilistic results are stated. These results are used as building blocks in the proof sketch of our main result, Theorem~\ref{thm: Main} which is presented in Section~\ref{SS: Sketch}. Rigorous proofs of all the results, including the main result are relegated to the appendix.

\subsection{Deterministic Results}
\begin{theorem} \label{thm: Nakamoto-implies-longest-chain}
    Let \(j\geq 1\) and \( 0 < \ss \leq 1\). Let \(b_j\) be the \(j\)-th honest block, and \(\tau_j\) be the mining time of \(b_j\). Suppose \(b_j\) is an \(\ss\)-Nakamoto block. Then:
\begin{enumerate}[label = (\roman*)]
\item \(b_j\) is the unique honest block at its height in \(\MBt\) for all \(t \geq \tau_j\).
\item \(b_j\) is in every longest chain of \(\MBt\) for all \(t \geq \tau_j\). 
\item \(b_j\) is the unique block at its height in \(\MB(\tau_j)\).
\item For any \(k\) such that \(k > j\), \(b_k\) is a descendant of \(b_j\) in the main-blocktree.
\end{enumerate}
\end{theorem}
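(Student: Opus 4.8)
The plan is to prove the four claims essentially in the order (iii) $\Rightarrow$ (i) $\Rightarrow$ (iv) $\Rightarrow$ (ii), extracting as much mileage as possible from Lemma \ref{lem: MB-height-diff}, which converts the absence of catch-up events into genuine height gaps in the main-blocktree. The key observation is that if $b_j$ is an $\ss$-Nakamoto block then $\Njss$ holds, so for \emph{every} $i < j$ we have $\A{b_i}{b_j} < \ss \cdot \BS{j}{i+1} - \BU{j}{i} \le \BS{j}{i+1} - \BU{j}{i}$ (using $\ss \le 1$ and $\BS{j}{i+1}\ge 1$), and in particular $\BS{j}{i+1} > \BU{j}{i}$, so Lemma \ref{lem: MB-height-diff}(ii) gives $\height{\MB}{b_j} - \height{\MB}{b_i} \ge \BS{j}{i+1} - \BU{j}{i} > \A{b_i}{b_j}$. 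Symmetrically, for every $k > j$ the event $\Bfc{j}{k}{\ss}$ gives $\BS{} $... more precisely $\FS{j}{k-1} > \FU{j}{k}$ and $\height{\MB}{b_k} - \height{\MB}{b_j} \ge \FS{j}{k-1} - \FU{j}{k} > \A{b_j}{b_k}$. So both going backward and forward from $b_j$, the honest height advance strictly dominates the number of adversarial blocks mined in the corresponding interval.

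For (iii), I would argue by contradiction: suppose some block $b$ (honest or adversarial) lies at the same height as $b_j$ in $\MB(\tau_j)$. Trace back along the chain from $b$ to the genesis block until it meets the chain leading to $b_j$; let $b_i$ be the last honest block on the $b_j$-side strictly below the split point, or the genesis block if there is none. Then the portion of $b$'s chain above the split consists of blocks all mined before $\tau_j$ and none of which is an honest block with index in $(i,j)$ other than along $b_j$'s own chain — hence those blocks are either adversarial or honest blocks with index $\le i$; a counting argument combined with the height equality forces the number of adversarial blocks in $(\tau_i,\tau_j)$ to be at least $\height{\MB}{b_j}-\height{\MB}{b_i}$, contradicting the strict inequality above. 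Once (iii) is in hand, (i) follows because any honest block $b'\ne b_j$ mined after $\tau_j$ at $b_j$'s height would, by the longest-chain rule, have built on a chain of length $\height{\MB}{b_j}$ not passing through $b_j$, i.e.\ a chain present at time $\tau_j$ whose tip is at height $\ge \height{\MB}{b_j}$; that tip is a block at $b_j$'s height or higher in $\MB(\tau_j)$, and pushing the counting argument one step further (the new adversarial blocks between $\tau_j$ and $\tau_{b'}$ cannot help, since $b'$ is honest) again contradicts a backward catch-up bound. Claim (iv) is then immediate: an honest miner of $b_k$ ($k>j$) builds on a longest chain in its view, which has length $\ge \height{\MB}{b_j}$; if that chain did not pass through $b_j$ it would exhibit, together with the adversarial blocks mined in $(\tau_j,\tau_k)$, a height advance contradicting $\Bfc{j}{k}{\ss}$. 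Finally (ii): a longest chain of $\MBt$ has length $\ge \height{\MB}{b_j}$ (since the chain to $b_j$ witnesses that height and heights only grow), and if it avoided $b_j$ its tip $b_\ell$ — honest or adversarial — together with the appropriate forward/backward catch-up bound yields the same contradiction; the one case needing care is when the longest chain's tip is an adversarial block far in the future, which is handled by locating the last honest ancestor and applying the backward bound at $b_j$.

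The main obstacle I anticipate is the bookkeeping in the counting argument for (iii) and (i): one must correctly account for which blocks on a competing chain are "charged" to adversarial arrivals versus to honest blocks with small index, and make sure the height equality translates into the clean inequality $\A{b_i}{b_j} \ge \height{\MB}{b_j} - \height{\MB}{b_i}$ without off-by-one errors (the interval endpoints, whether $b_i$ itself is counted, and the $+1$'s in the definitions of $\FS{}{}$ and $\BS{}{}$ all have to line up). Everything else is a fairly mechanical unwinding of the definitions of catch-up events and an appeal to Lemma \ref{lem: MB-height-diff}; the real content is that Lemma \ref{lem: MB-height-diff} already did the hard combinatorial work of turning "no catch-up" into a height gap, so the proof here is mostly about leveraging that gap against the longest-chain rule.
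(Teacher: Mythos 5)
You have correctly extracted the key inequalities $\height{\MB}{b_j} - \height{\MB}{b_i} > \A{b_i}{b_j}$ (for $i<j$) and $\height{\MB}{b_k} - \height{\MB}{b_j} > \A{b_j}{b_k}$ (for $k>j$) from $\Njss$ via Lemma~\ref{lem: MB-height-diff}, and these are indeed the engine of the whole proof. But the way you deploy them has two genuine problems.

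First, your proposed order $(iii)\Rightarrow(i)$ is circular. Your argument for (iii) shows (when done correctly) that any block $b$ in $\MB(\tau_j)$ at $b_j$'s height \emph{must be honest} — the counting gives at least $\height{\MB}{b_j}-\height{\MB}{b_i}-1$ adversarial arrivals in $(\tau_i,\tau_j)$, since $b$ itself may be the one honest block on the offending segment, and this is \emph{consistent} with the strict inequality, not contradictory as you claim. To upgrade ``$b$ is honest'' to ``$b=b_j$'' you need uniqueness of $b_j$ among honest blocks at its height, which is exactly (i). In fact (i) is immediate from the inequalities you already derived: they say every honest $b_i$ with $i<j$ sits strictly below $b_j$ and every honest $b_k$ with $k>j$ sits strictly above, so $b_j$ is the unique honest block at its height — no detour through the longest-chain rule is needed (and your longest-chain argument for (i) is not sound as written: the parent of a hypothetical $b'$ at $b_j$'s height sits at height $\height{\MB}{b_j}-1$, not $\geq \height{\MB}{b_j}$, and need not exist at time $\tau_j$). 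The paper proves (i) first, in a few lines, and then (iii) becomes a one-paragraph corollary.

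Second, your argument for (iv) invokes only the forward bound $\Bfc{j}{k}{\ss}$, but the ancestor of $b_k$ at $b_j$'s height could sit on a chain that branched off below $b_j$ and was built partly before $\tau_j$, in which case the adversarial blocks you need to count live in $(\tau_i,\tau_j)$ for some honest ancestor $b_i$ with $i<j$. You must combine a backward bound $\Bbc{i}{j}{\ss}$ (charged to the segment from the first honest ancestor $b_i$ of the ancestor of $b_k$ up to $b_j$'s height) with the forward bound (charged to the segment up to $b_k$), which is precisely the two-sided counting in the paper's proof of (ii), Case 1. The paper's (iv) is then a direct appeal to that case; your version would have to reproduce it. Your sketch for (ii) is closer to the mark (you do mention locating the last honest ancestor and applying the backward bound), but the ``appropriate forward/backward catch-up bound yields the same contradiction'' compresses exactly the bookkeeping you flagged as the main obstacle; in particular the case where the longest chain contains no honest block after $b_j$'s height needs the observation that the $j\dash\FRSH$ sequence itself bounds the height of $\MBt$ from below, which your sketch omits. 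In short: the ingredients are right, but proving (i) directly, tracing to the \emph{first honest ancestor of the offending block} (not the highest common honest ancestor), and carrying out the two-sided count in (ii)/(iv) are all needed to close the argument.
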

\pf{ 
See Appendix~\ref{A: Nakamoto-implies-longest-chain}
}

\begin{theorem} \label{thm: Obs-and-Nak}
    Let \(h\) be an honest user. Let \(j \geq 1\) and \( 0 < \ss < 1\). Let \(k_0 =  \ceil{\frac{2\ss}{1-\ss}} \). Let \(\Chain{h}{t}\) denote the chain held by \(h\) at time \(t\). If \(b_j\) is an \(\ss\)-Nakamoto block, and if the \(\pbr{h,j,\ss, k_0}\)-user-unheard-criterion is satisfied, then \(b_j \in \Chain{h}{t}\) for all \( t \geq \tau_j^{k_0}\), where \( \tau_j^{k_0} \) is the mining time of the \(k_0\)-th \(j\dash\FRSH\) block. 
\end{theorem}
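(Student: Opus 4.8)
The plan is to combine Theorem~\ref{thm: Nakamoto-implies-longest-chain} (which fixes the global picture around $b_j$) with the user-unheard-criterion (which forces $h$'s chain to be long), and then argue by contradiction: if $b_j \notin \Chain{h}{t}$ for some $t \geq \tau_j^{k_0}$, then $\Chain{h}{t}$ must contain an implausibly long adversarial fork, contradicting one of the catch-up events that the $\ss$-Nakamoto property of $b_j$ forbids. Throughout write $H_j := \height{\MB}{b_j}$.

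\emph{Step 1: the chain held by $h$ is long.} Fix $t \geq \tau_j^{k_0}$ and let $k$ be the largest index with $\tau_j^k \leq t$, so $k \geq k_0$. Applying the $\pbr{h,j,\ss,k_0}$-user-unheard-criterion at this $k$ gives $\Un{h}{j}{k} < \tfrac{1-\ss}{2}k$, so $h$ has actually heard of the $j\dash\FRSH$ block $b_j^{k'}$ with $k' := k - \Un{h}{j}{k} > \tfrac{1+\ss}{2}k \geq \tfrac{1+\ss}{2}k_0 \geq 1$; in particular $k' \geq 1$, so $b_j^{k'}$ is a genuine honest block mined after $b_j$, hence a descendant of $b_j$ by Theorem~\ref{thm: Nakamoto-implies-longest-chain}(iv), with $\height{\MB}{b_j^{k'}} \geq H_j + k'$ by Lemma~\ref{lem: FRSH-forms-path}. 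Since $h$ heard this block at time $\tau_j^{k'} \leq \tau_j^k \leq t$ and a user's held chain is always at least as long as any chain it has received, $\Chain{h}{t}$ has length $L \geq H_j + k'$.

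\emph{Step 2: structure of the fork.} Suppose $b_j \notin \Chain{h}{t}$ and let $b_n$ be the honest block of maximal height on $\Chain{h}{t}$ (the genesis block always qualifies). Using Theorem~\ref{thm: Nakamoto-implies-longest-chain}, no honest block on $\Chain{h}{t}$ can have height $\geq H_j$: one at height exactly $H_j$ would equal $b_j$ by part~(i); one at height $>H_j$ has index either $>j$ (then a descendant of $b_j$ by part~(iv), forcing $b_j \in \Chain{h}{t}$) or $<j$ (then its height-$H_j$ ancestor already lies in $\MB(\tau_j)$ and equals $b_j$ by part~(iii), again forcing $b_j \in \Chain{h}{t}$) — all contradictions. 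Hence $\height{\MB}{b_n} < H_j$, $n < j$, and every block of $\Chain{h}{t}$ strictly after $b_n$ is adversarial; thus $\Chain{h}{t}$ contains at least $L - \height{\MB}{b_n} \geq H_j + k' - \height{\MB}{b_n}$ adversarial blocks, all mined in $(\tau_n, t]$.

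\emph{Step 3: contradiction, and the main difficulty.} Let $b_{j^{+}}$ be the last honest block mined by time $t$ (so $j^{+} > j$). Splitting $(\tau_n,t]$ at $\tau_j$ bounds the number of adversarial blocks after $b_n$ by $\A{b_n}{b_j} + \A{b_j}{b_{j^{+}+1}}$; since $b_j$ is $\ss$-Nakamoto, neither $\Bb{n}{j}{\ss}$ nor $\Bf{j}{j^{+}+1}{\ss}$ occurs, so this is strictly less than $\bigl(\ss\,\BS{j}{n+1} - \BU{j}{n}\bigr) + \bigl(\ss\,\FS{j}{j^{+}} - \FU{j}{j^{+}+1}\bigr)$. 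Here $\FS{j}{j^{+}} = k+1$ (the $j\dash\FRSH$ blocks up to $b_{j^{+}}$ are exactly $b_j^0,\dots,b_j^k$), and non-occurrence of $\Bb{n}{j}{\ss}$ forces $\BS{j}{n+1} > \BU{j}{n}$, so Lemma~\ref{lem: MB-height-diff}(ii) gives $H_j - \height{\MB}{b_n} \geq \BS{j}{n+1} - \BU{j}{n}$; substituting and cancelling, everything collapses to $k' < \ss(k+1)$. But $k' > \tfrac{1+\ss}{2}k$, and $\tfrac{1+\ss}{2}k \geq \ss(k+1) \iff k \geq \tfrac{2\ss}{1-\ss}$, which holds because $k \geq k_0 = \ceil{\tfrac{2\ss}{1-\ss}}$ — so $k' > \ss(k+1)$, a contradiction, and $b_j \in \Chain{h}{t}$. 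I expect the delicate part to be precisely this accounting in Step~3: attributing each adversarial block on the fork to the correct catch-up window (the split at $\tau_j$ plus the "next honest block" device $b_{j^{+}+1}$ to absorb the partial interval $(\tau_j^{k},t]$), matching the height deficit $H_j - \height{\MB}{b_n}$ to the backward-special sequence via Lemma~\ref{lem: MB-height-diff}, and checking that $k_0 = \ceil{\tfrac{2\ss}{1-\ss}}$ is exactly the threshold that makes the final inequality close; the structural claims in Step~2 follow cleanly from Theorem~\ref{thm: Nakamoto-implies-longest-chain}.
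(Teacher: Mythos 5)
Your proof is correct, and the accounting in Step~3 does close: the cancellation indeed reduces to $k' < \ss\pbr{k+1}$, which contradicts $k' > \frac{1+\ss}{2}k \geq \ss\pbr{k+1}$ for $k \geq k_0$. (One cosmetic slip in Step~1: $\frac{1+\ss}{2}k_0$ need not be $\geq 1$ when $\ss$ is small; you get $k' \geq 1$ from the integrality of $k'$, which is what you actually use.) However, your counting route differs from the paper's. The paper never introduces the maximal-height honest block $b_n$ or any backward machinery: it looks only at the portion $\Chain{h}{t}^{\lceil v}$ of $h$'s chain at heights at least $\height{\MB}{b_j}$, invokes part~(iii) of Theorem~\ref{thm: Nakamoto-implies-longest-chain} to conclude that every such block is mined at or after $\tau_j$, and then compares the at least $k-\Un{h}{j}{k}$ blocks there against the adversarial arrivals in $[\tau_j,\tau_j^{k+1})$, bounded by $\ss\pbr{k+1}$ via the single forward event $\Bfc{j}{\kappa}{\ss}$ taken at the index $\kappa$ of $b_j^{k+1}$ (where the $\FRU$ term vanishes); an honest block must therefore appear in that portion, and part~(iv) finishes. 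Your argument instead replays the fork-counting of the proof of part~(ii) of Theorem~\ref{thm: Nakamoto-implies-longest-chain} at the level of the user's chain: you anchor at $b_n$, control the height deficit $\height{\MB}{b_j}-\height{\MB}{b_n}$ through $\Bbc{n}{j}{\ss}$ together with Lemma~\ref{lem: MB-height-diff}(ii), and use the forward event at $b_{j^{+}+1}$ (whose window $[\tau_j,\tau_{j^{+}+1})$ sits inside $[\tau_j,\tau_j^{k+1})$) for the upper part. So you need both halves of the Nakamoto event and an extra splitting step, whereas the paper's use of part~(iii) eliminates the backward bookkeeping and yields a shorter argument; on the other hand your version is self-contained in its counting and makes explicit why the fork below height $\height{\MB}{b_j}$ cannot help the adversary. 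Both proofs hinge on exactly the same role of $k_0 = \ceil{\frac{2\ss}{1-\ss}}$, namely that it makes $\frac{1+\ss}{2}k \geq \ss\pbr{k+1}$ for all $k \geq k_0$.
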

\pf{ 
See Appendix~\ref{A: Obs-and-Nak}. 
}

\subsection{Probabilistic Results}

\begin{lemma} \label{lem: B-are-independent}
    Let \(0 \leq i < j \leq j' < k\), and \(0 \leq \ss \leq 1\). The events:
\begin{itemize}
    \item \( \Bb{i}{j}{\ss}\) and \( \Bf{j'}{k}{\ss}\) are independent.
    \item \( \Bf{i}{j}{\ss} \) and \( \Bb{j'}{k}{\ss} \) are independent.
\end{itemize}
\end{lemma}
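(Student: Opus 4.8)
The plan is to trace each catch-up event back to the underlying sources of randomness — the point-to-point message-loss coin flips and the adversarial Poisson arrivals — and to argue that the two events in each bullet depend on \emph{disjoint} collections of these sources. Recall from \eqref{eq: Bf}--\eqref{eq: Bb} that $\Bb{i}{j}{\ss}$ is a statement about $\A{b_i}{b_j}$, $\BS{j}{i+1}$ and $\BU{j}{i}$, while $\Bf{j'}{k}{\ss}$ is a statement about $\A{b_{j'}}{b_k}$, $\FS{j'}{k-1}$ and $\FU{j'}{k}$. So I would isolate three independence claims: (a) the adversarial-arrival counts in the two (non-overlapping, since $j \le j'$) honest inter-arrival windows are independent; (b) the $j\dash\BRSH$ data determining $\BS{j}{i+1}$ and $\BU{j}{i}$ is independent of the $j'\dash\FRSH$ data determining $\FS{j'}{k-1}$ and $\FU{j'}{k}$; and (c) these two groups are jointly independent of each other across the (a)/(b) divide.

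For (a): conditioned on the honest mining process, the adversarial blocks form an independent Poisson process, and $\A{b_i}{b_j}$, $\A{b_{j'}}{b_k}$ count arrivals in the time intervals $[\tau_i,\tau_j]$ and $[\tau_{j'},\tau_k]$. Since $i<j\le j'<k$, these intervals are disjoint (they share at most the endpoint $\tau_j=\tau_{j'}$ when $j=j'$, a measure-zero event), so by the independent-increments property the two counts are conditionally independent given the honest arrival times; moreover they are independent of all message-loss coins, which are independent of the Poisson marks. For (b): the event $\Bb{i}{j}{\ss}$ reads off, via Remarks~\ref{rem: BS-Dist} and~\ref{rem: BU-Dist}, the message-loss coins $\{\delay{m}{j}\}$ governing which blocks before $b_j$ are $j\dash\BRSH$ and which $j\dash\BRSH$ miners heard $b_i$ — all of which concern \emph{delays into blocks mined at or before $\tau_j$}. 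The event $\Bf{j'}{k}{\ss}$ reads off the coins $\{\delay{m}{j'}\}$ and the coins governing whether later blocks heard the $j'\dash\FRSH$ chain and whether $b_k$ heard them — all concerning delays \emph{among blocks mined at or after $\tau_{j'}\ge\tau_j$}. Here I must be careful about the "coin-toss continuation" clauses in the definitions of $\BU{j}{i}$ and $\FU{j'}{k}$ (the auxiliary biased coins used when no special block was heard): those auxiliary coins are, by construction, fresh and independent of everything else, so they cause no conflict. Since $\delay{m}{n}$ is indexed by an ordered pair and the two events touch pairs with second coordinate $\le j$ versus $\ge j'$ — with the only possible overlap being pairs $(\cdot, j)=(\cdot,j')$ when $j=j'$, but then one event uses $\delay{m}{j}$ for $m<j$ and the other for $m>j$, still disjoint — the relevant coin collections are disjoint, hence the $j\dash\BRSH$ data and $j'\dash\FRSH$ data are independent. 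Combining (a), (b), and the fact that Poisson marks and loss coins are mutually independent gives the first bullet; the second bullet is identical after swapping the roles of forward and backward (equivalently, after reversing all edge directions in the transmission-graph, as used in the proof of Lemma~\ref{lem: MB-height-diff}).

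The main obstacle I anticipate is not the disjointness bookkeeping per se but making the conditioning rigorous: the honest block \emph{times} $\tau_i,\tau_j,\tau_{j'},\tau_k$ are themselves random, the indices at which the special ($\FRSH$/$\BRSH$) blocks occur are random and determined by the very same loss coins, and $k,k'$ being "all $k>j$" inside $\Njss$ means one is really asserting independence of two $\sigma$-algebras, not just two events. The clean way to handle this is to condition on the entire honest mining process (all of $\{\tau_n\}_{n\ge 0}$), observe that given this the loss coins $\{\delay{m}{n}\}_{m<n}$ are i.i.d.\ and the adversarial process is an independent Poisson process, verify the disjoint-support claim deterministically on each conditioned outcome, and then integrate out the conditioning — independence conditional on $\mathcal{G}$ plus a common conditioning $\sigma$-algebra does not immediately give unconditional independence, so I would instead note that the two events' indicator functions are measurable with respect to $\sigma(\{\tau_n\})$ together with two conditionally-independent-given-$\sigma(\{\tau_n\})$ families, and conclude via the standard fact that if $X$ and $Y$ are conditionally independent given $\mathcal G$ and each is a function of $\mathcal G$ and its own further-independent randomness, the pair structure is preserved. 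I would state this last point as a short lemma or cite a standard reference rather than belabor it.
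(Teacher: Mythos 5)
Your proposal is correct and takes essentially the same approach as the paper's proof: both arguments trace each catch-up event to its underlying randomness (adversarial Poisson arrivals in \([\tau_i,\tau_j]\) versus \([\tau_{j'},\tau_k]\), and message-loss delays into blocks mined up to \(b_j\) versus delays out of blocks from \(b_{j'}\) onward) and conclude independence from the disjointness of these sources. You merely make explicit two points the paper leaves implicit, namely the conditioning on the honest arrival times and the freshness of the auxiliary coins in the definitions of \(\BU{j}{i}\) and \(\FU{j'}{k}\).
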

\pf{ 
See Appendix~\ref{A: B-are-independent}.
}

\begin{theorem} \label{thm: Lower-bound-on-Nakamoto}
    Let \(j \geq 1\) and \(0 \leq \ss \leq 1\). Recall that \(b_j\) is the \(j\)-th honest block and \(\Njss\) is the event that \(b_j\) is an \(\ss\)-Nakamoto block. If \( \frac{\beta}{1-\beta} < \ss \cdot \pbr{1-d}\), then there exists a positive constant \(p_0 > 0\) such that
\alns{ 
\P{\Njss} \geq p_0 > 0
}
\end{theorem}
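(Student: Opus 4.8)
The plan is to show that the complement event $\overline{\Njss}$ has probability bounded away from $1$ by a union bound over all the catch-up events, exploiting the fact that each catch-up event becomes exponentially unlikely as the index gap grows. Recall $\Njss = \bigl[\bigcap_{i<j}\Bbc{i}{j}{\ss}\bigr]\cap\bigl[\bigcap_{k>j}\Bfc{j}{k}{\ss}\bigr]$, so
\[
\P{\overline{\Njss}} \le \sum_{k>j}\P{\Bf{j}{k}{\ss}} + \sum_{i<j}\P{\Bb{i}{j}{\ss}}.
\]
It suffices to prove that each of these two sums is strictly less than, say, $1/4$, possibly after conditioning on a high-probability event; then $\P{\Njss}\ge p_0 > 0$. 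By the forward/backward symmetry already noted in the proof of Lemma~\ref{lem: MB-height-diff}, it is enough to bound $\sum_{k>j}\P{\Bf{j}{k}{\ss}}$, and the backward sum follows identically.

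The core estimate is on $\P{\Bf{j}{k}{\ss}}$ for $k = j+n$. The event is $\A{b_j}{b_k}\ge \ss\cdot\FS{j}{k-1} - \FU{j}{k}$. Here I would use the distributional facts: by Remark~\ref{rem: FS-Dist}, $\FS{j}{k-1}$ is $1+\sum_{i=1}^{n-1}\mathsf{Be}_i(1-d)$, which concentrates around $(1-d)n$; by Remark~\ref{rem: FU-Dist}, $\FU{j}{k}$ is $\mathsf{Geom}(1-d)-1$, an $O(1)$ random variable; and $\A{b_j}{b_k}$, the number of adversarial arrivals between the $j$-th and $k$-th honest block arrivals, is a negative-binomial-type variable concentrating around $\frac{\beta}{1-\beta}n$ (since honest arrivals have rate $(1-\beta)\lambda$ and adversarial arrivals rate $\beta\lambda$, the ratio of counts over a window of $n$ honest arrivals is $\frac{\beta}{1-\beta}$ in expectation). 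Under the hypothesis $\frac{\beta}{1-\beta} < \ss(1-d)$, there is a fixed gap: $\mathbb{E}[\A{b_j}{b_k}] \approx \frac{\beta}{1-\beta}n$ while $\ss\cdot\mathbb{E}[\FS{j}{k-1}] \approx \ss(1-d)n$, so the threshold exceeds the mean of $\A{b_j}{b_k}$ by a constant multiple of $n$, and $\FU{j}{k}$ is too small to close this gap except on an event of exponentially small probability in $n$. A Chernoff bound on each of the three variables then gives $\P{\Bf{j}{k}{\ss}} \le C e^{-c n}$ for constants $c, C > 0$ depending only on $\beta, d, \ss$. Summing the geometric series over $n \ge 1$ gives a finite bound; to make it smaller than $1/4$ I would note that the bound $C e^{-cn}$ can be replaced by $\min(1, Ce^{-cn})$ and, if the resulting sum is still too large, restrict attention to $k$ with $k - j$ large (the finitely many small-gap terms are absorbed by conditioning on a good event, or one simply observes the sum is bounded by some constant $K$ and then reruns the argument with $\ss$ replaced by a slightly smaller $\ss'$ still satisfying the hypothesis — giving room — though actually the cleanest route is the following).

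Rather than forcing the tail sums below $1/4$ directly, I would argue: the full sum $\sum_{k>j}\P{\Bf{j}{k}{\ss}} + \sum_{i<j}\P{\Bb{i}{j}{\ss}}$ is bounded by a finite constant $K = K(\beta,d,\ss)$ independent of $j$, because each summand decays exponentially in the gap. This alone does not give $\P{\Njss}>0$ if $K\ge 1$. To fix this, observe that the hypothesis $\frac{\beta}{1-\beta}<\ss(1-d)$ is an open condition, so there exists $\ss^* < \ss$ with $\frac{\beta}{1-\beta} < \ss^*(1-d)$ as well; but this does not shrink $K$. The genuine fix is that only the events with small index gap can have probability close to $1$, and for those we use independence: the events $\{\Bf{j}{j+n}{\ss}\}_n$ are not independent of each other, but I would instead lower-bound $\P{\Njss}$ by intersecting only a well-chosen sub-collection and using Lemma~\ref{lem: B-are-independent} together with a second-moment or Lovász-local-lemma-style argument — \textbf{this is the main obstacle}. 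The cleanest resolution I anticipate the authors use: bound $\P{\bigcup_{k>j}\Bf{j}{k}{\ss}}$ by splitting at a threshold $n_0$; for $n > n_0$ the union bound gives total mass $\le \varepsilon$ by exponential decay; for $n \le n_0$, show that $\bigcap_{n\le n_0}\Bfc{j}{j+n}{\ss}$ has probability bounded below by a positive constant depending on $n_0$ — e.g.\ via the FKG inequality or by directly exhibiting a positive-probability configuration of the underlying Poisson and Bernoulli variables on which no small-gap catch-up occurs. Combining, $\P{\Njss} \ge (\text{const})_{n_0} - 2\varepsilon > 0$ for $n_0$ large enough. So the steps in order: (1) reduce to the forward sum by symmetry; (2) prove the exponential tail bound $\P{\Bf{j}{k}{\ss}}\le Ce^{-c(k-j)}$ via Chernoff bounds on $\A{b_j}{b_k}$, $\FS{j}{k-1}$, $\FU{j}{k}$ using the strict inequality in the hypothesis; (3) choose $n_0$ so the tail beyond $n_0$ contributes less than $1/4$ in total; (4) lower-bound the probability of avoiding all finitely many small-gap catch-up events by a positive constant; (5) assemble. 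The hard step is (4): establishing that the finitely many highly-correlated small-gap events can simultaneously fail with positive probability, uniformly in $j$.
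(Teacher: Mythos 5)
Your route is genuinely different from the paper's, and as written it has a gap at exactly the step you flag as ``the main obstacle.'' The paper never union-bounds over catch-up events indexed by gap size. Instead it first uses Lemma~\ref{lem: B-are-independent} to write $\P{\Njss} = \P{\Nb{j}{\ss}}\P{\Nf{j}{\ss}}$, then recasts $\Nf{j}{\ss}$ as a level-crossing event for the random walk $\W_n = \sum_{m=1}^n\bigl(\mathsf{Geom}_m(1-\beta)-1-\ss\,\mathsf{Be}_m(1-d)\bigr)$ against the independent envelope $-\Y_n$ with $\Y_n\sim\mathsf{Geom}(1-d)-1$. Inserting a slack rate $\gamma\in(0,\ss)$ decouples the walk from the envelope: $\P{\Y_n<\gamma n\;\forall n}$ is a convergent infinite product $\prod_n(1-d^{1+\gamma n})>0$, and $\P{\W_n<-\gamma n\;\forall n}$ is positive via a fixed-length warm-up event (explicit positive-probability configuration) followed by the Kingman bound on the supremum of a negative-drift walk. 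This structure completely avoids the small-gap/large-gap split you are forced into.

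The concrete problem with your version is the final claim ``$\P{\Njss}\ge(\text{const})_{n_0}-2\varepsilon>0$ for $n_0$ large enough.'' Both $(\text{const})_{n_0}$ and $\varepsilon(n_0)$ decay with $n_0$: the configuration probability decays roughly like $\bigl((1-\beta)(1-d)\bigr)^{O(n_0)}$, while the tail $\varepsilon(n_0)$ decays like $e^{-cn_0}$ with $c$ the Chernoff exponent from Lemma~\ref{lem: Probability-of-catchup-events}. Near the boundary of the hypothesis, where the surplus $\ss(1-d)-\tfrac{\beta}{1-\beta}$ is small, $c$ can be arbitrarily small while the configuration rate is bounded away from zero, so the inequality $(\text{const})_{n_0}>2\varepsilon(n_0)$ can fail for \emph{every} $n_0$; you have not shown the race is won. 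Your fallback of FKG also does not obviously apply: the events $\Bfc{j}{k}{\ss}$ are not monotone in the transmission coins, because the $\FRSH$ sequence is defined recursively -- flipping a single coin from failure to success can change which blocks are $j$-$\FRSH$ and thereby \emph{increase} $\FU{j}{k}$. Similarly, ``conditioning on a high-probability event'' would perturb the tail estimates you rely on. To close the gap you would need a substitute for step~(4) that does not degrade with $n_0$, and the paper's random-walk/Kingman argument is precisely such a substitute.
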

\pf{ 
See Appendix~\ref{A: Lower-bound-on-Nakamoto}.
}

\begin{lemma} \label{lem: Probability-of-catchup-events}
    Let \( 0 < \ss \leq 1\) and \(i < j < k\). Let \( \Bf{j}{k}{\ss} \) and \( \Bb{i}{j}{\ss} \) be the catch-up events defined in~\eqref{eq: Bf},~\eqref{eq: Bb}. If \( \frac{\beta}{1-\beta} < \ss \cdot \pbr{1-d}\), there exists a constant \(c > 0\) such that
\alns{ 
\P{ \Bf{j}{k}{\ss} } &\leq e^{-c \pbr{ k-j} } \\
\P{ \Bb{i}{j}{\ss} } &\leq e^{-c \pbr{ j-i} } 
}
\end{lemma}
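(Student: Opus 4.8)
The plan is to reduce both bounds to standard large-deviation estimates for sums of i.i.d.\ light-tailed variables, using the distributional identities in Remarks~\ref{rem: FS-Dist}--\ref{rem: BU-Dist}. I treat the forward event $\Bf{j}{k}{\ss}$; the backward event $\Bb{i}{j}{\ss}$ is identical after reversing edge directions in the transmission-graph (as in the proof of Lemma~\ref{lem: MB-height-diff}), since $\A{b_i}{b_j}$, $\BS{j}{i+1}$, $\BU{j}{i}$ have the same distributions as $\A{b_j}{b_k}$, $\FS{j}{k-1}$, $\FU{j}{k}$ with $n=k-j$ replaced by $n=j-i$. Write $n=k-j\ge 1$, $A=\A{b_j}{b_k}$, $S=\FS{j}{k-1}$, $U=\FU{j}{k}$, so $\Bf{j}{k}{\ss}$ is the event $A+U\ge\ss S$. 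By the Poisson mining model (thinning), $A$ is a sum of $n$ i.i.d.\ copies of $\mathsf{Geom}(1-\beta)-1$ — the number of adversarial arrivals in each inter-honest-block gap — so $\mathbb{E}A=n\tfrac{\beta}{1-\beta}$; by Remark~\ref{rem: FS-Dist}, $S-1$ is $\mathrm{Binomial}(n-1,1-d)$, so $\mathbb{E}S=(1-d)n+d$; by Remark~\ref{rem: FU-Dist}, $U\sim\mathsf{Geom}(1-d)-1$, which does not depend on $n$. Note also that $\tfrac{\beta}{1-\beta}<\ss(1-d)$ forces $d<1$, so all the geometric and binomial tails below are genuinely exponential.

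The deterministic heart of the argument is the set inclusion
\[
\Bf{j}{k}{\ss}\;\subseteq\;\{A\ge\mathbb{E}A+\delta_1 n\}\cup\{U\ge\delta_2 n\}\cup\{S\le\mathbb{E}S-\delta_3 n\},
\]
valid whenever $\delta_1,\delta_2,\delta_3>0$ satisfy $\delta_1+\delta_2+\ss\delta_3\le\ss(1-d)-\tfrac{\beta}{1-\beta}$: on the complement of the right-hand side one gets $A+U<(\tfrac{\beta}{1-\beta}+\delta_1+\delta_2)n\le(\ss(1-d)-\ss\delta_3)n<\ss S$, contradicting membership in $\Bf{j}{k}{\ss}$. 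Since $\epsilon_0:=\ss(1-d)-\tfrac{\beta}{1-\beta}>0$ by hypothesis, such $\delta_i$ exist (e.g.\ $\delta_1=\delta_2=\tfrac14\epsilon_0$, $\delta_3=\tfrac{1}{4\ss}\epsilon_0$). A union bound then gives $\P{\Bf{j}{k}{\ss}}\le e^{-c_1 n}+e^{-c_2 n}+e^{-c_3 n}$, where $\P{A\ge\mathbb{E}A+\delta_1 n}\le e^{-c_1 n}$ by a Chernoff bound for the i.i.d.\ sum $A$ (geometrics have exponential moments since $\beta<1$), $\P{U\ge\delta_2 n}\le d^{\ceil{\delta_2 n}}\le e^{-c_2 n}$ directly from the geometric tail, and $\P{S\le\mathbb{E}S-\delta_3 n}\le e^{-c_3 n}$ by the binomial lower-tail Chernoff bound (a deviation linear in $n$ below the mean). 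Hence $\P{\Bf{j}{k}{\ss}}\le 3e^{-c^\ast n}$ with $c^\ast=\min(c_1,c_2,c_3)>0$.

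It remains to absorb the factor $3$: $3e^{-c^\ast n}\le e^{-(c^\ast/2)n}$ for all $n\ge 2\ln 3/c^\ast$, and for the finitely many remaining values $1\le n<2\ln 3/c^\ast$ one has $\P{\Bf{j}{k}{\ss}}<1$ (the event fails with positive probability, e.g.\ on $\{A=0,U=0\}$, since $S\ge1$ and $\ss>0$), so shrinking the exponent further yields a single constant $c>0$ valid for all $i<j<k$; the same $c$ works for $\Bb{i}{j}{\ss}$ by the symmetry noted above. The only point requiring care is getting the marginal laws of $A$, $S$, $U$ exactly right — which is precisely the content of the cited Remarks together with the Poisson mining model — and I deliberately route the estimate through a three-term union bound so that only these marginals are needed and no joint-independence bookkeeping between $S$ and $U$ is required. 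I expect the main obstacle to be organizational rather than conceptual: pinning down the Chernoff exponents with the correct parameters and cleanly folding the small-$n$ cases into the final constant $c$; the probabilistic content itself is routine once the displayed decomposition is in hand.
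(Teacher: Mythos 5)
Your proposal is correct and takes essentially the same route as the paper: both reduce $\Bf{j}{k}{\ss}$ to the same three linear-deviation events for $\A{b_j}{b_k}$, $\FS{j}{k-1}$ and $\FU{j}{k}$ (using the distributional facts of Remarks~\ref{rem: FS-Dist}--\ref{rem: BU-Dist}) and apply a Chernoff bound, a binomial/Hoeffding bound, and the geometric tail, with the backward event handled by the identical distributions with $j-i$ in place of $k-j$. The only difference is organizational: you union-bound the three bad events directly (needing no independence), while the paper lower-bounds the complement as a product of three independent good events; your explicit absorption of the constant factor into the exponent is, if anything, a bit more careful than the paper's treatment.
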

\pf{ 
See Appendix~\ref{A: Probability-of-catchup-events}.
}

\begin{theorem} \label{thm: Exp-sqrt-decay}
    Let \( 0 < \ss \leq 1\). Let \(\beta\) be the fraction of computational power in the system that is adversarial and \(d\) be the probability of message loss. Let \(B_{s,s+t}^{\pbr{\ss}}\) be the event that there are no \(\ss\)-Nakamoto blocks in \([s,s+t]\). If \( \frac{\beta}{1-\beta} < \ss\cdot  \pbr{1-d}\), then there exists a constant \(c_0 > 0 \) such that for any \(s,t \geq 0\),
\alns{ 
\P{B_{s,s+t}^{\pbr{\ss}}} \leq e^{-c_0 \sqrt{t}}
}
\end{theorem}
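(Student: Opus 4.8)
I would show that, except with probability at most $e^{-c_0\sqrt t}$, at least one honest block mined during $[s,s+t]$ is an $\ss$-Nakamoto block. \emph{Step 1 (a linear supply of candidates).} The number of honest blocks mined in $(s,s+t]$ is Poisson with mean $(1-\beta)\lambda t$, so a Chernoff bound gives a constant $c_1>0$ with
\[
\P{\text{fewer than }N_1\text{ honest blocks mined in }(s,s+t]}\le e^{-c_1 t},\qquad N_1:=\bigl\lceil\tfrac12(1-\beta)\lambda t\bigr\rceil .
\]
On the complement, if $b_J$ denotes the first honest block mined after time $s$, then $b_J,\dots,b_{J+N_1-1}$ all lie in $[s,s+t]$, so it suffices to bound the probability that none of these $N_1$ consecutive honest blocks is $\ss$-Nakamoto.

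\emph{Step 2 (sparsify and localize).} Fix a spacing $g\ge 1$ (optimized below), set $w:=\lfloor (g-1)/2\rfloor$ and $m:=\lfloor (N_1-1)/g\rfloor+1=\Theta(t/g)$, and retain only the candidates $b_{j_\ell}$ with $j_\ell:=J+\ell g$, $\ell=0,\dots,m-1$. With $W_\ell:=\{j_\ell-w,\dots,j_\ell+w\}$, define the \emph{localized Nakamoto} event
\[
\widetilde X_\ell:=\bigcap_{j_\ell-w\le i<j_\ell}\Bbc{i}{j_\ell}{\ss}\ \cap\bigcap_{j_\ell<k\le j_\ell+w}\Bfc{j_\ell}{k}{\ss},
\]
and let $F_\ell$ be the \emph{far} event that $\Bb{i}{j_\ell}{\ss}$ occurs for some $i<j_\ell-w$ or $\Bf{j_\ell}{k}{\ss}$ occurs for some $k>j_\ell+w$. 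Since the index ranges partition, $\mathsf{N}^{\pbr{\ss}}_{j_\ell}=\widetilde X_\ell\cap\overline{F_\ell}$, hence $\overline{\mathsf{N}^{\pbr{\ss}}_{j_\ell}}=\overline{\widetilde X_\ell}\cup F_\ell$, and the elementary inclusion $\bigcap_\ell(A_\ell\cup B_\ell)\subseteq(\bigcap_\ell A_\ell)\cup(\bigcup_\ell B_\ell)$ gives
\[
\P{B_{s,s+t}^{\pbr{\ss}}}\ \le\ e^{-c_1 t}+\P{\bigcap_{\ell=0}^{m-1}\overline{\widetilde X_\ell}}+\sum_{\ell=0}^{m-1}\P{F_\ell}.
\]

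\emph{Step 3 (the two estimates and the trade-off).} Because $2w<g$ the windows $W_\ell$ are pairwise disjoint, and $\widetilde X_\ell$ is determined by the adversarial-block counts, the delay coin flips, and the auxiliary ``unheard'' coins carried on indices within $W_\ell$; by this together with Lemma~\ref{lem: B-are-independent} (independence of forward- and backward-type catch-up events supported on separated index ranges), the events $\widetilde X_0,\dots,\widetilde X_{m-1}$ are mutually independent. Since $\mathsf{N}^{\pbr{\ss}}_{j_\ell}\subseteq\widetilde X_\ell$, Theorem~\ref{thm: Lower-bound-on-Nakamoto} gives $\P{\widetilde X_\ell}\ge p_0>0$, so $\P{\bigcap_\ell\overline{\widetilde X_\ell}}\le(1-p_0)^m$; and Lemma~\ref{lem: Probability-of-catchup-events} gives $\P{F_\ell}\le 2\sum_{r>w}e^{-cr}\le C_2\,e^{-cw}$, whence $\sum_\ell\P{F_\ell}\le mC_2 e^{-cw}$. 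Thus $\P{B_{s,s+t}^{\pbr{\ss}}}\le e^{-c_1 t}+(1-p_0)^m+mC_2e^{-cw}$ with $m=\Theta(t/g)$ and $w=\Theta(g)$; choosing $g:=\lceil\sqrt t\rceil$ makes $m=\Theta(\sqrt t)$ and $w=\Theta(\sqrt t)$, so all three terms are $e^{-\Theta(\sqrt t)}$ up to polynomial prefactors, and absorbing the prefactors yields $\P{B_{s,s+t}^{\pbr{\ss}}}\le e^{-c_0\sqrt t}$ for a suitable $c_0>0$.

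\emph{Main difficulty.} The crux is Step 3: one must cut the candidate list finely enough that the localized events $\widetilde X_\ell$ are genuinely independent — this is exactly where the disjointness bookkeeping on the windows $W_\ell$ and Lemma~\ref{lem: B-are-independent} are indispensable — while simultaneously keeping the discarded ``far'' catch-up events $F_\ell$ rare. The appearance of $\sqrt t$ rather than a full linear exponent is forced by the tension between taking $g$ large (so $\sum_\ell\P{F_\ell}$ is tiny but there are few candidates) and taking $g$ small (so $(1-p_0)^m$ is tiny but the far-event error blows up); balancing the two optimizes at $g\asymp\sqrt t$.
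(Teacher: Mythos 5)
Your proposal is correct and rests on the same three pillars as the paper's proof: localize the Nakamoto event to a window around each candidate block, exploit independence of the localized events across disjoint windows (Lemma~\ref{lem: B-are-independent}), lower-bound each by the per-block constant \(p_0\) of Theorem~\ref{thm: Lower-bound-on-Nakamoto}, and kill the discarded long-range catch-ups with the exponential decay of Lemma~\ref{lem: Probability-of-catchup-events}, balancing at window size \(\asymp\sqrt t\). The bookkeeping, however, differs in a way worth noting. The paper partitions \([s,s+t]\) in \emph{time} into blocks of length \(\sqrt t\) and keeps the middle third \(S_\ell\) of each triple: its local event \(\sfH_\ell\) is a union over \emph{all} honest blocks mined in \(S_\ell\), which forces two extra ingredients — the Poisson events \(\D_1,\D_2\) to translate time separation \(\sqrt t\) into index separation before Lemma~\ref{lem: Probability-of-catchup-events} can be applied to the far catch-ups in \(\G_1\), and a second-moment argument to get \(\P{\sfH_\ell}\ge p_0^2/2\). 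You instead sparsify in \emph{index} space with designated single candidates \(b_{j_\ell}\); since the catch-up probabilities decay in index difference, your far events \(F_\ell\) need no time-to-index conversion, and since each candidate is one specific block you get \(\P{\widetilde X_\ell}\ge p_0\) directly, with Poisson concentration invoked only once (to guarantee \(\Theta(\sqrt t)\) candidates inside \([s,s+t]\)). The one step you leave implicit, and should state, is that the candidate indices \(j_\ell=J+\ell g\) are random: applying Theorem~\ref{thm: Lower-bound-on-Nakamoto} and Lemma~\ref{lem: Probability-of-catchup-events} at these indices is legitimate because \(J\) is a function of the honest arrival times alone, which are independent of the adversarial inter-arrival counts and the delay variables that drive the catch-up events (the paper's own interval-based argument quietly relies on the same decoupling). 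With that remark added, your argument is a slightly leaner variant of the paper's proof, trading the second-moment method and the event \(\D_2\) for a mildly more careful identification of which i.i.d.\ randomness each windowed event depends on; like the paper's proof, it gives the stated bound for \(t\) sufficiently large, with small \(t\) absorbed into the constant \(c_0\).
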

\pf{ 
See Appendix~\ref{A: Exp-sqrt-decay}.
}

\begin{theorem} \label{thm: Bootstrap}
    Let \( 0 < \ss \leq 1\). Let \(\beta\) be the fraction of computational power in the system that is adversarial and \(d\) be the probability of message loss. Let \(B_{s,s+t}^{\pbr{\ss}}\) be the event that there are no \(\ss\)-Nakamoto blocks in \( \sbr{s,s+t}\). If \( \frac{\beta}{1-\beta} < \ss \cdot \pbr{1-d}\), then for every \( \eps > 0 \), there exist positive constants \(A\), \(a\) such that for any \(s,t > 0\),
\alns{ 
\P{B_{s,s+t}^{\pbr{\ss}}} \leq A \exp \pbr{-a t^{1-\eps}}.
}
\end{theorem}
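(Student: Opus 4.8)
The plan is a bootstrap: take the estimate $\P{B_{s,s+t}^{\pbr{\ss}}}\le e^{-c_0\sqrt{t}}$ from Theorem~\ref{thm: Exp-sqrt-decay} and self-improve the exponent on $t$ until it exceeds $1-\eps$. Set $q(t):=\sup_{s\ge 0}\P{B_{s,s+t}^{\pbr{\ss}}}$; since a longer time window can only fail to contain more Nakamoto blocks, $q$ is nondecreasing, and $q(t)\le e^{-c_0\sqrt{t}}$. The heart of the argument is a recursive inequality of the form
\aln{
q(t)\;\le\;q\!\pbr{t/m}^{\,m}\;+\;C\,t^{2}\,e^{-c\,g},\label{eq: bootstrap-recursion}
}
valid for every integer $m\ge1$ and every guard width $g$ with $g\le\tfrac14\lambda t/m$, where $c>0$ is a constant coming from Lemma~\ref{lem: Probability-of-catchup-events} (and Poisson concentration) and $C$ is absolute.

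To establish \eqref{eq: bootstrap-recursion} I would fix $s$, cut $[s,s+t]$ into $m$ equal subintervals, and inside the $\ell$-th take the concentric subinterval $J_\ell$ obtained by deleting a guard band of width $\delta:=g/\lambda$ from each end. If $B_{s,s+t}^{\pbr{\ss}}$ occurs, then a fortiori no honest block mined in any $J_\ell$ is $\ss$-Nakamoto, so every honest $b_j$ with $\tau_j\in J_\ell$ participates in some catch-up event $\Bb{i}{j}{\ss}$ or $\Bf{j}{k}{\ss}$. Let $\hat B_\ell$ be the event that each such $b_j$ participates in one whose other endpoint is mined within time $\delta$ of $\tau_j$, and $F_\ell$ the event that some such $b_j$ participates in one whose other endpoint is mined more than $\delta$ away; then $B_{s,s+t}^{\pbr{\ss}}\subseteq\pbr{\bigcup_\ell F_\ell}\cup\pbr{\bigcap_\ell\hat B_\ell}$. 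Three facts finish the derivation: (i) $\hat B_\ell$ forces that no honest block mined in $J_\ell$ is $\ss$-Nakamoto, so $\P{\hat B_\ell}\le q(|J_\ell|)\le q(t/m)$; (ii) $\hat B_\ell$ is measurable with respect to the block arrivals, adversarial arrivals, and pairwise delays inside the full $\ell$-th subinterval, and since these subintervals are disjoint the events $\hat B_1,\dots,\hat B_m$ are independent, whence $\P{\bigcap_\ell\hat B_\ell}\le q(t/m)^m$; (iii) on $F_\ell$ some honest $b_j$ participates in a catch-up event with a block at honest-index-distance at least $(1-\beta)\lambda\delta/2=(1-\beta)g/2$ (a time gap of $\delta$ contains that many honest blocks outside an event of probability $\le e^{-\Theta(g)}$), so Lemma~\ref{lem: Probability-of-catchup-events} together with a union bound over the $O(t)$ honest blocks in $[s,s+t]$ gives $\P{F_\ell}\le C't\,e^{-cg}$; summing over $\ell$ and absorbing the $e^{-\Theta(g)}$ corrections yields the second term of \eqref{eq: bootstrap-recursion}.

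Then I would iterate \eqref{eq: bootstrap-recursion}. Suppose inductively $q(t)\le A_n\exp(-a_n t^{\gamma_n})$ for all $t\ge1$, with $\gamma_0=\tfrac12$ (Theorem~\ref{thm: Exp-sqrt-decay}). Apply \eqref{eq: bootstrap-recursion} with $m=\ceil{t^{\rho}}$ and $g=\tfrac{C_1}{c}\,t^{\,\gamma_n+\rho(1-\gamma_n)}$, where $\rho=\rho(\gamma_n)\in(0,1)$ is taken small enough that $g\le\tfrac14\lambda t/m$ holds for all large $t$; this is possible exactly when $\rho<\tfrac{1-\gamma_n}{2-\gamma_n}$. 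The first term of \eqref{eq: bootstrap-recursion} is then $\le\exp\!\pbr{o(t^{\gamma_n+\rho(1-\gamma_n)})-a_n t^{\gamma_n+\rho(1-\gamma_n)}}$, which for a suitable $C_1$ dominates the second term, so we obtain a new bound with exponent $\gamma_{n+1}=\gamma_n+\rho(1-\gamma_n)$; pushing $\rho$ toward its limit gives $\gamma_{n+1}\ge\gamma_n+\tfrac14(1-\gamma_n)^2$, hence $\gamma_n\uparrow1$. So after finitely many steps (a number depending only on $\eps$) the exponent passes $1-\eps$, and collapsing the finitely many accumulated constants into $A,a$ — and enlarging $A$ to cover bounded $t$ — gives the claimed bound $A\exp(-at^{1-\eps})$.

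I expect the main obstacle to be the decoupling across subintervals carried out in facts (ii) and (iii): the property ``$b_j$ is an $\ss$-Nakamoto block'' genuinely depends on catch-up events with arbitrarily distant blocks, so the localized proxy $\hat B_\ell$ must truncate that dependence at scale $g$, and one must simultaneously keep the truncation error $\P{F_\ell}$ exponentially small in $g$ (this is the one place where Lemma~\ref{lem: Probability-of-catchup-events} is really used, via a time-distance/index-distance conversion through Poisson concentration) and keep $g$ much smaller than $t/m$ so that many genuinely independent pieces survive. That quantitative tension — large $g$ for accuracy versus small $g$ to preserve independence — is precisely why a single application only reaches exponent $2/3$ and why the finite iteration over $\gamma_n$ is needed.
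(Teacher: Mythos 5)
Your proposal is correct and is essentially the paper's own argument: the paper likewise bootstraps from Theorem~\ref{thm: Exp-sqrt-decay} by cutting \([s,s+t]\) into subintervals with guard regions, replacing the Nakamoto property by its localized (near catch-up) version, whose failures are independent across disjoint subintervals and are bounded by the inductive hypothesis, while far catch-up events are controlled via Lemma~\ref{lem: Probability-of-catchup-events} together with Poisson concentration, and the resulting exponent recursion is iterated (your \(\gamma_{n+1}=\gamma_n+\rho(1-\gamma_n)\) at the limiting \(\rho\) is exactly the paper's \(\gamma\mapsto 1/(2-\gamma)\), i.e.\ \(m_{k+1}=\frac{2m_k-1}{m_k}\)). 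One harmless slip: \(q(t)=\sup_s \P{B_{s,s+t}^{\pbr{\ss}}}\) is nonincreasing rather than nondecreasing, so you should bound \(\P{\hat{B}_\ell}\le q(|J_\ell|)\le q\pbr{t/(2m)}\) (using \(|J_\ell|\ge t/(2m)\) under your guard constraint) instead of \(q(t/m)\); this only changes constants in the iteration.
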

\pf{ 
See Appendix~\ref{A: Bootstrap}.
}

\begin{theorem} \label{thm: Probability-Observer-Criterion}
    Let \(0 < \ss < 1\). Suppose the fraction \(\beta\) of computational power in the system that is adversarial, and the probability \(d\) of message loss satisfies \(\frac{\beta}{1-\beta} < \ss \cdot \pbr{1-d}\). Given \(s \geq 0\), let \(b_J\) be the first \(\ss\)-Nakamoto block mined after time \(s\). There exist constants \(C\), \(c > 0\) such that for any honest user \(h\) and for all \(k' \geq 1 \), 
\alns{ 
\P{ \pbr{h,J,\ss,k'}\dash\text{user-unheard-criterion fails} } \leq  Ce^{-c k'}
}
\end{theorem}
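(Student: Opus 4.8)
The plan is to control the failure probability of the user-unheard-criterion by a union bound over all $k \geq k'$, using the stochastic-domination bound from Remark~\ref{rem: Unheard-Dist} for each term. Recall that the $(h, J, \ss, k')$-user-unheard-criterion fails if there exists some $k \geq k'$ with $\Un{h}{J}{k} \geq \pbr{\frac{1-\ss}{2}} k$. So I would first write
\alns{
\P{(h,J,\ss,k')\text{-user-unheard-criterion fails}} \leq \sum_{k \geq k'} \P{\Un{h}{J}{k} \geq \tfrac{1-\ss}{2}\, k}.
}
The subtlety here is that $J$ is itself random (it is the index of the first $\ss$-Nakamoto block after time $s$), so the event $\{\Un{h}{J}{k} \geq \cdots\}$ is not directly a statement about a fixed $J$. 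I would handle this by conditioning on $J = j$ and observing that, conditioned on $\{J = j\}$, the delays from the miners of $j\dash\FRSH$ blocks to the user $h$ are still governed by the i.i.d.\ message-loss mechanism (the event $\{J=j\}$ is measurable with respect to block arrival times and delays among \emph{miners}, which by the independence structure does not bias the delays to the fixed external user $h$ beyond what Remark~\ref{rem: Unheard-Dist} already accounts for). Thus for each fixed $j$, Remark~\ref{rem: Unheard-Dist} gives $\Un{h}{j}{k} \preceq \mathsf{Geom}(1-d)$, a geometric random variable that does \emph{not} depend on $k$.

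Next I would plug in the geometric tail bound: for $\mathsf{G} \sim \mathsf{Geom}(1-d)$ with minimum value $1$, $\P{\mathsf{G} \geq m} = d^{m-1}$, so $\P{\Un{h}{j}{k} \geq \frac{1-\ss}{2} k} \leq d^{\lceil \frac{1-\ss}{2} k\rceil - 1}$. Since $0 < \ss < 1$ and $0 \leq d < 1$ (note $d < 1$ is forced by the hypothesis $\frac{\beta}{1-\beta} < \ss(1-d)$, which requires $1-d > 0$), this is of the form $\rho^{k}$ for a constant $\rho = d^{(1-\ss)/2} < 1$ up to bounded correction factors. Summing the geometric series $\sum_{k \geq k'} \rho^{\Theta(k)}$ yields a bound of the form $C e^{-c k'}$ for constants $C, c > 0$ depending only on $d$ and $\ss$. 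Since this bound holds uniformly in $j$ and in $h$, it survives averaging over the law of $J$, giving the claimed $Ce^{-ck'}$ uniformly over all honest users $h$.

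The main obstacle is the first step: justifying rigorously that conditioning on the (random) identity $J = j$ of the first $\ss$-Nakamoto block after $s$ does not disturb the stochastic-domination bound of Remark~\ref{rem: Unheard-Dist} for the fixed user $h$. The cleanest route is to note that whether $b_j$ is the first $\ss$-Nakamoto block after $s$ is determined by the adversarial arrival process and the pairwise delays \emph{among honest miners} (this is what the catch-up events $\Bf{\cdot}{\cdot}{\ss}$, $\Bb{\cdot}{\cdot}{\ss}$ depend on), whereas $\Un{h}{j}{k}$ depends on delays \emph{from miners of $j\dash\FRSH$ blocks to $h$}. If $h$ is not itself the miner of a relevant block, these two families of delays are independent, and Remark~\ref{rem: Unheard-Dist}'s first bullet applies directly; if $h$ \emph{is} the miner of one of the $j\dash\FRSH$ blocks (or of some later non-$\FRSH$ block), one uses the second or third bullet of Remark~\ref{rem: Unheard-Dist}, where the stochastic domination $\Un{h}{j}{k} \preceq \mathsf{Geom}(1-d)$ already accounts for the worst case and still holds after the conditioning. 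Carefully enumerating these cases and invoking the appropriate bullet is the only delicate part; the rest is a routine geometric-series estimate.
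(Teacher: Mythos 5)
Your proposal is correct and follows essentially the same route as the paper's proof: a union bound over $k \geq k'$, the stochastic domination $\Un{h}{J}{k} \leq \mathsf{Geom}(1-d)$ from Remark~\ref{rem: Unheard-Dist}, a geometric tail estimate, and summation of the resulting geometric series to get $Ce^{-ck'}$. Your extra discussion of conditioning on the random index $J$ is additional care that the paper does not spell out (it invokes the Remark with $J$ directly), but it does not alter the argument.
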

\pf{ 
See Appendix~\ref{A: Probability-Observer-Criterion}.
}

\subsection{Proof Sketch} \label{SS: Sketch}
\(\ss\)-Nakamoto blocks are special blocks that are part of every longest chain in the main-blocktree, for all time after they are mined (Theorem~\ref{thm: Nakamoto-implies-longest-chain}). Therefore, if a transaction is included in an \(\ss\)-Nakamoto block or in any of its ancestors, then it will be included in every longest chain of the main-blocktree. Furthermore, if an honest user is up-to-date with the main-blocktree (specifically, by satisfying a relevant user-unheard-criterion, and therefore having heard of more forward special blocks with respect to the \(\ss\)-Nakamoto block, than adversarial blocks), then it is reasonable to expect that the honest user will also include the transaction in its chain. This idea is formalized in the user-unheard-criterion, and made rigorous in Theorem~\ref{thm: Obs-and-Nak}.

The core idea behind the proof of the main result (Theorem~\ref{thm: Main}) is to show that \(\ss\)-Nakamoto blocks occur frequently with high probability (Theorem~\ref{thm: Bootstrap}). In turn, this implies that a transaction is highly likely to be included in an \(\ss\)-Nakamoto block or its ancestor, soon after it is made. Theorem~\ref{thm: Probability-Observer-Criterion} then shows that it is also highly likely that a relevant user-unheard-criterion will hold for the honest user, once a waiting time has elapsed. 

To prove Theorem~\ref{thm: Bootstrap}, we bootstrap from a milder version of it, which is stated as Theorem~\ref{thm: Exp-sqrt-decay}. Inspired by the proof strategy in~\cite{DKT_2020}, we prove Theorem~\ref{thm: Exp-sqrt-decay} by separating catch-up events into long and short-term catch-up events. We show that long term catch-ups occur rarely as a consequence of Lemma~\ref{lem: Probability-of-catchup-events}, and short-term catch-up probabilities are bounded using a lower bound on the probability that the \(j\)-th block is an \(\ss\)-Nakamoto block (Theorem~\ref{thm: Lower-bound-on-Nakamoto}), and the fact that non-overlapping catch-up events are independent (Lemma~\ref{lem: B-are-independent}). This establishes that a transaction is highly likely to be included in the main-blocktree for all future time, once a waiting period has elapsed. This proves thereom~\ref{thm: Bootstrap}.

Next, we show that an honest user is likely to always be up-to-date with the main-blocktree, forever after a waiting time. This is done by explicitly bounding the probability that a relevant user-unheard-criterion is violated using tools from stochastic analysis. This proves Theorem~\ref{thm: Probability-Observer-Criterion}.  

Theorems~\ref{thm: Bootstrap} and~\ref{thm: Probability-Observer-Criterion} together imply the main result. The details of this proof are presented in Appendix~\ref{A: Main}.

\section{Conclusion} 
    In this work, we introduced the \(0\dash\infty\) model: a framework to study the impact of random message losses on the security of the proof-of-work longest-chain protocol. We investigated the security of this protocol by analyzing the transmission-graph, a dynamically evolving graph that captures the delays incurred by the blocks mined by honest miners. Specifically, we studied special sequences of blocks and identified random variables associated with them that are amenable to analysis. These random variables were used to define useful objects and desirable events, such as \(\ss\)-Nakamoto blocks and user-unheard-criterion respectively. These ideas allowed us to generalize analysis techniques from the synchronous delay model to a setting where delays are possibly infinite. We showed that the condition \(\frac{\beta}{1-\beta} < 1-d\) is sufficient for a transaction to satisfy desired security properties except with a probability that decays almost exponentially in the security parameter. This greatly improved the known threshold of the fraction of adversarial power that is tolerable for a given probability of message loss in an instance of point-to-point communication.     

\bibliographystyle{alpha}
\bibliography{bibliography.bib}

\appendix
        \section{Proofs of Deterministic Results}
    \subsection{Proof of Theorem~\ref{thm: Nakamoto-implies-longest-chain}} \label{A: Nakamoto-implies-longest-chain}
    \begin{reptheorem}{thm: Nakamoto-implies-longest-chain}
    
\end{reptheorem}
\pf{ 
Let \(\Njss\) be the event that \(b_j\) is an \(\ss\)-Nakamoto block. Recall that
\alns{ 
\Njss = \sbr{ \bigcap_{i\colon i < j} \Bbc{i}{j}{\ss}}  \bigcap \sbr{ \bigcap_{k\colon k > j} \Bfc{j}{k}{\ss}}
}
Suppose \(\Njss\) occurs, so that \(b_j\) is an \(\ss\)-Nakamoto block. 
\begin{enumerate}[label = (\roman*)]
    \item Let \(t \geq \tau_j\). We show that \(b_j\) is the unique honest block at its height in \( \MBt\), i.e. for any \(i,k \geq 0\) such that \(i < j < k\), we show that
    \aln{ 
        \height{\MB}{b_{i}} < \height{\MB}{b_{j}} < \height{\MB}{b_k}. \label{eq: height-monotonicity}
    }
    We begin by proving the first inequality in~\eqref{eq: height-monotonicity}. Since \(\Njss\) occurs, it follows that the event \(\Bbc{i}{j}{\ss}\) occurs. Therefore, we have
    \alns{ 
        \height{\MB}{b_j} - \height{\MB}{b_i} &\stackrel{(a)}{\geq} \BS{j}{i+1} - \BU{j}{i} \\
        &\stackrel{(b)}{\geq} \ss \cdot \BS{j}{i+1} - \BU{j}{i} \\
        &\stackrel{(c)}{>} \A{b_i}{b_j},  \label{eq: Evj1-implication}
    }
    where (a) is from Lemma~\ref{lem: MB-height-diff}, (b) follows from the fact that \( \ss \leq 1\), and (c) is the definition of \(\Bbc{i}{j}{\ss}\). Since \(\A{b_i}{b_j} \geq 0\), it follows that \(\height{\MB}{b_i} < \height{\MB}{b_j}\). Similarly, the occurrence of \( \Bfc{j}{k}{\ss}\) implies that \(\height{\MB}{b_j} < \height{\MB}{b_k}\). Thus, \(b_j\) is the unique honest block at its height in \(\MBt\).

    \item We show that \(b_j\) is in every longest chain of \(\MBt\). Let \(\mathcal{V}\) be any longest chain in \(\MBt\), i.e.
    \[
    \mathcal{V} = v_0 - v_1 - \cdots - v_{\height{\MB}{b_j}} - \cdots - v_{m-1} - v_{m},
    \]
    where \(v_0\) is the genesis block, \(v_{\height{\MB}{b_j}} =: v\) is the block at the height of \(b_j\), and \(m\) is the height of \(\MBt\). It suffices to show that \(v = b_j\). In fact, since \(b_j\) is the unique honest block at its height, it suffices to only show that \(v\) must be an honest block. Starting from \(v\) and traversing blocks in \( \mathcal{V}\) in the backward direction, let \( b_{i}  \) denote the first honest block encountered in \(\mathcal{V}\), not including \(v\) itself. Since the genesis block is honest, such a \(b_i\) exists. Similarly, we also want to traverse blocks in \(\mathcal{V}\) along the forward direction. Here, we have two cases:
    
    \textbf{Case 1:} There is an honest block in \(\mathcal{V}\) after \(v\). In this case, let \(b_k\) be the first honest block in \(\mathcal{V}\) after \(v\). Let \(H\) denote \(\height{\MB}{b_k}\), and \(\mathcal{V}_1\) be the sequence \( b_{i} - \cdots - v - \cdots - b_{k}\). 
    
    \textbf{Case 2:} There is no honest block in \(\mathcal{V}\) after \(v\). In this case, let \(b_k\) be the first honest block mined after time \(t\). Let \(H = m\) denote the height of the main-blocktree, and \(\mathcal{V}_1\) be the sequence \( b_{i} - \cdots - v - \cdots - v_m\). 

    In both cases, let \(N\) denote the number of blocks in \(\mathcal{V}_1\), not including \(b_i\). Notice that the blocks are at consecutive heights. We have:     
    \alns{ 
        N &= H - \height{\MB}{b_{i}}  \\
        & \stackrel{(a)}{\geq} \height{\MB}{b_j} + \FS{j}{k-1} - \FU{j}{k} - \height{\MB}{b_i} \\
        & \stackrel{(b)}{\geq}  \pbr{ \FS{j}{k-1} - \FU{j}{k}} +\pbr{ \BS{j}{i+1} - \BU{j}{i}}  %\label{eq: H1}
        \\
        & \stackrel{(c)}{\geq} \pbr{\A{b_i}{b_j} + 1 } + \pbr{ \A{b_j}{b_k} + 1 }  %\label{eq: H2}
        \\
        & > \A{b_i}{b_k} + 1,
    }
    where (a) follows from Lemma~\ref{lem: MB-height-diff} in case 1, and the fact that \(j\dash\FRSH\) blocks are mined at different heights in case 2. Further, (b) is due to Lemma~\ref{lem: MB-height-diff} and the fact that \(\ss \leq 1\), and (c) follows because the events \(\Bbc{i}{j}{\ss} \) and \( \Bfc{j}{k}{\ss}\) occur. However, by definition of \(b_i\) and \(b_k\), all blocks in \(\mathcal{V}_1\) after \(b_i\) except for \(b_k\) and possibly \(v\) are adversarial. Therefore, there must be at least one honest block in \(\mathcal{V}_1\) strictly between \(b_i\) and \(b_k\). Since the only possibility for this is \(v\), it follows that \(v\) must be an honest block. Since \(b_j\) is the unique honest block at the height of \(v\), we conclude that \(b_j\) in every longest chain of \(\MBt\), as desired.
    
    \item Let \(v\) be a block in \(\MB(\tau_j)\) such that \( \height{\MB}{v} = \height{\MB}{b_j}\). Let \(\mathcal{T}\) be the tine of ancestors of \(v\). Traversing blocks along the backward direction starting from \(v\), let \(b_i\) denote the first honest block encountered. The portion of \(\mathcal{T}\) between \(b_i\) and \(v\) consists of only adversarial blocks. However,
    \alns{ 
    \height{\MB}{v} - \height{\MB}{b_i} &= \height{\MB}{b_j} - \height{\MB}{b_i} \\
    & \stackrel{(a)}{\geq} \ss \cdot \BS{j}{i+1} - \BU{j}{i} \\
    & \stackrel{(b)}{>} \A{b_i}{b_j},
    }
    where (a) is from Lemma~\ref{lem: MB-height-diff}, and (b) is true because \(b_j\) is an \(\ss\)-Nakamoto block. Since \(v\) is the only possibility for an honest block in the portion of \(\mathcal{T}\) after \(b_i\), it follows that \(v\) must be an honest block. From (i), we conclude that \(v = b_j\). This concludes the proof. 
    
    \item Let \(k > j\). Since \(b_j\) is an \(\ss\)-Nakamoto block, we have from the proof of (i) that \( \height{\MB}{b_k} > \height{\MB}{b_j}\).  Let \(v\) denote the ancestor of \(b_k\) at the height of \(b_j\) in the main-blocktree. Starting from \(v\) and traversing blocks along the ancestors of \(v\), let \(b_i\) denote the first honest block encountered. This is exactly case 1 in the proof of (ii), and it follows that \(v = b_j\).
%    \item If \(|T| \leq \height{\MB}{b_j}\), then the result is trivial. Let \(v\) denote the block in \(T\) at the height of \(b_j\) in the Main-Blocktree. 
\end{enumerate}
}

    %\newpage

    \subsection{Proof of Theorem~\ref{thm: Obs-and-Nak}} \label{A: Obs-and-Nak}
    \begin{reptheorem}{thm: Obs-and-Nak}
    
\end{reptheorem}
\pf{ 
Let \(|\Chain{h}{t}|\) denote the number of blocks in \( \Chain{h}{t}\). Let \(\tau_j\) denote the mining time of \(b_j\), and let \(\tau_{j}^{k}\) denote the mining time of \(b_j^k\). Since \(b_j\) is an \(\ss\)-Nakamoto block, we know that the event \(\Bfc{j}{k}{\ss}\) occurs for all \(k > j\). In turn, this implies
\alns{ 
\A{b_j^0}{b_{j}^{k+1}} < \ss \cdot \pbr{k+1} \quad \forall k > 0. 
}
However,  \(\ss \pbr{k+1} \leq \pbr{\frac{1+\ss}{2}}k\) whenever \( k \geq k_0\). Further, since \(\pbr{h,j,\ss, k_0} \)-user-unheard-criterion holds, we have that \(\Un{h}{j}{k} < \pbr{\frac{1-\ss}{2}}k \) whenever \(k \geq k_0\). These facts imply:
\aln{ 
\A{b_j^0}{b_{j}^{k+1}} < k - \Un{h}{j}{k} \quad \forall k \geq k_0. \label{eq: Observer_Criterion_Implication}
}
Fix \(k \geq k_0\). Let \(t\) be such that \( \tau_j^k \leq t < \tau_j^{k+1}\). We show that \(h\) includes \(b_j\) in its chain at time \(t\). 

Since \(\A{b_j^0}{b_j^{k+1}} \geq 0 \), it follows from~\eqref{eq: Observer_Criterion_Implication} that \( \Un{h}{j}{k} < k\). Therefore, \(h\) has heard of at least one \(j\dash\FRSH\) block. Therefore, \( |\Chain{h}{t}| \geq \height{\MB}{b_j} \). Let \(v \in \Chain{h}{t}\) be the block at the height of \(b_j\). To show that \(b_j \in \Chain{h}{t}\), it suffices by statement (iv) in Theorem~\ref{thm: Nakamoto-implies-longest-chain} to prove that \(\Chain{h}{t}\) contains at least one honest block mined in \( [\tau_j^k, \infty)\). 

Let \(\Chain{h}{t}^{\lceil v}\) denote the sub-chain of \(\Chain{h}{t}\) starting from \(v\), i.e. \(\Chain{h}{t}^{\lceil v}\) contains blocks of \( \Chain{h}{t} \) that are at height no less than that of \(v\). We show that~\eqref{eq: Observer_Criterion_Implication} implies that \(\Chain{h}{t}^{\lceil v}\) cannot contain all adversarial blocks. We know from statement (iii) of Theorem~\ref{thm: Nakamoto-implies-longest-chain} that \(b_j\) is the unique block at its height in \(\MB(\tau_j)\). Therefore, all blocks in \(\Chain{h}{t}^{\lceil v}\) are mined at or after time \(\tau_j\). Since \(t < \tau_j^{k+1}\), we have
\alns{ 
\A{b_j^0}{b_j^{k+1}} \stackrel{(a)}{<}  k - \Un{h}{j}{k} \stackrel{(b)}{\leq} |\Chain{h}{t}^{\lceil v}|, 
}
where (a) is the same as~\eqref{eq: Observer_Criterion_Implication} and (b) follows from the fact that \(h\) has heard of \(b_j^m\), where \(m = k - \Un{h}{j}{k}\). Therefore, \(h\) adopts a chain that has length at least \(\height{\MB}{v} + m \). We conclude that there exists an honest block \(b_k \in \Chain{h}{t}^{\lceil v}\). From statement (iv) in Theorem~\ref{thm: Nakamoto-implies-longest-chain}, it follows that \(\Chain{h}{t}\) contains \(b_j\).

The above argument is true for all \(t \in [\tau_j^k, \tau_j^{k+1} )\), so it follows that \(h \in \Chain{h}{t}\) for all \(t \in [\tau_j^k,\tau_j^{k+1})\). Since this is true for all \(k \geq k_0\), it follows that \(h \in \Chain{h}{t}\) for all \(t \geq \tau_j^{k_0}\), as desired.
}

    \section{Proofs of Probabilistic Results}
    \subsection{Proof of Lemma~\ref{lem: B-are-independent}} \label{A: B-are-independent}
    \begin{replemma}{lem: B-are-independent} 
    
\end{replemma}
\pf{ 
We prove only the first statement, since the second uses a similar argument. For any \(m \in \{ i,j,j',k\}\), let \(\tau_m\) denote the mining time of \(b_m\). The LHS of the event
\alns{ 
\Bf{j'}{k}{\ss} &\colon \A{b_{j'}}{b_k} \geq \ss \cdot \FS{j'}{k-1} - \FU{j'}{k}
}
depends on the number of adversarial arrivals in \( [\tau_{j'},\tau_{k}]\). Further, the RHS depends on the delays from \(b_{j'}\) to all the honest blocks mined in \( [\tau_{j'},\tau_{k}) \). 

In contrast, the LHS of the event
\alns{ 
\Bb{i}{j}{\ss} &\colon \A{b_i}{b_j} \geq  \ss \cdot \BS{j}{i+1}  - \BU{j}{i}
}
depends on the number of adversarial arrivals in \( [\tau_{i},\tau_{j}]\). Further, the RHS depends on the delays from all the honest blocks mined in \( [\tau_{i},\tau_{j}) \) to \(b_{j}\). 

Since honest and adversarial arrivals are independent Poisson processes, and since the delay associated with any two blocks in \( [\tau_{i},\tau_{j}]\) is independent of the delay associated with any two blocks in \( [\tau_{j'},\tau_{k}]\), and since the two intervals do not overlap, it follows that \( \Bb{i}{j}{\ss}\) and \( \Bf{j'}{k}{\ss}\) are independent.
}

    %\newpage
    
    \subsection{Proof of Theorem~\ref{thm: Lower-bound-on-Nakamoto}} \label{A: Lower-bound-on-Nakamoto}
    \begin{reptheorem}{thm: Lower-bound-on-Nakamoto} 
    
\end{reptheorem}
\pf{ 
Fix \(j\). Recall that the event \( \Njss = \Nb{j}{\ss} \bigcap \Nf{j}{\ss}\), where \(\Nb{j}{\ss} = \bigcap\limits_{i\colon i < j} \Bbc{i}{j}{\ss}\) and \(\Nf{j}{\ss} = \bigcap\limits_{k\colon k > j} \Bfc{j}{k}{\ss}\). From Lemma~\ref{lem: B-are-independent}, it follows that \(\Nb{j}{\ss}\) and \(\Nf{j}{\ss}\) are independent events. Therefore, 
\alns{ 
\P{\Njss} = \P{\Nb{j}{\ss} \bigcap \Nf{j}{\ss}} = \P{\Nb{j}{\ss}} \P{\Nf{j}{\ss}}.
}
Thus, it suffices to show the existence of \( p > 0 \) such that \(\P{\Nb{j}{\ss}} \geq p\) and \( \P{\Nf{j}{\ss}} \geq p\).  

\paragraph{Notation:} If an i.i.d. random process is a sequence of random variables with known distribution, say Geometric or Bernoulli with parameter \(q\), we refer to the \(i\)-th random variable in the sequence as \( \mathsf{Geom}_i(q)\) or \(\mathsf{Be}_i(q)\) respectively.

\vskip 0.2 in
\noindent \textbf{Showing \( \P{ \Nf{j}{\ss}} \geq p > 0\):}  By definition, the event \(\Nf{j}{\ss}\) occurs if
\aln{ 
\A{b_j}{b_k} - \ss\cdot\FS{j}{k-1} < - \FU{j}{k} \ \forall k > j. \label{eq: Ev_2-occurrence}
}
Notice that the LHS are RHS are independent random variables. The random variables inn the LHS are:
\aln{ 
\A{b_j}{b_k} &= \sum_{i=1}^{k-j}  \pbr{ \mathsf{Geom}_{i}\pbr{1-\beta} - 1}, \label{eq: Adv-as-sum} \\
\FS{j}{k-1} &= 1 + \sum_{i=1}^{k-j-1} \mathsf{Be}_{i}\pbr{1-d}, \label{eq: FRSH-as-sum}
}
where equation~\eqref{eq: Adv-as-sum} follows from the fact that there are \( \mathsf{Geom}\pbr{1-\beta} - 1 \) adversarial arrivals between two successive honest arrivals, and the fact that the number of adversarial arrivals in disjoint intervals is independent. Equation~\eqref{eq: FRSH-as-sum} follows from Remark~\ref{rem: FS-Dist}.

Let \(\pbr{\X_i \colon i \geq 1}\) be an i.i.d. random process, with \( \X_i \sim \mathsf{Geom}_i\pbr{1-\beta} -1 - \ss \cdot\mathsf{Be}_i\pbr{1-d}\). Let \(\W_{j} = \sum_{i = 1}^{j} \X_i\) denote the sum of the first \(j\) terms of the process \(\pbr{\X_i \colon i \geq 1} \). Finally, it follows from Remark~\ref{rem: FU-Dist} that \( \pbr{\FU{j}{k} \colon k > j}\) is identical to the i.i.d. random process \( \pbr{\Y_i\colon i > 0}\), with \(\Y_i \sim \mathsf{Geom}_i\pbr{1-d} - 1 \). Using these random variables, equation~\eqref{eq: Ev_2-occurrence} can be equivalently stated as
\aln{ 
\sum_{i=1}^{k-j} \X_i < -\Y_{k-j} \quad \forall k >j. \label{eq: kth-desired-inequality}
}
Let \(\sep\) be a constant to be determined later, such that \( 0 < \sep < \ss\). The inequality in~\eqref{eq: kth-desired-inequality} holds if \( \EA \cap \EB \) occur, where
\alns{ 
\EA &\colon \sum_{j=1}^{i} \X_j < - \sep i \quad \forall i \geq 1 \\
\EB &\colon -\sep i < - \mathsf{Y}_i \quad \forall i \geq 1.
}
Here, \( \EA\) and \( \EB \) are independent events. Therefore, we have \( \P{\Nf{j}{\ss}} \geq \P{\EA \cap \EB} = \P{\EA} \P{\EB}\). It suffices to show the existence of \(p_1,p_2 > 0\) such that \(\P{\EA} \geq p_1 \) and \( \P{\EB} \geq p_2\). First, we bound \( \P{\EB}\).
\alns{ 
\P{\EB} = \P{\bigcap_{i \geq 1} \cbr{\Y_i \leq \sep i}} = \prod_{i=1}^{\infty} \pbr{1-d^{1+\sep i}} =: p_2 > 0,
}
since \( \pbr{ \Y_i\colon i \geq 1} \) is an i.i.d process with \( 1+\Y_i \sim \mathsf{Geom}_i\pbr{1-d}\). Notice that \( p_2 > 0 \) for all \( 0 \leq d < 1\) and for all \( \sep > 0\).

Next, we bound \( \P{\EA} \). Fix some \( \ell \in \mathbb{N} \), and let \(c\) be a positive constant given by \(c = \pbr{\ss - \sep}\ell \). Consider the following two desirable events:
\alns{ 
\G_1 & \colon \W_{\ell} \leq -\sep \ell - c \\
\G_2 & \colon \max_{m \geq 0} \pbr{ \W_{\ell + m} - \W_{\ell} + \sep m } < c
}
It is clear that \(\G_1 \cap \G_2 \implies \EA \), and that \( \G_1 \) and \(\G_2\) are independent events, since Poisson arrivals over disjoint intervals are independent. Therefore, it suffices to find constants \(p_{11}, p_{12} > 0\) such that \(\P{\G_1} \geq p_{11} \) and \( \P{\G_2} \geq p_{12}\).
\aln{ 
\P{\G_1} &= \P{\sum_{i=1}^{\ell}\X_i \leq -\sep \ell - c} \\
&= \P{\sum_{i=1}^{\ell}\pbr{ \mathsf{Geom}_i\pbr{1-\beta} -1 - \ss \cdot \mathsf{Be}_i\pbr{1-d} }\leq -\sep \ell - c} \\
&= \P{ \sum_{i=1}^{\ell} \pbr{ \mathsf{Geom}_i\pbr{1-\beta} - \ss\cdot \mathsf{Be}_i\pbr{1-d}} \leq \pbr{1-\sep} \ell - c} \label{eq: X-pre}
\\
&\geq \P{\bigcap_{i=1}^{\ell} \sbr{ \cbr{ \mathsf{Geom}_i\pbr{1-\beta} = 1} \cap \cbr{\mathsf{Be}_i\pbr{1-d} = 1} }} \label{eq: X}\\
&= p_{11} > 0,
}
for fixed \( \ell\), since \(\sep < \ss\) and since \( c \) is chosen to be sufficiently small. Here, the inequality in~\eqref{eq: X} holds because one way to satisfy the inequality in the event in~\eqref{eq: X-pre} is when both LHS and RHS of the inequality equal \( \pbr{1-\ss} \ell \).

It remains to show \(p_{12} > 0\). Consider the random process \( (\sfZ_i\colon i \geq 1) \), where \( \sfZ_i = \mathsf{X}_i + \sep\). Then, \( \pbr{\sum_{i=1}^{m} \sfZ_{i}\colon m \geq 1}\) and \( \pbr{\W_{\ell+m} - \W_{\ell} + \sep m\colon m \geq 1 } \) follow the same distribution. The Kingman bound~\cite{Kingman_1964} yields,
\alns{ 
\P{\G_2} = \P{\max_{m \geq 0 } \sum_{i=1}^{m} \sfZ_i < c} \geq 1 - e^{-\theta^* c} =: p_{12},
}
where
\alns{ 
\theta^* = \sup \cbr{\theta > 0\colon \mathbb{E}\sbr{e^{\theta \sfZ_1} \leq 1}}.
}
Since \(c > 0\), we see that \( p_{12} > 0 \) if \( \theta^* > 0\). We show that \(\theta^* > 0\) if \(\frac{\beta}{1-\beta} < \ss \pbr{1 - d} \). A simple computation yields
\alns{ 
\E{\sfZ_1} &= \frac{\beta}{1-\beta} - \ss \cdot \pbr{1-d} + \sep, 
%\\
% DOESNT INCLUDE \ss
%\E{e^{\theta\sfZ_1}} &= \frac{e^{\sep \theta} \pbr{1-\beta}}{1-\beta e^{\theta}}\pbr{d+\pbr{1-d}e^{-\theta}}, \quad \theta < -\log\pbr{\beta},
}
If \(\frac{\beta}{1-\beta} < \ss \pbr{1 - d}\), then there exists \( 0 < \sep < \ss \) such that \( \E{\sfZ_1} < 0\). Since \(\E{e^{\theta\sfZ_1}}_{\theta = 0} = 1\), and \( \frac{d}{d\theta} \E{e^{\theta\sfZ_1}}_{\theta = 0} = \E{\sfZ_1} < 0\), we know that there exists \(\theta_1 > 0\) such that \(\E{e^{\theta\sfZ_1}}_{\theta = \theta_1} < 1\), which then implies \( \theta^* > \theta_1 > 0\). 

The above argument is summarized as
\aln{ 
\P{\Nf{j}{\ss}} \geq \P{\EA}\P{\EB} \geq \P{\G_1} \cdot \P{\G_2} \cdot \P{\EB} \geq p_{11} \cdot p_{12} \cdot p_{2} =: p > 0. \label{eq: p}
}

\vskip 0.2 in
\noindent \textbf{Showing \( \P{ \Nb{j}{\ss}} \geq p > 0\):} For any \(i,k \geq 0\) such that \(j-i = k-j\), it follows from Remarks~\ref{rem: FS-Dist},~\ref{rem: FU-Dist},~\ref{rem: BS-Dist} and~\ref{rem: BU-Dist} that \(\P{\Bb{i}{j}{\ss}} = \P{\Bf{j}{k}{\ss}}\). By symmetry, we have
\alns{ 
\P{\Nb{j}{\ss}} &= \P{\bigcap_{i = 0}^{j} \Bbc{i}{j}{\ss}}\\
& = \P{\bigcap_{k=j+1}^{2j+1} \Bfc{j}{k}{\ss}} \\
& \geq \P{\bigcap_{k: k > j} \Bfc{j}{k}{\ss}} \\
& = \P{\Nf{j}{\ss}} \\
& \geq p, 
}
where the constant \(p > 0\) is the same as in~\eqref{eq: p}. This concludes the proof.
}
    %\newpage

    \subsection{Proof of Lemma~\ref{lem: Probability-of-catchup-events}} \label{A: Probability-of-catchup-events}
    \begin{replemma}{lem: Probability-of-catchup-events}
    
\end{replemma}
\pf{ 
First, we show the existence of \(c > 0\) such that \(\P{ \Bf{j}{k}{\ss} } \leq e^{-c \pbr{ k-j} }\). Recall that the event \( \Bf{j}{k}{\ss} \) is defined as
\alns{ 
\Bf{j}{k}{\ss} \colon \ \ \A{b_j}{b_k} \geq \ss \cdot \FS{j}{k-1} - \FU{j}{k}.
}
From Remarks~\ref{rem: FS-Dist},~\ref{rem: FU-Dist},~\ref{rem: BS-Dist} and~\ref{rem: BU-Dist}, these random variables are characterized as:
\alns{ 
\A{b_j}{b_k} &= \sum_{i=1}^{k-j}  \pbr{ \mathsf{Geom}_{i}\pbr{1-\beta} - 1}, %\label{eq: Adv-as-sum-2} 
\\
\FS{j}{k-1} &= 1 + \sum_{i=1}^{k-j-1} \mathsf{Be}_{i}\pbr{1-d}, %\label{eq: FRSH-as-sum-2}
\\
\FU{j}{k} &= \mathsf{Geom}\pbr{1-d} - 1. %\label{eq: FRU-as-Geom-2}
}
Further, the three random variables \( \A{b_j}{b_k}, \FS{j}{k-1}, \FU{j}{k} \) are mutually independent. Consider the following desirable events associated with them.
\alns{ 
\G_1 &\colon \A{b_j}{b_k} < \frac{\beta}{1-\beta} \pbr{k-j} \pbr{1+\eps} \\
\G_2 &\colon \FS{j}{k-1} > \pbr{1-d}\pbr{k-j}\pbr{1-\delta} \\
\G_3 &\colon \FU{j}{k} \leq \pbr{k-j}\sbr{ \pbr{1-\delta} \ss \cdot \pbr{1-d} - \frac{\beta}{1-\beta}\pbr{1+\eps}} 
}
Clearly, \( \G_1 \cap \G_2 \cap \G_3 \implies \Bfc{j}{k}{\ss} \). Therefore, we have
\aln{ 
\P{\Bfc{j}{k}{\ss}} \geq \P{\G_1 \cap \G_2 \cap \G_3}  = \P{\G_1} \P{\G_2} \P{\G_3}. \label{eq: Bfc-main}
}
It suffices to find bounds for each term separately in the RHS of~\eqref{eq: Bfc-main}.

\paragraph{Bounding \(\G_1\):} Since \( \E{\A{b_j}{b_k}} = \frac{\beta}{1-\beta} \pbr{k-j}\), it follows from the Chernoff bound that
\alns{
\alnd{ 
\P{\G_1^{\mathsf{c}}}&=\P{\A{b_j}{b_k} \geq \frac{\beta}{1-\beta} \pbr{k-j} \pbr{1+\eps}} \leq e^{- c_1 \pbr{k-j}}, \\
\text{where } c_1 &= \frac{\beta}{1-\beta}\pbr{1+\eps}\log\pbr{1+\eps} + \frac{1}{1-\beta}\pbr{1+\beta\eps} \log\pbr{1+\beta\eps} > 0 %\label{eq: A-Chernoff}
}
}

\paragraph{Bounding \( \G_2\):} It follows from the Hoeffding bound that
\alns{ 
\P{\G_2^{\mathsf{c}}} &= \P{ 1 + \sum_{i=1}^{k-j-1} \mathsf{Be}_{i}\pbr{1-d} \leq  \pbr{1-d}\pbr{k-j}\pbr{1-\delta}} \\
& \leq \P{ \sum_{i=1}^{k-j} \mathsf{Be}_{i}\pbr{1-d} \leq  \pbr{1-d}\pbr{k-j}\pbr{1-\delta} } \\
& \leq e^{-c_2\pbr{k-j}}, 
}
where \(c_2 = \frac{\delta^2 \pbr{1-d}}{2} > 0\).

\paragraph{Bounding \( \G_3\):} Consider the quantity \( c_4 = \sbr{ \pbr{1-\delta}\ss \cdot \pbr{1-d} - \frac{\beta}{1-\beta}\pbr{1+\eps}}\). If \( \frac{\beta}{1-\beta} < \ss \pbr{ 1-d} \), then there exist \( \ \eps,\delta > 0\) such that \( c_4 > 0 \). Since \(1+\FRU_j\pbr{b_k}\) follows a geometric distribution, we have
\alns{ 
\P{\G_3^{\mathsf{c}}} &=  \P{ \mathsf{Geom}\pbr{1-d} - 1 >  c_4\pbr{k-j} } \\
&= \P{\mathsf{Geom}\pbr{1-d} > 1 + c_4\pbr{k-j}} \\
&= 1 - d^{c_4\pbr{k-j}} \\
&\geq 1 - e^{c_3\pbr{k-j}},
}
for some \(c_3 > 0\).

\vskip 0.2 in
\noindent Combining these facts together, we revisit~\eqref{eq: Bfc-main}. We have
\aln{ 
\P{\Bfc{j}{k}{\ss}} &\geq \P{\G_1} \P{\G_2} \P{\G_3} \\
&= \pbr{1 - \P{\G_1^{\mathsf{c}}}}\pbr{1 - \P{\G_2^{\mathsf{c}}}}\pbr{1 - \P{\G_3^{\mathsf{c}}}} \\
&\geq \pbr{1 - e^{-c_1(k-j)}} \pbr{1 - e^{-c_2(k-j)}} \pbr{1 - e^{-c_3(k-j)}} \\
&\geq \pbr{1-e^{-c_0(k-j)}}^3 \\
&\geq 1-e^{-c\pbr{k-j}}, \label{eq: Bf-constant}
}
where \(c_0 = \max\cbr{c_1,c_2,c_3} > 0\), and subsequently \(c > 0\). We conclude the existence of \(c>0\) for which \(\P{\Bf{j}{k}{\ss}} \leq e^{-c \pbr{k-j}}\).

It remains to show that \(\P{\Bb{i}{j}{\ss}} \leq e^{-c \pbr{j-i}}\). The proof is very similar, so the details are omitted. Recall that
\alns{ 
\Bb{i}{j}{\ss} \colon \ \ \A{b_i}{b_j} \geq \ss \cdot \BS{j}{i+1} - \BU{j}{i},
}
where the random variables involved may be written as
\alns{ 
\A{b_i}{b_j} &= \sum_{k=1}^{j-i}  \pbr{ \mathsf{Geom}_{k}\pbr{1-\beta} - 1}, %\label{eq: Back-Adv-as-sum} 
\\
\BS{j}{i+1} &= 1 + \sum_{k=1}^{j-i-1} \mathsf{Be}_{k}\pbr{1-d}, %\label{eq: BRSH-as-sum}
\\
\BU{j}{i} &= \mathsf{Geom}\pbr{1-d} - 1, %\label{eq: BRU-as-Geom}
}
Thus, if \(k-j\) = \(j-i\), we see that
\begin{itemize}
    \item \(\A{b_i}{b_j} \) and \( \A{b_j}{b_k}\) follow the same distribution.
    \item \(\BS{j}{i+1} \) and \(\FS{j}{k-1}\) follow the same distribution.
    \item \(\BU{j}{i} \) and \( \FU{j}{k} \) follow the same distribution.
\end{itemize}
Therefore, the same concentration inequalities apply, and we conclude that for the same constant \(c\) in~\eqref{eq: Bf-constant}, we have \(\P{\Bb{i}{j}{\ss}} \leq e^{-c\pbr{j-i}}\).
}

    %\newpage

    \subsection{Proof of Theorem~\ref{thm: Exp-sqrt-decay}} \label{A: Exp-sqrt-decay}
    \begin{reptheorem}{thm: Exp-sqrt-decay} 
    
\end{reptheorem}
\pf{ 
For any \(i\geq 0\), let \(\tau_i\) denote the mining time of block \(b_i\). Partition the interval \( [s,s+t] \) into \( \sqrt{t} \) intervals of length \( \sqrt{t} \) each. Group these sub-intervals into threes, so that there are \(\sqrt{t}/3\) groups of sub-intervals, namely \( I_{1} \), \( I_{2} \), \( \cdots \), \( I_{\sqrt{t}/3} \). Thus, \(I_{\ell} = \sbr{s+3\pbr{\ell - 1}\sqrt{t}, s+3\ell \sqrt{t}} \). Further, let \( S_{\ell} \) represent the middle sub-interval of \( I_{\ell} \), so that \( S_{\ell} = [s+ \pbr{3 \ell - 2} \sqrt{t}, s + \pbr{3 \ell - 1}  \sqrt{t}]\), as shown in Figure~\ref{fig: Intervals-Split}. 

\begin{figure}[t]
    \centering
    \includegraphics[width = 0.99 \textwidth]{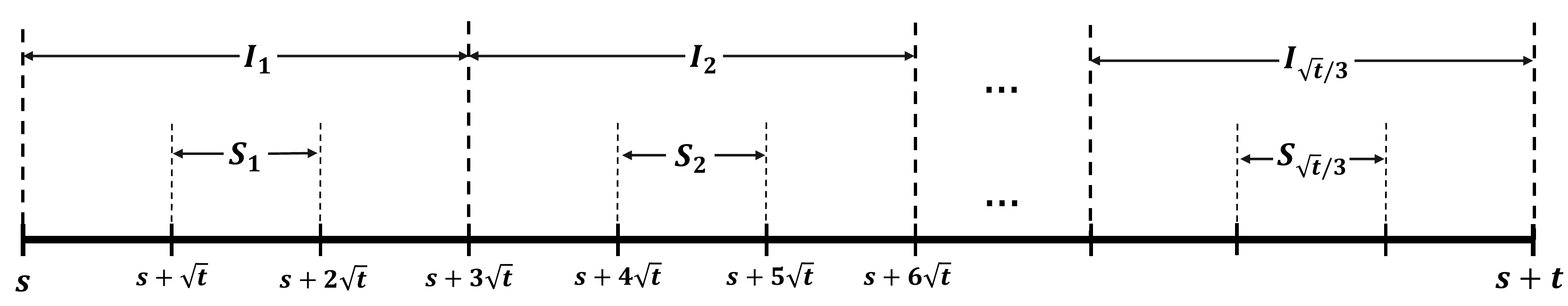}
    \caption{Partitioning \([s,s+t]\) into sub-intervals}
    \label{fig: Intervals-Split}
\end{figure}

Consider the following desirable events.
\alns{ 
\G_1 \colon \bigcap_{j\colon \tau_{j} \in [s+\sqrt{t},s+t-\sqrt{t}]}  \sbr{ \pbr{ \bigcap_{ \substack{ i < j \\ \tau_i < \tau_j - \sqrt{t}  } } \Bbc{i}{j}{\ss} } \bigcap \pbr{ \bigcap_{ \substack{ k > j \\ \tau_k > \tau_j + \sqrt{t} } } \Bfc{j}{k}{\ss}} } \\
\G_2 \colon \bigcup_{\ell = 1}^{\sqrt{t}/3} \mathsf{H}_{\ell}, \text{ where } \mathsf{H}_{\ell} \colon \bigcup_{j\colon \tau_j \in S_{\ell}}\sbr{\pbr{ \bigcap_{ \substack{ i < j \\ \tau_i \geq \tau_j - \sqrt{t} } } \Bbc{i}{j}{\ss} } \bigcap \pbr{ \bigcap_{ \substack{ k > j \\ \tau_k \leq \tau_j + \sqrt{t} } } \Bfc{j}{k}{\ss}}}
}
First, observe that \( \G_1 \cap \G_2 \implies \sbr{B_{s,s+t}^{\pbr{\ss}}}^{\c}\). This is because \( \G_1 \) ensures that no catch-up events (neither \( \Bb{i}{j}{\ss} \), nor \(\Bf{j}{k}{\ss} \)) occur when \(b_i\) and \(b_k\) are separated in time by more than \( \sqrt{t}\) from \(b_j\), for any \(b_j\) mined in \([s+\sqrt{t},s+t-\sqrt{t}]\). This means that the existence of a block \(b_j\) in this interval for which no catch-up event occurs whenever \(i\) and \(k\) are within \(\sqrt{t}\) of \(\tau_j\) is sufficient to ensure that \(b_j\) is a Nakamoto block. This is exactly the event \(\G_2\). Since \( \cup_{\ell} S_{\ell} \subset \sbr{s+\sqrt{t},s+t-\sqrt{t}}\), it follows that \( \G_1 \cap \G_2 \implies \sbr{B_{s,s+t}^{\pbr{\ss}}}^{\mathsf{c}} \). Thus,
\alns{ 
\P{B_{s,s+t}^{\pbr{\ss}}} \leq \P{\Gc_1} + \P{\Gc_2}.
}
Next, we bound the probability of each term in the RHS separately.

\paragraph{Bounding \(\P{\Gc_1} \):} Fix \(\delta > 0\). Consider the following events: 
\alns{ 
\D_1 & \colon \cbr{ \#\cbr{i\colon \tau_i \in [s,s+t]} > 2\lambda_h t }, \\
\D_2 &\colon \cbr{\exists \ \tau_i , \tau_k \in [s,s+t]\colon \pbr{k-i} < \pbr{1-\delta} \lambda_h \sqrt{t}, \tau_k - \tau_i > \sqrt{t}}.
}
By the tail bound for Poisson random variables, we know that \(\P{\D_1} \leq e^{-c_0 t} \) for some \(c_0 > 0\). We now show that \(\P{\D_2} \leq e^{-c_1\sqrt{t}} \) for some \(c_1 > 0\). Let \(T_{i,k} := \tau_k - \tau_i\) be the random variable denoting the time between the \(i\)-th and \(k\)-th mining. Let \(\Mt = \pbr{1-\delta}\lambda_h \sqrt{t} \). Notice that
\alns{ 
\bigcap_{i \in \sbr{s,s+t}} \cbr{ T_{i,i+\Mt} < \sqrt{t}} \implies \D_2^{\mathsf{c}}.
}
For any honest arrival time \(\tau_i\), we have
\alns{ 
\E{T_{i,i+\Mt}} = \E{\sum_{j = i}^{i+\Mt} \pbr{\tau_j - \tau_{j-1}}} = \frac{\Mt}{\lambda_h} = \pbr{1-\delta}\sqrt{t}. 
}
Applying the Chernoff bound, we see that there exists \(c > 0\) such that
\alns{
\P{T_{i, i+\Mt} >  \sqrt{t} } \leq \P{ T_{i, i+\Mt} > \pbr{1-\delta} \sqrt{t}  + \delta\sqrt{t}}  \leq e^{-c\sqrt{t}}.
}
Thus, we get
\alns{ 
\P{\D_2} &= \P{\bigcup_{ i\colon \tau_i \in [s,s+t] } \cbr{T_{i,i + \Mt} > \sqrt{t} }} \\
& \leq \P{ \pbr{ \bigcup_{ i\colon \tau_i \in [s,s+t]} \cbr{T_{i,i+\Mt} > \sqrt{t} } } \cap \Dc_1 } + \P{\D_1} \\
& \leq \pbr{\sum_{i=1}^{2\lambda_h t} \P{T_{i,i+\Mt} > \sqrt{t} } } + e^{-c_0 t}\\
& \leq e^{-c_1 \sqrt{t}},
}
for some \(c_1 > 0\). Thus, \( \P{\D_2} \leq e^{-c_1 \sqrt{t}} \). We may therefore bound \( \P{\Gc_1} \) as
\alns{ 
\P{\Gc_1} &\leq \P{\D_1 \cup \D_2} + \P{\Gc_1 \cap \Dc_1 \cap \Dc_2 } \\
& \leq \P{\D_1} + \P{\D_2} + \P{\Gc_1 \cap \Dc_1 \cap \Dc_2 } \\
& \leq e^{-c_0 t} + e^{-c_1 \sqrt{t}} + \sum_{j=1}^{2 \lambda_h t} \sbr{ \pbr{\sum_{i = 0}^{j-\Mt} \P{\Bb{i}{j}{\ss}}} + \pbr{\sum_{k= j + \Mt }^{\infty} \P{\Bf{j}{k}{\ss}}}} \\
&\leq e^{-c_0 t} + e^{-c_1 \sqrt{t}} + \sum_{j=1}^{2 \lambda_h t} \sbr{ \pbr{\sum_{i = 0}^{j-\Mt} e^{-c\pbr{j-i}}} + \pbr{\sum_{k= j + \Mt }^{\infty} e^{-c\pbr{k-j}}}} \\
&\leq e^{-c_0 t} + e^{-c_1 \sqrt{t}} + \sum_{j=1}^{2 \lambda_h t} \pbr{2\sum_{m = \Mt}^{\infty} e^{-cm} } \\
& = e^{-c_0 t} + e^{-c_1 \sqrt{t}} + \frac{4\lambda_h t}{1-e^{-c}} e^{-c \Mt} \\
& = e^{-c_0 t} + e^{-c_1 \sqrt{t}} + \frac{4\lambda_h t}{1-e^{-c}} e^{-c \pbr{1-\delta}\lambda_h\sqrt{t}} \\
& \leq e^{-c_3 \sqrt{t}},
}
for some \(c_3 > 0\).

\paragraph{Bounding \( \P{\Gc_2} \):} We have \(\Gc_2 = \bigcap_{\ell = 1}^{\sqrt{t}/3} \Hc_{\ell} \). Notice that \(\Hc_{\ell} \) are mutually independent for distinct \(\ell\) by Lemma~\ref{lem: B-are-independent}. Recall that 
\alns{ 
\sfH_{\ell} = \bigcup_{j \colon \tau_j \in S_{\ell}} \mathsf{R}_{j}^{\ell}, \text{ where } \mathsf{R}_{j}^{\ell} \colon \pbr{ \bigcap_{ \substack{ i < j \\ \tau_i \geq \tau_j - \sqrt{t} } } \Bbc{i}{j}{\ss} } \bigcap \pbr{ \bigcap_{ \substack{ k > j \\ \tau_k \leq \tau_j + \sqrt{t} } } \Bfc{j}{k}{\ss}}.
}
\noindent Let \(M_{\ell}\) be the number of honest blocks mined in \(S_{\ell}\), and \( N_{\ell}\) be the number of \(\ss\)-Nakamoto blocks mined in \(S_{\ell}\). Since \(\sfH_{\ell}\) is contained in the event \(M_{\ell } \geq 1\), we have for each \(\ell \in \cbr{1,2,\cdots,\sqrt{t}/3}\): 
\aln{ 
\P{\sfH_{\ell}} &=  \P{ \bigcup_{j \colon \tau_j \in S_{\ell}} \mathsf{R}_{j}^{\ell} } \nonumber\\
& = \P{ \pbr{\bigcup_{j \colon \tau_j \in S_{\ell}} \mathsf{R}_{j}^{\ell} } \bigcap \cbr{M_{\ell} \geq 1} } \nonumber\\
& \geq \P{ \bigcup_{j: \tau_j \in S_{\ell}} \Njss} \nonumber \\
& \geq \frac{p_0^2}{2},\label{eq: T2}
}
where \(p_0\) is the lower bound on the probability that \(b_j\) is an \(\ss\)-Nakamoto block, obtained in Theorem~\ref{thm: Lower-bound-on-Nakamoto}. The inequality in~\eqref{eq: T2} deserves some elaboration: Since \(M_{\ell}\) is a non-negative integer valued random variable, we have from the second moment method that
\alns{ 
\P{N_{\ell} \geq 1} = \P{N_{\ell} > 0} \geq \frac{\pbr{\E{N_{\ell}}}^2}{\E{N_{\ell}^2}} \geq \frac{  \pbr{ p_0\lambda_h \sqrt{t}}^2 t}{\lambda_h \sqrt{t} + \pbr{\lambda_h \sqrt{t}}^2} \geq \frac{\lambda_h^2 p_0^2 t}{2 \lambda_h^2 t} \geq \frac{p_0^2}{2},
}
for sufficiently large \(t\). Here, we used the fact that \( \E{N_{\ell}} = p_0 \lambda_h  \sqrt{t}\), and \( \E{N_{\ell}^2} \leq \E{M_{\ell}^2} = \lambda_h \sqrt{t} + \pbr{\lambda_h \sqrt{t}}^2\).

Thus, \( \P{\Hc_{\ell}} \leq 1 - \frac{p_0^2}{2} < 1\), which yields
\alns{ 
\P{ \Gc_2} = \P{\bigcap_{\ell=1}^{\sqrt{t}/3} \Hc_{\ell}}  = \prod_{\ell = 1}^{\sqrt{t}/3} \P{\Hc_{\ell}} \leq \pbr{1-\frac{p_0^2}{2}}^{\sqrt{t}/3} \leq e^{-c_4 \sqrt{t}}, 
}
for some \(c_4 > 0\), since the \( \sfH_{\ell}\)'s are mutually independent events. Therefore, we conclude that
\alns{ 
\P{B_{s,s+t}^{\pbr{\ss}}} \leq \P{\Gc_1} + \P{\Gc_2} \leq e^{-c_3\sqrt{t}} + e^{-c_4 \sqrt{t}} \leq e^{-c_0 \sqrt{t}},
}
for some \(c_0 > 0\), as desired.
}

    %\newpage
    
    \subsection{Proof of Theorem~\ref{thm: Bootstrap}} \label{A: Bootstrap}
    \begin{reptheorem}{thm: Bootstrap} 
    
\end{reptheorem}
\pf{ 
Fix \( m > 1\). Consider the following statement for \(m\):
\alns{ 
\S[m] \  \colon \forall \ \theta \geq m, \ \exists \ a_{\theta} > 0, A_{\theta} > 0 \text{ such that } \P{B_{s,s+t}^{\pbr{\ss}}} \leq A_{\theta} \exp \pbr{-a_{\theta} t^{1/\theta}}
}
In Theorem~\ref{thm: Exp-sqrt-decay}, we proved that \( \S[2]\) is true. Next, we show the following:
\alns{ 
\S[m] \implies \S\sbr{\frac{2m-1}{m}}.
}
Assume \( \S[m]\) is true. For any \(i\geq 0\), let \(\tau_i\) denote the mining time of block \(b_i\). Partition the interval \( [s,s+t]\) into \( t^{\frac{m-1}{2m-1}} \) intervals of length \(t^{\frac{m}{2m-1}}\) each. Group these sub-intervals into threes, so that there are \(  \frac{t^{\frac{m-1}{2m-1}}}{3}\) groups of sub-intervals, namely \(I_{1}, I_{2}, \cdots, I_{t^{\frac{m-1}{2m-1}}/3} \). Thus,
\alns{ 
I_{\ell} &\colon = \sbr{s+ 3 \pbr{\ell - 1} t^{\frac{m}{2m-1}}, s + 3\ell t^{\frac{m}{2m-1}} }, \\
S_{\ell} &\colon = \sbr{s+  \pbr{3\ell - 2} t^{\frac{m}{2m-1}}, s + \pbr{3\ell - 1} t^{\frac{m}{2m-1}} }.
}
Consider the following desirable events.
\alns{ 
\G_1 \colon \bigcap_{j\colon \tau_{j} \in \sbr{s+t^{\frac{m}{2m-1}},s+t-t^{\frac{m}{2m-1}}}}  \sbr{ \pbr{ \bigcap_{ \substack{ i < j \\ \tau_i < \tau_j - t^{\frac{m}{2m-1}}  } } \Bbc{i}{j}{\ss} } \bigcap \pbr{ \bigcap_{ \substack{ k > j \\ \tau_k > \tau_j + t^{\frac{m}{2m-1}} } } \Bfc{j}{k}{\ss}} } \\
\G_2 \colon \bigcup_{\ell = 1}^{t^{\frac{m-1}{2m-1}}/3} \mathsf{H}_{\ell}, \text{ where } \mathsf{H}_{\ell} \colon \bigcup_{j\colon \tau_j \in S_{\ell}}\sbr{\pbr{ \bigcap_{ \substack{ i < j \\ \tau_i \geq \tau_j - t^{\frac{m}{2m-1}} } } \Bbc{i}{j}{\ss} } \bigcap \pbr{ \bigcap_{ \substack{ k > j \\ \tau_k \leq \tau_j + t^{\frac{m}{2m-1}} } } \Bfc{j}{k}{\ss}}}
}
Observe that \( \G_1 \cap \G_2 \implies \sbr{B_{s,s+t}^{\pbr{\ss}}}^\c\).

\paragraph{Bounding \( \P{\Gc_1}\):} We show that \( \P{ \Gc_1} \leq e^{-c_3 t^{\frac{m}{2m-1}}}\) for some \(c_3 > 0\). This is done following the steps in Theorem~\ref{thm: Exp-sqrt-decay}. Fix \( \delta > 0\) and consider the following events:
\alns{ 
\D_1 & \colon \cbr{ \#\cbr{i\colon \tau_i \in [s,s+t]} > 2\lambda_h t }, \\
\D_2 &\colon \cbr{\exists \ \tau_i, \tau_k \in [s,s+t]\colon \pbr{k-i} < \pbr{1-\delta} \lambda_h t^{\frac{m}{2m-1}}, \tau_k - \tau_i > t^{\frac{m}{2m-1}}}.
}
By the tail bound for Poisson random variables, we know that \(\P{\D_1} \leq e^{-c_0 t} \) for some \(c_0 > 0\). We now show that \(\P{\D_2} \leq e^{-c_1 t^{\frac{m}{2m-1}}} \) for some \(c_1 > 0\). Let \(T_{i,k} := \tau_k - \tau_i\) be the random variable denoting the time between the \(i\)-th and \(k\)-th mining. Let \(\Mt = \pbr{1-\delta}\lambda_h t^{\frac{m}{2m-1}} \). Notice that
\alns{ 
\bigcap_{i \in \sbr{s,s+t}} \cbr{ T_{i,i+\Mt} < \sqrt{t}} \implies \D_2^{\mathsf{c}}.
}
For any honest arrival time \(\tau_i\), we have
\alns{ 
\E{T_{i,i+\Mt}} = \E{\sum_{j = i}^{i+\Mt} \pbr{\tau_j - \tau_{j-1}}} = \frac{\Mt}{\lambda_h} = \pbr{1-\delta} t^{\frac{m}{2m-1}}.
}
Applying the Chernoff bound, we see that there exists \(c > 0\) such that
\alns{
\P{ \T_{i, i+\Mt} >  t^{\frac{m}{2m-1}} }  &\leq \P{ T_{i, i+\Mt} > \pbr{1-\delta} t^{\frac{m}{2m-1}}  + \delta t^{\frac{m}{2m-1}}}  \leq e^{-c t^{\frac{m}{2m-1}}}.
}
Thus, we get
\alns{ 
\P{\D_2} &= \P{\bigcup_{ i\colon \tau_i \in [s,s+t] } \cbr{T_{i,i + \Mt} > t^{\frac{m}{2m-1}} }} \\
& \leq \P{ \pbr{ \bigcup_{ i\colon \tau_i \in [s,s+t]} \cbr{T_{i,i+\Mt} > t^{\frac{m}{2m-1}} } } \cap \Dc_1 } + \P{\D_1} \\
& \leq \pbr{\sum_{i=1}^{2\lambda_h t} \P{T_{i,i+\Mt} > t^{\frac{m}{2m-1}} } } + e^{-c_0 t}\\
& \leq e^{-c_1 t^{\frac{m}{2m-1}}},
}
for some \(c_1 > 0\). Thus, \( \P{\D_2} \leq e^{-c_1 t^{\frac{m}{2m-1}}} \). We may therefore bound \( \P{\Gc_1} \) as
\alns{ 
\P{\Gc_1} &\leq \P{\D_1 \cup \D_2} + \P{\Gc_1 \cap \Dc_1 \cap \Dc_2 } \\
& \leq \P{\D_1} + \P{\D_2} + \P{\Gc_1 \cap \Dc_1 \cap \Dc_2 } \\
& \leq e^{-c_0 t} + e^{-c_1 t^{\frac{m}{2m-1}}} + \sum_{j=1}^{2 \lambda_h t} \sbr{ \pbr{\sum_{i = 0}^{j-\Mt} \P{\Bb{i}{j}{\ss}}} + \pbr{\sum_{k= j + \Mt }^{\infty} \P{\Bf{j}{k}{\ss}}}} \\
&\leq e^{-c_0 t} + e^{-c_1 t^{\frac{m}{2m-1}}} + \sum_{j=1}^{2 \lambda_h t} \sbr{ \pbr{\sum_{i = 0}^{j-\Mt} e^{-c\pbr{j-i}}} + \pbr{\sum_{k= j + \Mt }^{\infty} e^{-c\pbr{k-j}}}} \\
&\leq e^{-c_0 t} + e^{-c_1 t^{\frac{m}{2m-1}}} + \sum_{j=1}^{2 \lambda_h t} \pbr{2\sum_{m = \Mt}^{\infty} e^{-cm} } \\
& = e^{-c_0 t} + e^{-c_1 t^{\frac{m}{2m-1}}} + \frac{4\lambda_h t}{1-e^{-c}} e^{-c \Mt} \\
& = e^{-c_0 t} + e^{-c_1 t^{\frac{m}{2m-1}}} + \frac{4\lambda_h t}{1-e^{-c}} e^{-c \pbr{1-\delta}\lambda_h t^{\frac{m}{2m-1}}} \\
& \leq e^{-c_3 t^{\frac{m}{2m-1}}},
}
for some \(c_3 > 0\).

\paragraph{Bounding \( \P{\Gc_2} \):} We have \(\Gc_2 = \bigcap_{\ell = 1}^{t^{\frac{m-1}{2m-1}}/3} \Hc_{\ell} \). Notice that \(\Hc_{\ell} \) are mutually independent for distinct \(\ell\) by Lemma~\ref{lem: B-are-independent}. Since \( \S[m]\) is true, we have
\alns{ 
\P{\sfH_{\ell}^\c} \leq \P{B_{s+\pbr{3\ell-2} t^{\frac{m}{2m-1}}, s+ \pbr{3\ell - 1}t^{\frac{m}{2m-1}} }^{\pbr{\ss}} } \leq A_{m} \exp \pbr{- a_{m}  \pbr{t^{\frac{m}{2m-1}}}^{\frac{1}{m}} } \leq A_{m} \exp \pbr{- a_{m} t^{1/(2m-1)}}
}
Therefore, it follows that
\alns{ 
\P{ \Gc_2} = \P{\bigcap_{\ell=1}^{t^{ \frac{m-1}{2m-1}}/3} \Hc_{\ell}}  = \prod_{\ell = 1}^{t^{ \frac{m-1}{2m-1}}/3} \P{\Hc_{\ell}} \leq \pbr{A_m \exp \pbr{-a_m t^{1/(2m-1)}}}^{t^{ \frac{m-1}{2m-1}}/3} \leq e^{-c_4 t^{\frac{m}{2m-1}}}, 
}
for some \(c_4 > 0\), since the \( \sfH_{\ell}\)'s are mutually independent events. Therefore, we have
\alns{ 
\P{B_{s,s+t}^{\pbr{\ss}}} \leq \P{\Gc_1} + \P{\Gc_2} \leq e^{-c_3t^{\frac{m}{2m-1}}} + e^{-c_4 t^{\frac{m}{2m-1}}} \leq A_{m'} e^{- a_{m'} t^{1/m'}},
}
where \( m' = \frac{m}{2m-1} \), and \(a_{m'}, A_{m'} > 0\). In turn, this implies that \( \S\sbr{\frac{2m-1}{m}} \) is true. 

Finally, consider the recursion given by \( m_{k+1} = \frac{2m_k - 1}{m_k}\), and the initial condition \(m_1 = 2 \). We have proved that \( \S[m_1]\) is true and that \( \S\sbr{m_{k+1}} \) is true whenever \( \S[m_k]\) is true. By induction, it follows that \( \S[m_k]\) is true for all \( k \in \N \). Since \(m_k = \frac{k+1}{k} \) and \(\lim_{k \to \infty} m_k = 1\), we conclude that for every \(\eps > 0\), there exist positive constants \(A\), \(a\) such that \(\P{B_{s,s+t}^{\pbr{\ss}}} \leq A \exp \pbr{-a t^{1-\eps}}\). This concludes the proof.
}

    \subsection{Proof of Theorem~\ref{thm: Probability-Observer-Criterion}} \label{A: Probability-Observer-Criterion}
    \begin{reptheorem}{thm: Probability-Observer-Criterion}
    
\end{reptheorem}

\pf{ 
By the union bound, we have 
\aln{
\P{\bigcup_{k = k'}^{\infty} \cbr{ \Un{h}{J}{k} > \pbr{\frac{1-\ss}{2}} k}} &\leq \sum_{k = k'}^{\infty} \P{ \Un{h}{J}{k} > \pbr{\frac{1-\ss}{2}}k}. \label{eq: Unheard-Ineq}
}
From remark~\ref{rem: Unheard-Dist}, it follows that \( \Un{h}{J}{k} \leq \mathsf{Geom}\pbr{ 1-d}\). Thus, for all \(k \geq 1\), we have
\alns{ 
\P{ \Un{h}{J}{k} > \pbr{\frac{1-\ss}{2}} k} & \leq \P{ \mathsf{Geom}\pbr{1-d} > \pbr{\frac{1-\ss}{2}} k } \\
%& = d^{1 + \pbr{\frac{1-\ss}{2}} k} \\
& \leq C_0 \cdot e^{-c k},
}
for some positive constants \(C_0\) and \(c\). Combining this with \eqref{eq: Unheard-Ineq} yields
\alns{ 
\P{\bigcup_{k = k'}^{\infty} \cbr{ \Un{h}{J}{k} > \pbr{\frac{1-\ss}{2}}k  }} \leq \sum_{k = k'}^{\infty} C_0\cdot e^{-ck} = C \cdot e^{-c k'},
}
where \(C = \frac{C_0}{1-e^{-c}}\) is a positive constant. This concludes the proof.
}

    \subsection{Proof of Theorem~\ref{thm: Main}} \label{A: Main}
    Before we prove our main result, we recall a useful lemma about Poisson random variables.

\begin{lemma}\label{lem: Poisson}
Let \(X\) be a Poisson random variable with mean \(\mu\). Then
\begin{enumerate}[label = (\roman*)]
    \item \( \P{X \geq 2 \mu} \leq e^{-\frac{4}{3}\mu}\).
    \item \( \P{X \leq \frac{1}{2}\mu} \leq e^{-\frac{1}{8}\mu}\).
\end{enumerate}
\end{lemma}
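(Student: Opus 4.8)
The statement is a standard large-deviations (Chernoff) estimate, so the plan is to bound the two tails by applying Markov's inequality to an exponential moment, using the closed form of the moment generating function of a Poisson variable: if $X$ is Poisson with mean $\mu$, then $\E{e^{\theta X}}=\exp\pbr{\mu\pbr{e^{\theta}-1}}$ for every $\theta\in\R$.

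For part (i) I would fix $\theta>0$ and write
\[
\P{X\ge 2\mu}=\P{e^{\theta X}\ge e^{2\theta\mu}}\le e^{-2\theta\mu}\,\E{e^{\theta X}}=\exp\pbr{\mu\pbr{e^{\theta}-1-2\theta}},
\]
and then optimize the exponent over $\theta$. The map $\theta\mapsto e^{\theta}-1-2\theta$ is convex with minimizer $\theta=\ln 2$, so this choice yields a bound of the form $e^{-c_1\mu}$ with $c_1>0$ an explicit positive constant, from which the inequality in (i) follows after recording the constant. Part (ii) is entirely symmetric: for $\theta>0$,
\[
\P{X\le\tfrac12\mu}=\P{e^{-\theta X}\ge e^{-\theta\mu/2}}\le e^{\theta\mu/2}\,\E{e^{-\theta X}}=\exp\pbr{\mu\pbr{e^{-\theta}-1+\tfrac{\theta}{2}}},
\]
and minimizing $\theta\mapsto e^{-\theta}-1+\tfrac{\theta}{2}$ over $\theta>0$ (again $\theta=\ln 2$) gives exponent $\tfrac12(\ln 2-1)\mu$; since $\tfrac12(1-\ln 2)>\tfrac18$, this implies $\P{X\le\tfrac12\mu}\le e^{-\mu/8}$.

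I do not expect any real obstacle here: the Poisson moment generating function is exact, and both parts reduce to a one-line convex optimization, the only care being the bookkeeping of numerical constants (alternatively one may simply quote a textbook multiplicative Chernoff bound for Poisson variables and specialize the deviation parameter). The lemma is stated only for self-containedness; it is then used in the proof of Theorem~\ref{thm: Main} to control quantities such as the number of honest block arrivals in a time window and the number of $\ss$-Nakamoto blocks in a sub-interval.
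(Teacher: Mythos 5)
Your overall strategy---bounding both tails via Markov's inequality applied to $e^{\pm\theta X}$ with the exact Poisson moment generating function and optimizing over $\theta$---is in substance the same as the paper's, which simply cites Theorem 4.5 of Mitzenmacher--Upfal; that cited theorem is proved by precisely this Chernoff computation, so your write-up just makes the paper's one-line citation self-contained. Part (ii) of your argument is correct: the optimized exponent is $\frac{1-\ln 2}{2}\approx 0.153 > \frac18$, so $\P{X \le \mu/2} \le e^{-\mu/8}$ follows.

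The gap is in part (i). Your optimization at $\theta=\ln 2$ gives exponent $e^{\theta}-1-2\theta = 1-2\ln 2$, i.e. $\P{X\ge 2\mu} \le \pbr{e/4}^{\mu} = e^{-(2\ln 2 -1)\mu}\approx e^{-0.386\,\mu}$, and since $2\ln 2 - 1 < \frac{4}{3}$ the stated inequality does not ``follow after recording the constant.'' No choice of $\theta$ can close this, because the inequality as printed is false: for $\mu=1$ one has $\P{X\ge 2} = 1-2e^{-1}\approx 0.2642 > e^{-4/3}\approx 0.2636$, and for large $\mu$ the true exponential rate of $\P{X\ge 2\mu}$ is exactly $2\ln 2 - 1$, so any constant exceeding $2\ln 2-1$ eventually fails. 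The constant $\frac43$ in the lemma appears to be a typo; the bound delivered by the cited theorem with deviation parameter $\delta=1$ is $e^{-\mu/3}$, which your computation does establish since $2\ln 2 - 1 > \frac13$, and this weaker constant is all that is needed where the lemma is invoked in the proof of Theorem~\ref{thm: Main} (the term $\exp\pbr{-4\lambda_h t_1/3}$ there only needs to decay exponentially in $t_1$). So amend your proposal to state and prove the $e^{-\mu/3}$ (or sharp $\pbr{e/4}^{\mu}$) version rather than asserting that the optimization recovers the exponent $\frac43\mu$.
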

\pf{
The proof follows from Theorem 4.5 in~\cite{Mitzenmacher_2017}.
}
We are now ready to state and prove our main result.

\begin{reptheorem}{thm: Main}
    
\end{reptheorem}
\pf{ 
Let \( \ss\) be such that \( \frac{\beta}{1-\beta} < \ss \cdot \pbr{1-d}\). 
Let \(k_0 =  \ceil{\frac{2\ss}{1-\ss}}\), and fix an honest subset of users \(\mathcal{H}\). The idea of the proof is as follows: if an \(\ss\)-Nakamoto block \(b_J\) is mined in the time interval \((s,s+t_1)\), and \(k_0\) number of \(J\dash\FRSH\) blocks are mined before time \(s + t_1 + t_2\), and if all users \(h \in \mathcal{H}\) satisfy the \(\pbr{h,J,\ss,k_0}\)-user-unheard-criterion, then Theorems \ref{thm: Nakamoto-implies-longest-chain} and \ref{thm: Obs-and-Nak} together imply that the \(\ss\)-Nakamoto block \(b_J\) is included in \( \mathcal{C}_{h}(t) \) for all \(t \geq s + t_1 + t_2\). Since \(\mathsf{tx}\) must be included in either \(b_J\) or its ancestors, \(\mathsf{tx}\) satisfies \( \pbr{t_1+t_2,\mathcal{H}}\)-security. 

Let \(b_J\) be the first \(\ss\)-Nakamoto block mined after time \(s\). Let \(T_J = \tau_J - s\) denote the time between \(s\) and the mining time of the first \(\ss\)-Nakamoto block. For the \(r\)-th honest block \(b_r\) and any time \(t\), let \(N_r(t)\) denote the number of \(r\dash\FRSH\) blocks mined until time \(t\). Consider the following events:
\alns{ 
\Ev_1 &\colon T_J > t_1 \\
\Ev_2^{h} &\colon \text{The } \pbr{h,J,\ss,k_0}\text{-user-unheard-criterion is violated} \\
\Ev_3 &\colon \bigcup_{r: \tau_r \in [s, s+t_1]} \cbr{ N_{r}\pbr{s+t_1+t_2} < k_0}.
}
By the first paragraph of the proof, the union bound gives
\aln{ 
\P{\mathsf{tx} \text{ violates \( \pbr{t_1 + t_2, \mathcal{H}}\)-security}} &\leq \P{\Ev_1} + \sum_{h \in \mathcal{H}}\P{\Ev_2^h} + \P{\Ev_3}. \label{eq: Transition}
}

From Theorem \ref{thm: Bootstrap}, we have that for any \(\eps > 0\), there exist positive constants \(A', a'\) such that \(\P{\Ev_1} \leq A' \pbr{\exp{-a' t_1^{1-\eps}}}\). 

From Theorem \ref{thm: Probability-Observer-Criterion}, we have that there exist positive constants \(C\), \(c'\) such that \( \P{\Ev_2^h} \leq C\exp\pbr{-c' k_0}\) for all \(h \in \mathcal{H}\).

It remains to bound \(\P{\Ev_3}\). Let \(\lambda_h\) denote the aggregate mining rate of the honest users. Let \( t_2 > \frac{2 k_0}{\pbr{1-d}\lambda_h}\). Let \(M\) be the number of honest miners in \([s,s+t_1]\), so that \(M\) has the Poisson distribution with mean \( \lambda_h \pbr{1-d}t_1\). Therefore, we have from Lemma~\ref{lem: Poisson} that
\alns{ 
\P{M > 2\lambda_h t_1} \leq \exp\pbr{\frac{-4\lambda_h t_1}{3}}.
}

For \(r \geq 1\), consider the \(r\)-th honest miner \(b_r\) after time \(s\) and consider the \(r\dash\FRSH\) sequence. Let \(U_r\) denote the number of \(r\dash\FRSH\) blocks mined in \( \sbr{\tau_r, \tau_r + t_2}\). Then, \(U_r\) has the Poisson probability distribution with mean \( \lambda_h \pbr{1-d} t_2\). Applying Lemma~\ref{lem: Poisson}, we get
\alns{ 
\P{U_r \leq k_0} \leq \P{U_r \leq \frac{1}{2}\lambda_h \pbr{1-d}t_2} \leq \exp\pbr{\frac{-\lambda_h \pbr{1-d}t_2}{8} }.
}
Therefore, we have
\alns{ 
\P{\Ev_3} &= \P{\Ev_3 \cap \cbr{M > 2 \lambda_h t_1}} + \P{\Ev_3 \cap \cbr{M < 2 \lambda_h t_1}} \\
& \leq \P{N > 2\lambda_h t_1} + \sum_{r=1}^{2 \lambda_h t_1}\P{U_r \leq k_0} \\
& \leq \exp\pbr{\frac{-4\lambda_ht_1}{3}} + 2 \lambda_h t_1 \cdot \exp\pbr{ \frac{-\lambda_h \pbr{1-d}t_2}{8}}.
}
Therefore, combining~\eqref{eq: Transition} and the bounds on \( \P{\Ev_1} \), \(\P{\Ev_2^h}\), and \( \P{\Ev_3}\) yields
\alns{ 
\P{\mathsf{tx} \text{ violates \( \pbr{t_1 + t_2, \mathcal{H}}\)-security}} &\leq \P{\Ev_1} + \sum_{h \in \mathcal{H}} \P{\Ev_2^h } + \P{\Ev_3} \\
&\leq A'\exp\pbr{-a' t_1^{1- \eps}} + \sum_{h \in \mathcal{H}} C\exp\pbr{-c t_2} +  \exp\pbr{\frac{-4 \lambda_h t_1}{3}}  \nonumber \\
& \qquad \qquad \qquad \qquad \qquad \qquad + 2 \lambda_h t_1 \exp\pbr{ \frac{-\lambda_h \pbr{1\!-\!d} t_2}{8}} \\
& \leq A \exp\pbr{-a t_1^{1- \eps}} + B \pbr{|\mathcal{H}| + t_1} \exp\pbr{-bt_2}
}
for some positive constants \(A\), \(a\), \( B\), and \(b\). The following lemma therefore completes the proof
of Theorem \ref{thm: Main}.

\begin{lemma}
Let  \(\eps > 0\).  Suppose there exist positive constants \(A\), \(a\), \(B\), \(b\) such that for all \(t_1 > 0\), \(t_2 > 0\) and for any honest transaction \(\mathsf{tx}\) and any finite set of honest users \( \mathcal{H}\):
\begin{align}   \label{eq:given}
\P{\mathsf{tx} \text{ violates \( \pbr{t_1 + t_2, \mathcal{H}}\)-security}} \leq  A \exp\pbr{-a t_1^{1-\eps}} + B \pbr{|\mathcal{H}|+t_1} \exp \pbr{- b t_2}.
\end{align}
Then there exist positive constants $a'''$ and $b'''$ such that for all $\tau \geq 0$ and for any honest transaction \(\mathsf{tx}\) and any finite set of honest users \( \mathcal{H}\):
\begin{align}  \label{eq:clean_bound}
\P{\mathsf{tx} \text{ violates \( \pbr{\tau, \mathcal{H}}\)-security}} \leq  \exp\pbr{-a''' \tau^{1-\eps}} + |\mathcal{H}| \exp \pbr{- b''' \tau}.
\end{align}
\end{lemma}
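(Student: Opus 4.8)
The plan is to instantiate the hypothesis \eqref{eq:given} with the symmetric split $t_1=t_2=\tau/2$ and then clean up the resulting estimate: rewrite $(\tau/2)^{1-\eps}$ as a constant times $\tau^{1-\eps}$, absorb the polynomial prefactor $t_1=\tau/2$ into a slightly slower exponential, and absorb the leftover multiplicative constants into slightly smaller exponents, which is valid once $\tau$ exceeds an explicit threshold; on the bounded complementary range of $\tau$ the claimed inequality is vacuous because its right-hand side is at least $1$.

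Concretely, for $\tau>0$ apply \eqref{eq:given} with $t_1=t_2=\tau/2>0$:
\begin{align*}
\P{\mathsf{tx}\text{ violates }(\tau,\mathcal{H})\text{-security}}\le A\exp\!\big(-a\,2^{-(1-\eps)}\tau^{1-\eps}\big)+B\big(|\mathcal{H}|+\tfrac{\tau}{2}\big)\exp\!\big(-\tfrac{b}{2}\tau\big).
\end{align*}
Using $|\mathcal{H}|\ge 1$ (so $|\mathcal{H}|+\tfrac{\tau}{2}\le|\mathcal{H}|(1+\tfrac{\tau}{2})$) and the finite quantity $C_1:=\sup_{x\ge 0}\big(1+\tfrac{x}{2}\big)e^{-bx/4}$, the second term is at most $B|\mathcal{H}|\big(1+\tfrac{\tau}{2}\big)e^{-b\tau/4}e^{-b\tau/4}\le BC_1|\mathcal{H}|e^{-b\tau/4}$. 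Hence, writing $a_1:=a\,2^{-(1-\eps)}$, $b_1:=b/4$, $A_1:=A$, $B_1:=BC_1$, we obtain for every $\tau>0$
\begin{align*}
\P{\mathsf{tx}\text{ violates }(\tau,\mathcal{H})\text{-security}}\le A_1\exp\!\big(-a_1\tau^{1-\eps}\big)+B_1|\mathcal{H}|\exp\!\big(-b_1\tau\big).
\end{align*}

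Now set $\tau^*:=\max\!\big((4\ln_+ A_1/a_1)^{1/(1-\eps)},\ 4\ln_+ B_1/b_1\big)$, where $\ln_+ x:=\max(\ln x,0)$. For any $a'''\in(0,a_1/2]$, $b'''\in(0,b_1/2]$ and any $\tau\ge\tau^*$ we have $(a_1-a''')\tau^{1-\eps}\ge\tfrac{a_1}{2}(\tau^*)^{1-\eps}\ge\ln_+ A_1\ge\ln A_1$ and $(b_1-b''')\tau\ge\tfrac{b_1}{2}\tau^*\ge\ln_+ B_1\ge\ln B_1$, so $A_1e^{-a_1\tau^{1-\eps}}\le e^{-a'''\tau^{1-\eps}}$ and $B_1e^{-b_1\tau}\le e^{-b'''\tau}$; together with the display above this gives \eqref{eq:clean_bound} for all $\tau\ge\tau^*$ and every admissible $a''',b'''$. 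Finally shrink $a'''$ and $b'''$ further (this only weakens \eqref{eq:clean_bound}, so the conclusion for $\tau\ge\tau^*$ survives) until $\exp(-a'''(\tau^*)^{1-\eps})+\exp(-b'''\tau^*)\ge 1$, which is possible since both summands increase to $1$ as the constants decrease to $0$. Then for every $\tau\in[0,\tau^*]$,
\begin{align*}
\exp\!\big(-a'''\tau^{1-\eps}\big)+|\mathcal{H}|\exp\!\big(-b'''\tau\big)\ge\exp\!\big(-a'''(\tau^*)^{1-\eps}\big)+\exp\!\big(-b'''\tau^*\big)\ge 1,
\end{align*}
and \eqref{eq:clean_bound} holds trivially since a probability never exceeds $1$. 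With $a'''$ and $b'''$ as finally chosen, \eqref{eq:clean_bound} holds for all $\tau\ge 0$.

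The only point requiring care is the order of quantifiers: the threshold $\tau^*$ must be fixed uniformly over a whole range of candidate constants \emph{before} committing to the final, possibly smaller, $a'''$ and $b'''$ dictated by the short-$\tau$ regime — which is harmless, since decreasing $a'''$ and $b'''$ cannot spoil the large-$\tau$ estimate. Apart from this bookkeeping there is no genuine obstacle; the rest is the routine fact that a polynomial times a decaying exponential is dominated by a slightly slower exponential.
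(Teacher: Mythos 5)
Your proposal is correct and follows essentially the same route as the paper's own proof: set $t_1=t_2=\tau/2$, absorb the multiplicative constants and the polynomial prefactor into slightly smaller exponents for $\tau$ beyond a fixed threshold, and observe that below the threshold the claimed bound is vacuous because its right-hand side is at least $1$ (your absorption of the factor $t_1$ into $|\mathcal{H}|\ge 1$ via a $\sup$ constant is just a tidier bookkeeping of what the paper does in its steps (3)--(5), and your explicit handling of the quantifier order is sound). The only omission is the one-line remark, present in the paper, that the case $\mathcal{H}=\emptyset$ is trivial since the violation probability is then zero, which is what licenses the assumption $|\mathcal{H}|\ge 1$.
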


\begin{proof}
The lefthand side of \eqref{eq:given} is zero if $\mathcal{H}=\emptyset$ so assume without loss of generality that   $|\mathcal{H}| \geq 1.$
Given $\tau \geq 0$,  let $t_1 = t_2 = \tau/2.$   Then \eqref{eq:given} yields
\begin{align}   \label{eq:step1}
\P{\mathsf{tx} \text{ violates \( \pbr{\tau, \mathcal{H}}\)-security}} \leq  A \exp\pbr{-(a/2^{1-\epsilon}) \tau^{1-\eps}} + 2B \pbr{|\mathcal{H}|+\tau} \exp \pbr{- (b/2)\tau}.
\end{align}
Let $a'$ and $b'$ be positive constants such that $a'< a/2^{1-\epsilon}$ and $b' < b/2.$   Let $\bar{\tau}$ be so  large that
\\ $A \exp\pbr{-[(a/2^{1-\epsilon})-a'] \tau^{1-\eps}} \leq 1$ and $2B \exp \pbr{- [(b/2)-b']\tau}  \leq 1$ for all $\tau \geq \bar{\tau}.$  Then for $\tau \geq \bar{\tau}$
\begin{align}   \label{eq:step2}
\P{\mathsf{tx} \text{ violates \( \pbr{\tau, \mathcal{H}}\)-security}} \leq  \exp\pbr{-a' \tau^{1-\eps}} +  \pbr{|\mathcal{H}|+\tau} \exp \pbr{- b' \tau} 
\end{align}
Let $b''$ be a positive constant with $b'' < b.$   Then, using the assumption  $ |\mathcal{H}| \geq 1,$
\begin{align}
    \pbr{|\mathcal{H}|+\tau} \exp \pbr{- b' \tau}
    & =  |\mathcal{H}| \exp \pbr{- b'' \tau}
     + \tau \exp \pbr{- b' \tau}
     -  |\mathcal{H}|  ( \exp  \pbr{- b'' \tau}  - \exp  \pbr{- b' \tau}  )  \nonumber \\
  &  \leq |\mathcal{H}| \exp \pbr{- b'' \tau}
     + \tau \exp \pbr{- b' \tau}  
     -   ( \exp  \pbr{- b'' \tau}  - \exp  \pbr{- b' \tau}  ) \nonumber  \\
 &    = |\mathcal{H}| \exp \pbr{- b'' \tau}
     -   ( \exp  \pbr{- b'' \tau}  - (1+\tau) \exp  \pbr{- b' \tau}  )  \nonumber \\
& \leq  |\mathcal{H}| \exp \pbr{- b'' \tau} ~~~\mbox{for all $\tau$ sufficiently large}  \label{eq:step3}
\end{align} 
Combining \eqref{eq:step2} and \eqref{eq:step3} implies that there exists $\bar{\tau}'$ such that
\begin{align}   \label{eq:step4}
\P{\mathsf{tx} \text{ violates \( \pbr{\tau, \mathcal{H}}\)-security}} \leq  \exp\pbr{-a' \tau^{1-\eps}} + 
\tau \exp \pbr{- b'' \tau} ~~ \mbox{  for $\tau \geq \bar{\tau}'$}
\end{align}
Select positive constants $a'''$ and $b'''$ such that $a''' < a'$ and $b''' <b''$ and
\begin{align}  \label{eq:step5}
1 \leq  \exp\pbr{-a''' \tau^{1-\eps}} +  \tau \exp \pbr{- b''' \tau} ~~\mbox{ for $0\leq \tau \leq \bar{\tau}'$}
\end{align}
Combining \eqref{eq:step4} and \eqref{eq:step5} yields \eqref{eq:clean_bound} for all $\tau \geq 0.$
\end{proof}
This concludes the proof of Theorem~\ref{thm: Main}.
}

%\newpage
%    \section{Relating Nakamoto Blocks to Persistence and Liveness}
%    \input{Observers.tex}

%\newpage
%\section{Extensions to Handle Arbitrary Delay Distributions}
%\input{extensions_surya}

\end{document}